\newtheorem{lemma}{Lemma}
\newtheorem{theorem}[lemma]{Theorem}
\newtheorem{corollary}[lemma]{Corollary}
\newtheorem{definition}[lemma]{Definition}
\newtheorem{proposition}{Proposition}
\numberwithin{lemma}{section}
\newcommand{\F}{\mathbb{F}}
\newcommand{\Fl}{\mathcal{F}_{\X}(\ell)}
\newcommand{\IS}{I(\mathbb{S})}
\newcommand{\HM}{\mathbb{S}^{\ell}(\F_{q})}
\newcommand{\CH}{C^{\mathbb{S}}(\ell)}
\newcommand{\CA}{C^{\mathbb{A}}(\ell,2\ell)}
\newcommand{\X}{\mathbf{X}}
\newcommand{\Y}{\mathbf{Y}}
\title{AFFINE SYMPLECTIC GRASSMANN CODES}
\author{  \textbf{Fernando Pi\~nero Gonz\'alez} \\
  Department of Mathematics \\
  University of Puerto Rico in Ponce \\
  Ponce, PR. \\ 
  \texttt{fernando.pinero1@upr.edu} \\   \textbf{Doel Rivera Laboy}  \\
  Department of Mathematics\\
  Pontifical Catholic University of Puerto Rico\\
  Ponce, PR \\
  \texttt{driveralaboy@pucpr.edu} }
\begin{document}
\maketitle

\begin{abstract}

In this manuscript, we introduce a new class of linear codes, called affine symplectic Grassmann codes, and determine their parameters, automorphism group, minimum distance codewords, dual code and other key features. These linear codes are defined from an affine part of a polar symplectic Grassmannian. They combine polar symplectic Grassmann codes and affine Grassmann codes.

\end{abstract}


\section{Introduction}
Let $q$ be a prime power and $\F_q$ denote the finite field of $q$ elements. The Grassmannian, $\mathcal{G}_{\ell, m}$, is the collection of all subspaces of dimension $\ell$ of a vector space $V$ of length $m$. Without loss of generality, we consider $V=\F_q^m.$ The Grassmannian is a well studied object with a rich algebraic, geometric and combinatorial structure. 

It is well known that the Grassmannian may be embedded into the projective space $\mathbb{P}\left(\binom{m}{\ell}, \F_q\right)$ through the Pl\"ucker embedding, by mapping each vector space $W \in \mathcal{G}_{\ell, m}$ to a projective point $P_W$. The coordinates of the point $P_W$ are the minors of an $\ell \times m$ matrix $M_W$ whose rows span $W$. The Grassmann code is defined as the linear code generated by taking a representative for each point $p_W$ as the column vectors of a matrix. In \cite{Nogin}, Nogin studied the parameters of this linear code, denoted $C(\ell, m)$. 
Later, in \cite{BGT} Beelen, Ghorpade and H\o{}holdt introduced the linear code associated to one of the affine maps of the Pl\"ucker embedding of the Grassmannian. These codes are known as affine Grassmann codes, $C^\mathbb{A}(\ell,m)$ for $m \geq 2\ell$. Let $\delta = \ell (m-\ell)$, the authors proved the parameters of $C^\mathbb{A}(\ell,m)$ are:
$$[q^{\delta}, \binom{m}{\ell},  q^{\delta- \ell^2}\prod\limits_{i=0}^{\ell-1}q^\ell - q^i]$$
In that same article, they studied the automorphisms and counted the minimum weight codewords. Their dual codes, $C^\mathbb{A}(\ell,m)^\perp$ and related codes were studied in \cite{BGT2} and \cite{AfGrass}. Moreover, in \cite{AfGrass} the minimum weight codewords of the dual code were classified and counted.

Polar Grassmannians are special subvarieties of the Grassmannian. If $B: V\times V \rightarrow \F$ is a bilinear form, a polar Grassmannian is a subvariety of the Grassmannian where the subspaces $W \in \mathcal{G}_{\ell, m} $ satisfy the relation $B(x,y) = 0 \forall \ x,y \in W$. If the form $B$ is a quadratic form, the subvariety is known as an orthogonal Grassmannian. The subspaces satisfying $B(x,y) = 0$ are the nonsingular spaces. If the form $B$ is an alternant form, the subvariety is known as a symplectic Grassmannian and the subspaces satisfying $B(x,y) = 0$ are isotropic spaces. If $m = 2\ell$, then the symplectic Grassmannian is also known as the Lagrangian Grassmannian. If the form $B: V \times V \rightarrow \F$ is a sesquilinear form, the subvariety $\{W \in \mathcal{G}_{\ell, m} \ | \ B(v,w) = 0 \ \forall v,w \in W \}$  is known as a unitary Grassmannian or a Hermitian Grassmannian.The subspaces satisfying $B(x,y) = 0$ are isotropic spaces. In particular, we remark that the the symplectic forms $B(x,y)$ may be represented as the map $xHy^T$ where $H$ is a skew symmetric matrix. 

In \cite{Cardinali_2013}, Cardinali and Giuzzi introduced polar Grassmann codes. Polar Grassmann codes are linear codes defined from a projection of a Grassmann code (or in coding theory terms, a code puncturing) onto the isotropic spaces of the corresponding polar Grassmannian. 
In the case of $\ell =2$ Cardinali and Giuzzi determined the parameters of polar Grassmann codes under a symplectic form (polar symplectic Grassmannian)\cite{Symplectic}, under a orthogonal form (polar orthogonal Grassmannian)\cite{CARDINALI20161924}\cite{Cardinali_2018} and under a Hermitian form (polar Hermitian Grassmannian)\cite{Cardinali}. Furthermore in \cite{Symplectic} for $\ell = 3 $ and $m = 6$ Cardinali and Giuzzi provide the full weight enumerator of the corresponding symplectic Grassmann code.

Much is known about both affine Grassmann codes and polar Grassmann codes. However, in this work we study the linear code associated to an affine map of the polar symplectic Grassmannian. In particular we use the symplectic form defined as $$B(x,y) := \sum\limits_{i=1}^{\ell} x_{i}y_{2\ell+1-i} - \sum\limits_{i=\ell+1}^{2\ell} x_{i}y_{2\ell+1-i}.$$ Note that $B(x,y)$ is equal to $xAy^T$ where $$ A_{i,j} := 
  \begin{cases} 
       1 & \makebox { if } j = 2\ell+1-i \makebox { and } i\leq \ell \\
       -1 & \makebox { if } j = 2\ell+1-i \makebox { and } i> \ell \\
       0 & \makebox{ if } j \neq 2\ell+1-i
   \end{cases}.$$
   In particular there is a correspondence between a symmetric matrix $M$ and a totally isotropic subspace $W$ which is the rowspace of the $\ell \times 2\ell$ matrix $M_W = [M|J]$ where $J$ is a $(0,1)$ matrix whose nonzero elements are in the antidiagonal. As in the case of Affine Grassmann codes, the minors of the matrix $M$ are in correspondence to the $\ell$ minors of $M_W$.

To determine the parameters of affine symplectic Grassmann codes, we shall use elementary algebraic techniques. We also determine some automorphisms of the linear code and determine the minimum distance codewords of their dual code. We remark that when $\ell = 1$, the code is equivalent to the first order generalized Reed-Muller
code \cite{Delsarte1970OnGR} and therefore we work on the cases where $\ell \geq 2$.

\section{Defining affine symplectic Grassmann codes}
In this manuscript we are interested mainly in square matrices over $\F_{q}$. Matrices will be denoted by upper case letters such as $A, B, C, M, N$. Generic matrices will be denoted by $\X$ or $\Y$.

\begin{definition}
    Let $M$ be a square matrix. Let $I$ be a subset of rows of $M$ and $J$ be a subset of columns of $M$, such that $\#I = \#J$. The minor $det_{I,J}(M)$ is the determinant of the submatrix of $M$ obtained from the rows $I$ and columns $J$. 
\end{definition}
We remark that if $I = \{ i\}$ and $J = \{ j \}$ the $det_{I,J}(M)$ is the $(i,j)-th$ entry $M_{i,j}$.
Example:\\
Let $M = \begin{bmatrix}
    1&0&0\\0&1&0\\0&0&1
    \end{bmatrix}$, $I = \{1,2\}$ and $J = \{2,3\}$. Then $det_{I,J}(M) = \begin{vmatrix}
    0&0\\1&0
    \end{vmatrix} = 0$ 

Moreover, to keep notation simple when working with an $\ell$ minor, we denote the row and column sets as $\ell$ digit numbers. Under this shorthand notation, the previous example, $det_{I,J}(M) = det_{12,23}(M)$.
\begin{definition}
    A matrix over $\F_{q}$ is symmetric if $M  = M^T$.
\end{definition} If a matrix $M$ is symmetric then $M_{J,I} = M_{I,J} $ for all sets $I,J$ where $\# I = \# J$.

For $\ell \geq 1$, denote by $x_{i,j}$ a variable for $1 \leq i \leq j \leq \ell$. In this way we have defined only $\binom{\ell+1}{2}$ indeterminates. We define $\X = [X_{ij}]$ as the $\ell \times \ell$ symmetric matrix of $\binom{\ell+1}{2}$ indeterminates over $\F_{q}$ as follows  $$X_{i,j} = \begin{cases}
 x_{i,j}  & i\leq j \\
 x_{j,i} &  i>j\\
\end{cases}$$

\begin{definition}
    $\Delta_t(\ell)$ is the set of all $t\times t$ minors of the symmetric generic matrix $\X$. That is:
    $$\Delta_t(\ell) := \{ det_{I,J}(\X), I,J \subseteq [\ell], \#I = \#J = t  \} $$
\end{definition}
We remark that for $i\neq j$, $(\Delta_i(\ell) \bigcap \Delta_j(\ell)) = \emptyset$.

\begin{definition}
We define $\Fl_t$ as the subspace of $\mathbb{F}_{q}$--linear combinations of elements of $\Delta_t(\ell)$.

$$\Fl_t := \{ \sum\limits_{I,J \subseteq [\ell], \# I = \# J = t} f_{I,J} det_{I,J}(\X), f_{I,J} \in \F_{q}   \} $$
\end{definition}
When considering the appropriate generic matrix for both affine Grassmann codes and the affine Hermitian Grassmann codes, all minors $\Fl$ are linearly independent. However, this is not the case for the generic symmetric matrix case as many minors are linear combinations of other minors. In fact, $det_{1,2}(\X) = X_{1,2} = X_{2,1} = det_{2,1}(\X)$. However, $\Fl_t$ is spanned by special minors known as doset minors.

\begin{definition}
Let $A = \{a_1,a_2,..., a_n\}$ and $B = \{b_1, b_2, ..., b_n \}$ be the rowset and columnset respectively of the minor $det_{A,B}(\X)$. The minor $det_{A,B}(\X)$ is a doset minor if and only if:
$$\forall i\in [n], a_i\leq b_i.$$
\end{definition}

\begin{definition}
$\Delta(\ell)$ is the set of all minors of the matrix $\X$. That is:
    $$\Delta(\ell) := \bigcup_{t = 0}^\ell \Delta_t $$
\end{definition}

Example:\\
Let $\ell = 2$\\
  $$\X = \begin{bmatrix}
    X_{1,1}&X_{1,2}\\X_{1,2}&X_{2,2}
    \end{bmatrix},$$ \\
Then $\Delta(\ell) = \{ 1, X_{1,1}, X_{1,2}, X_{2,2}, X_{1,1}X_{2,2}-X_{1,2}^{2} \} $

\begin{definition}
We define $\Fl$ as the subspace of $\mathbb{F}_{q}$--linear combinations of elements of $\Delta(\ell)$.

$$\Fl := \{ \sum\limits_{I,J \subseteq [\ell], \# I = \# J} f_{I,J} det_{I,J}(\X), f_{I,J} \in \F_{q}   \} $$
\end{definition}

\begin{definition}
    $\mathbb{S}^{\ell}(\F_{q})$ denotes the set of all $\ell \times \ell$ symmetric matrices with entries in $\F_{q}$. That is $$\mathbb{S}^{\ell}(\F_{q}) = \{H \in \mathbb{M}^{\ell \times \ell}(\mathbb{F}_{q})| M = M^T \}.$$
\end{definition}
Now we shall define the evaluation of an element $f \in \Fl$ at any symmetric matrix $P \in \mathbb{S}^{\ell}(\F_{q})$. For any $f \in \mathbb{F}_{q}[X]$ and $P \in \mathbb{S}^{\ell}(\F_{q})$  the evaluation $f(P) $ is obtained by replacing the variable $x_{i,j}$ by the element $P_{i,j}$. From now on, we shall denote $n$ by $n = q^{\frac{\ell^2+\ell}{2}} = \# \mathbb{S}^{\ell}(\F_{q}) $. For the evaluation map we fix an arbitrary enumeration $P_1, P_2, ..., P_n$ of $\mathbb{S}^{\ell}(\F_{q})$.

\begin{definition}
    The evaluation map of $\mathbb{F}_{q}[\X]$ on $\mathbb{S}^{\ell}(\F_{q})$ is the map
    \begin{center}
        $Ev_{\mathbb{S}^{\ell}(\F_{q})}$ : $\mathbb{F}_{q}[\X] \rightarrow \F_{q}^{n}$ defined by $Ev_{\mathbb{S}^{\ell}(\F_{q})}(f) := (f(P_1),...,f(P_{n}))$.
    \end{center}
\end{definition}

We are now ready to define affine symplectic Grassmann codes.

\begin{definition}
    The affine symplectic Grassmann code $C^{\mathbb{H}}(\ell)$ is the image of $\Fl$ under the evaluation map $Ev$. That is
    
    $$ \CH := \{ Ev_{\mathbb{S}^{\ell}(\F_{q})} | f \in \Fl  \} $$
\end{definition}

\section{Calculating $\dim \CH$}
In this section we calculate the dimension of the affine symplectic Grassmann codes. First we begin by calculating the dimension of the space $\Fl$.

\begin{definition}
The Narayana numbers $N(n,k)$ are defined as follows:
$$N(n, k) = \frac{1}{n}\binom{n}{k}\binom{n}{k-1}$$
\end{definition}
\begin{definition}
The Catalan numbers $C(n)$ are defined as follows:
$$C(n) = \frac{1}{n}\binom{2n}{n} = \sum\limits_{i = 1} N(n,i)$$
\end{definition}

\begin{lemma}\cite[Lemma 2.5]{Shafiei2013ApolarityFD}
Let $\X$ be a generic $\ell \times \ell$ symmetric matrix. The dimension of $\Fl_t$ is equal to the number of $t \times t$ doset minors contained in an $\ell \times \ell$ matrix. That number is equal to the number of semi-standard fillings of a Young tableau of shape
$(t, t)$ with the numbers $\{1,2, ..., \ell \}$ which is also known as the Narayana number. Then 

$$\dim(\Fl_t) = N(\ell+1, t+1)$$
\end{lemma}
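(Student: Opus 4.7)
The plan is to establish the lemma in three stages: the doset minors span $\Fl_t$, they are linearly independent, and their cardinality equals the Narayana number $N(\ell+1,t+1)$.

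First, for spanning, I would show that every minor $det_{I,J}(\X)$ is an $\F_q$-linear combination of doset minors by a straightening argument that uses the identification $X_{i,j}=X_{j,i}$. If $I=\{a_1<\cdots<a_t\}$ and $J=\{b_1<\cdots<b_t\}$ and the doset condition fails at some index $k$ (so $a_k>b_k$), Laplace expansion combined with the symmetric--matrix analogue of the Pl\"ucker relations (due to De Concini and Procesi, and worked out explicitly in the Shafiei reference) expresses $det_{I,J}(\X)$ as a combination of minors $det_{I',J'}(\X)$ that are strictly smaller in a suitable partial order on index pairs (for instance, in the multiset of ``violations'' $\{k : a_k>b_k\}$ ordered lexicographically). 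Induction on this partial order collapses the spanning problem onto doset minors only.

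Second, for linear independence, I would invoke the standard monomial theorem for the coordinate ring of the variety of symmetric matrices. Fix a lexicographic term order on the generators $x_{i,j}$ with $i\le j$, and observe that the doset minor indexed by $(I,J)$ has leading monomial $\prod_{k=1}^{t} x_{a_k,b_k}$, which is a well-defined product of generators precisely because the doset condition $a_k\le b_k$ holds. Different doset pairs yield different multisets $\{(a_k,b_k)\}$, so their leading monomials are distinct, forcing linear independence. The subtle point is verifying that this candidate leading term actually survives after applying the symmetry relations $X_{i,j}=X_{j,i}$ term by term in the Leibniz expansion of the determinant.

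Finally, for the enumeration, a $t\times t$ doset minor corresponds bijectively to a pair $(I,J)$ with $I=\{a_1<\cdots<a_t\}$ and $J=\{b_1<\cdots<b_t\}$ subsets of $[\ell]$ satisfying $a_k\le b_k$; placing $I$ on top of $J$ in a $2\times t$ array produces exactly the semi-standard fillings of the Young diagram of shape $(t,t)$ with entries in $\{1,\dots,\ell\}$ appearing in the statement. A textbook Lindstr\"om--Gessel--Viennot computation then evaluates this count as $\binom{\ell}{t}^{2}-\binom{\ell}{t-1}\binom{\ell}{t+1}$, and a short algebraic manipulation rewrites it as $\tfrac{1}{\ell+1}\binom{\ell+1}{t+1}\binom{\ell+1}{t}=N(\ell+1,t+1)$. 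The chief obstacle in the whole argument is the straightening step, since the symmetric--matrix Pl\"ucker relations are strictly more delicate than their generic-matrix counterparts; however, as this is precisely the content of the Shafiei lemma being cited, that difficulty can be imported as a black box and the remaining effort reduces to the combinatorial count above.
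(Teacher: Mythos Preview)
The paper does not supply its own proof of this lemma: it is quoted verbatim from \cite[Lemma 2.5]{Shafiei2013ApolarityFD} and used as a black box, so there is nothing in the paper to compare your argument against. Your three-stage outline (straightening to doset minors, distinct leading monomials for independence, and the combinatorial count) is exactly the standard route taken in the cited reference, and in fact your independence step is precisely the argument the present paper reproduces a few lines later (Lemmas 3.3--3.5) when it shows that doset minors remain independent even after reduction modulo $\IS$. So your sketch is sound and aligned with what is actually being invoked.

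One small point to tighten: the bijection you describe between doset pairs $(I,J)$ and fillings of the $2\times t$ rectangle produces arrays that are \emph{strictly} increasing along rows and \emph{weakly} increasing down columns (since the doset condition is $a_k\le b_k$, not $a_k<b_k$). That is the transpose of the usual semi-standard convention, so strictly speaking you are counting SSYT of the conjugate shape $(2^t)$ with entries in $[\ell]$, not of shape $(t,t)$. The enumeration $\binom{\ell}{t}^2-\binom{\ell}{t-1}\binom{\ell}{t+1}=N(\ell+1,t+1)$ is unaffected, but the tableau description should be stated with that convention made explicit.
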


This implies that $dim(\Fl_t) = \binom{\ell+1}{t+1}\binom{\ell+1}{t}/(\ell+1)$. We extend the definition of $\Fl_t$ as follows.

\begin{proposition}\cite{Shafiei2013ApolarityFD}
$dim(\Fl) = C(\ell+1)$. Furthermore the doset minors of $\X$ are a basis of $\Fl$. 
\end{proposition}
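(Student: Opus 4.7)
The plan is to derive this proposition as a direct consequence of the previous lemma, by decomposing $\Fl$ as a direct sum over $t$ and summing the Narayana numbers to get a Catalan number.

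First I would observe that by definition $\Fl = \sum_{t=0}^{\ell} \Fl_t$, since $\Delta(\ell) = \bigcup_{t=0}^{\ell} \Delta_t(\ell)$ and $\Fl$ is spanned by $\Delta(\ell)$. The key structural step is to upgrade this sum to a direct sum $\Fl = \bigoplus_{t=0}^{\ell} \Fl_t$. This follows from a degree argument: every $t \times t$ minor of $\X$ is a homogeneous polynomial of degree $t$ in the indeterminates $x_{i,j}$, so $\Fl_t$ consists entirely of homogeneous polynomials of degree $t$. Homogeneous components of distinct degrees in $\F_q[\X]$ are linearly independent, so any identity $\sum_{t=0}^{\ell} f_t = 0$ with $f_t \in \Fl_t$ forces $f_t = 0$ for each $t$. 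This is the only nontrivial point, and it is essentially immediate.

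With the direct sum decomposition in hand, I would apply the previous lemma (Lemma on doset minors and Narayana numbers) to get
\[
\dim(\Fl) = \sum_{t=0}^{\ell} \dim(\Fl_t) = \sum_{t=0}^{\ell} N(\ell+1, t+1) = \sum_{i=1}^{\ell+1} N(\ell+1, i) = C(\ell+1),
\]
where the last equality is the stated definition/identity for the Catalan number as a sum of Narayana numbers. For the basis statement, each $\Fl_t$ has as a basis the set of doset minors of size $t$ by the previous lemma, and because the sum is direct, the disjoint union over $t$ of these bases is a basis for $\Fl$. Since this union is exactly the set of all doset minors of $\X$, the second assertion follows.

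The main obstacle, if any, is purely notational: one must be careful that the index $t=0$ case (the empty minor, namely the constant $1$) is included and corresponds to $N(\ell+1,1)=1$, so that the summation range matches the definition of $C(\ell+1)$. Beyond this bookkeeping, no new algebraic content is needed beyond the cited lemma and the homogeneity observation.
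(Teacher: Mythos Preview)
Your proposal is correct and follows essentially the same approach as the paper: decompose $\Fl$ as the direct sum $\bigoplus_{t=0}^{\ell}\Fl_t$, apply the previous lemma to get $\dim(\Fl_t)=N(\ell+1,t+1)$, and sum to obtain $C(\ell+1)$. In fact you supply more detail than the paper does, since the paper simply asserts the direct sum decomposition while you justify it via homogeneity, and you also make the basis argument explicit.
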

 \begin{proof}
Note that $\Fl$ is the direct sum of $\Fl_t$.
 $$dim(\Fl) =  \sum_{t = 0}^{\ell} dim(\Fl_t) = \sum_{t = 0}^{\ell} N(\ell+1,t+1) = C(\ell+1)$$
\end{proof}
 
For example, when $\ell = 2$ the vector space $\Fl$ is spanned by $\langle 1, X_{1,1}, X_{1,2}, X_{2,2}, X_{1,1}X_{2,2}-X_{1,2}^2\rangle$ and its elements are multivariate polynomials of the form
$$f_{\emptyset, \emptyset}+f_{1,1}X_{1,1}+ f_{1,2}X_{1,2}+ f_{2,2}X_{2,2}+ f_{12,12}(X_{1,1}X_{2,2}-X_{1,2}^{2})$$ where $f_{\emptyset, \emptyset},f_{1,1}, f_{1,2},f_{2,2}, f_{12,12}, \in \F_q$.




Recall that if $\X$ is any matrix, and the sets $I = \{i_1 < i_2 < \cdots < i_s\}$, and $J = \{j_1 < j_2 < \cdots < j_s\}$ are sets of rows and columns of $\X$ respectively, then $$det_{I,J}(\X) = \sum\limits_{\sigma \in S_{s}} (-1)^{sgn(\sigma)}\prod_{a = 1}^s X_{i_{\sigma(a)}, j_{\sigma(a)}}.$$ The terms of the minors are indexed by permutations $\sigma \in S_s$. The monomial $\prod_{a = 1}^s X_{i_{\sigma(a)}}$ is the monomial corresponding to $\sigma$. We shall denote the monomial corresponding to $\sigma $ by $M_{\sigma}$.


\begin{definition} 
Denote by $\IS$ the ideal generated by $X_{i,j}^q - X_{i,j}$ where $i \leq j$ and $X_{j,i}-X_{i,j}$ where $j < i$. That is $$\IS := \langle X_{i,j}^q - X_{i,j}, X_{i,j} - X_{j,i} \rangle.$$ 
\end{definition}
The generic symmetric matrix satisfies the equations $X_{j,i} = X_{i,j}$ but not the field equations $X_{i,j}^q-X_{i,j}$. The set of symmetric matrices $\mathbb{S}^{\ell}(\F_{q})$ is precisely the set of solutions to all polynomial equations generating $\IS$. In fact $\IS$ is the kernel of the evaluation map on $\mathbb{S}^{\ell}(\F_{q})$. Our aim is to prove that no nonzero function in $\Fl$ is in $\IS$. We use Gr\"obner bases and multivariate polynomial division.

\begin{definition}

We order the variables $X_{i,j}$ as follows: $$X_{i,j} \prec_0 X_{i',j'} \makebox { if and only if } i< i' \makebox{ or } i = i' \makebox{ and } j < j'.$$
\end{definition}

The following remark follows easily from the fact that all leading terms of the polynomials are pairwise distinct.

\begin{definition}
Let $X$ be a generic $\ell \times \ell$ matrix. Let $X_{i,j}$ denote its entries. We define the monomial order $\prec_{lex}$ as the  lexicographical order where the variables are ordered according to $\prec_0$.
\end{definition}
The order $\prec_{lex}$ implies that monomials are ordered first according to the degree of $X_{\ell, \ell}$ then the degree of $X_{\ell, \ell, -1}$ until we compare all degrees on each variable.  


\begin{lemma}
The polynomials $X_{i,j}^q - X_{i,j}$ where $i \leq j$ and $X_{j,i}-X_{i,j}$ where $j < i$ are a Gr\"obner basis for $\IS$ under $\prec_{lex}$. 

\end{lemma}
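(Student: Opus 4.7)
The plan is to verify the claim using Buchberger's criterion, specifically its first form: if the leading monomials of two polynomials in a generating set are coprime, then their $S$-polynomial reduces to zero modulo the set. Since this would handle every pair in our generating set, it suffices to identify all leading terms and observe pairwise coprimality.

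First I would unwind what the monomial order $\prec_{lex}$ does on the two families of generators. Under $\prec_0$ we have $X_{a,b} \prec_0 X_{a',b'}$ iff $(a,b)$ precedes $(a',b')$ lexicographically, so on the symmetry generator $X_{j,i} - X_{i,j}$ with $j < i$, the bigger variable is $X_{i,j}$ (its first index $i$ is larger). Hence the leading terms are
\[
\mathrm{LT}(X_{i,j}^q - X_{i,j}) = X_{i,j}^q \quad (i \leq j), \qquad \mathrm{LT}(X_{j,i} - X_{i,j}) = X_{i,j} \quad (j < i).
\]
In particular every leading term is a positive power of a single variable $X_{i,j}$, with $i \leq j$ for the field generators and $i > j$ for the symmetry generators.

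Next I would verify coprimality of the leading monomials across all pairs. The field generators use only variables $X_{i,j}$ with $i \leq j$, while the symmetry generators have leading term $X_{i,j}$ with $i > j$; these index sets are disjoint. Within the field family, the leading terms $X_{i,j}^q$ and $X_{i',j'}^q$ are powers of distinct variables whenever $(i,j) \ne (i',j')$. Within the symmetry family, distinct sub-diagonal positions $(i,j) \ne (i',j')$ with $i > j$ and $i' > j'$ give distinct variables as leading terms. In every case the two leading monomials are powers of different variables, hence are coprime.

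Finally, by Buchberger's first criterion (if $\gcd(\mathrm{LT}(f), \mathrm{LT}(g)) = 1$ then $S(f,g)$ reduces to $0$ modulo $\{f,g\}$), every $S$-polynomial among the listed generators reduces to zero. Therefore the generating set is a Gr\"obner basis for $\IS$ under $\prec_{lex}$. There is no genuine obstacle here; the only care needed is in correctly identifying the leading term of the symmetry relation $X_{j,i}-X_{i,j}$ and checking that the above-diagonal and below-diagonal index sets do not collide, which is exactly what the preceding remark about pairwise distinctness of leading terms is pointing to.
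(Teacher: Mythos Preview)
Your proposal is correct and follows essentially the same route as the paper: identify the leading monomials of the generators (a power of an on-or-above-diagonal variable for each field equation, and a single below-diagonal variable for each symmetry relation), observe that they are pairwise coprime, and invoke Buchberger's first criterion. The paper compresses this into one sentence and cites \cite[Theorem~2.9.4]{CLO2007}, while you spell out the coprimality check in detail, but the argument is the same.
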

\begin{proof} Note that $lt_{\prec_{lex}}(X_{i,j}^q - X_{i,j}) = X_{i,j}^q $ for $i \leq j$ and $lt_{\prec_{lex}}(X_{j,i}-X_{i,j}) = X_{j,i}$ for $j > i$. All leading terms of the ideal $\IS$ have monomials with disjoint variables. \cite[Theorem 2.9.4]{CLO2007} implies that the set is a Gr\"obner basis for $\IS$. \end{proof}

\begin{lemma}
Let $I = \{i_1 < i_2 < \cdots < i_s\}$ and  $J = \{j_1 < j_2 < \cdots < j_s\}$, where $s \leq \ell$ , $I,J\subseteq [\ell]$ and $i_a \leq j_a$. The leading term of $det_{I,J}(\X)$ under $\prec_{lex}$ is the monomial corresponding to the identity $\prod_{a = 1}^s X_{i_{a}, j_{a}}$.  
\end{lemma}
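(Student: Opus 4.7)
The plan is to argue directly from the Leibniz expansion $det_{I,J}(\X) = \sum_{\sigma \in S_s} \mathrm{sgn}(\sigma)\, M_\sigma$ with $M_\sigma = \prod_{a=1}^s X_{i_a, j_{\sigma(a)}}$. Working in the ambient polynomial ring $\F_q[X_{i,j} : 1 \leq i,j \leq \ell]$, where the symmetric identifications generating $\IS$ are not yet imposed, the $s$ factors of any $M_\sigma$ have pairwise distinct first indices $i_1, \ldots, i_s$, so $M_\sigma$ is squarefree; moreover two distinct permutations $\sigma \neq \tau$ yield distinct monomials because $\sigma(a)\neq \tau(a)$ for some $a$ forces the formal variables $X_{i_a, j_{\sigma(a)}}$ and $X_{i_a, j_{\tau(a)}}$ to differ. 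Therefore no cancellation occurs in the expansion, and the leading term of $det_{I,J}(\X)$ is precisely $M_{\sigma^*}$ for whichever $\sigma^*$ maximizes $M_\sigma$ under $\prec_{lex}$.

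The core of the proof is to show $\sigma^* = \mathrm{id}$. Fix $\sigma \neq \mathrm{id}$ and let $a^* = \max\{a : \sigma(a) \neq a\}$. For every $a > a^*$ the permutation $\sigma$ agrees with the identity, so $M_\sigma$ and $M_{\mathrm{id}}$ share the factor $X_{i_a, j_a}$. At position $a^*$ itself, $M_{\mathrm{id}}$ carries $X_{i_{a^*}, j_{a^*}}$ while $M_\sigma$ carries $X_{i_{a^*}, j_{\sigma(a^*)}}$; since $\sigma$ bijects $\{1,\ldots, a^*\}$ onto itself and does not fix $a^*$, we have $\sigma(a^*) < a^*$, whence $j_{\sigma(a^*)} < j_{a^*}$. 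All remaining factors of either monomial live at positions $a < a^*$ and thus have first index $i_a < i_{a^*}$; by the definition of $\prec_0$, every such variable is strictly smaller than $X_{i_{a^*}, j_{a^*}}$. Consequently $X_{i_{a^*}, j_{a^*}}$ is the largest variable (under $\prec_0$) appearing in the symmetric difference of the two squarefree monomials, and because it is present in $M_{\mathrm{id}}$ but absent from $M_\sigma$, we conclude $M_{\mathrm{id}} \succ_{lex} M_\sigma$.

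Since the identity is the unique maximizer and $\mathrm{sgn}(\mathrm{id}) = 1$, the leading term of $det_{I,J}(\X)$ equals $\prod_{a=1}^s X_{i_a, j_a}$. The only potentially delicate point is the possibility of collisions among the $M_\sigma$ once the symmetric identifications of $\IS$ are imposed; this is handled cleanly by performing the leading-term computation in the ambient $\ell^2$-variable polynomial ring, before any reduction. The doset hypothesis $i_a \leq j_a$ plays no direct role in this leading-monomial comparison, but it is what guarantees that each $X_{i_a, j_a}$ lies on or above the diagonal, so the resulting leading monomial $\prod_{a=1}^s X_{i_a, j_a}$ is already in normal form with respect to the Gr\"obner basis of the previous lemma --- this is what will be needed downstream to deduce the linear independence of doset minors modulo $\IS$.
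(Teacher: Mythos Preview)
Your proof is correct and follows essentially the same approach as the paper's: both identify the largest non-fixed index $a^*$ of a non-identity $\sigma$, observe that $\sigma(a^*)<a^*$ forces $j_{\sigma(a^*)}<j_{a^*}$, and conclude $M_\sigma \prec_{lex} M_{\mathrm{id}}$ since all lower-position factors carry a strictly smaller first index. Your explicit remark that the comparison is carried out in the ambient $\ell^2$-variable ring before imposing the symmetric relations, and that the doset hypothesis is not used here but only in the subsequent lemma, is a helpful clarification the paper leaves implicit.
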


\begin{proof}
Recall that $M_{id} =\prod_{a = 1}^s X_{i_{a}, j_{a}}$ denotes the term corresponding to the identity permutation and $M_\sigma = \prod_{a = 1}^s X_{i_{\sigma(a)}, j_{\sigma(a)}} $ denotes the term corresponding to $\sigma \in S_s$. We determine that $M_\sigma \prec_{lex} M_{id}$. Ties are broken according by checking the degree of the variable with the highest row index, and then according to the highest column index. Suppose that $\sigma$ is not the identity permutation. This implies there exists $r$ such that $\sigma(u) =u$ for $r\leq u \leq s$ and $ \sigma(r-1) \neq r-1$. Note that because $\sigma(u) =u$ for all $u> r-1$ it follows that $\sigma(r-1) < r-1$. When comparing the variables of $M_{id}$ with the ones in $M_{\sigma}$, both contain $X_{i_s,j_s}, X_{i_{s-1}, j_{s-1}}, \ldots, X_{i_r, j_r}$. But $X_{i_{r-1}, j_{r-1}}$ appears in $M_{id}$ whereas $X_{i_{r-1},j_{\sigma(r-1)}}$ where $\sigma(r-1) < r-1$ appears in $M_\sigma$. Therefore $M_\sigma \prec_{lex} M_{id}$. \end{proof}

In the next lemma we relate the leading term of $det_{I,J}$ with the leading term of its remainder.

\begin{lemma}
Let $det_{I,J}(\X)$ be a doset minor. Then $$lt_{\prec_{lex}}(det_{I,J}(\X)) = lt_{\prec_{lex}}(det_{I,J}(\X) \mod \IS)  $$

\end{lemma}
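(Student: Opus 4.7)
The plan is to identify the leading monomial of $det_{I,J}(\X)$ via the previous lemma as $M := \prod_{a=1}^{s} X_{i_a, j_a}$, and then to show that $M$ is not divisible by the leading term of any element of the Gr\"obner basis for $\IS$. Once this non-divisibility is established, the standard multivariate division algorithm sends $M$ directly to the remainder on its first step, and every subsequent term it processes is strictly $\prec_{lex}$-smaller than $M$. Hence $M$ survives as the leading monomial of $det_{I,J}(\X) \mod \IS$, which is exactly the equality to be proved.

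The verification of non-divisibility splits according to the two families of leading terms of the Gr\"obner basis. For the symmetry generators, whose leading terms are the variables $X_{p,q}$ with $p > q$, I would invoke the doset hypothesis: every factor $X_{i_a, j_a}$ of $M$ satisfies $i_a \leq j_a$, so no factor of $M$ is of the form $X_{p,q}$ with $p > q$. Consequently $M$ is not divisible by any symmetry leading term.

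For the field-equation generators $X_{i,j}^{q} - X_{i,j}$ with $i \leq j$, whose leading terms are $X_{i,j}^{q}$, I need every variable of $M$ to occur with exponent strictly less than $q$. Here I would use that the indices satisfy $i_1 < i_2 < \cdots < i_s$, so the ordered pairs $(i_a, j_a)$ are pairwise distinct already from their first coordinates; the factors $X_{i_a, j_a}$ are therefore distinct variables, each appearing in $M$ with exponent exactly $1 < q$. No $X_{i,j}^{q}$ can divide $M$.

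The main obstacle is essentially bookkeeping rather than a deep combinatorial difficulty: one must notice that the doset condition is precisely what blocks the symmetry reductions from acting on $M$, while the strict monotonicity of $I$ is precisely what blocks the field-equation reductions. After these two checks, the conclusion follows from the standard behavior of multivariate polynomial division with respect to a Gr\"obner basis, namely that an irreducible leading monomial passes unchanged into the remainder.
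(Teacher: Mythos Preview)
Your proposal is correct and follows essentially the same approach as the paper: both arguments identify the leading monomial $M = \prod_a X_{i_a,j_a}$, use the doset condition $i_a \le j_a$ to show that the symmetry generators do not touch $M$, and use the distinctness of the row indices $i_a$ to show each variable occurs to the first power so the field-equation generators do not touch $M$ either. The paper phrases the last step as ``substitutions only decrease the ordering,'' while you phrase it via the standard invariant of the division algorithm, but the content is the same.
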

\begin{proof}
Let $I = \{i_1 < i_2 < \cdots < i_s\}$ and  $J = \{j_1 < j_2 < \cdots < j_s\}$, where $s \leq \ell$. Since $det_{I,J}(\X)$ is a doset minor, this implies that $i_a \leq j_a$ for any $1 \leq a \leq s$. Computing $f \mod \IS$ is done by replacing any variables $X_{j,i}$ where $j > i$ by the transposed variable $X_{i,j}$ and then reducing modulo the field equations $X_{i,j}^q-X_{i,j}$ where $i \leq j$. Because $det_{I,J}$ is a doset minor, no variable of the leading term $lt_{\prec_lex}(det_{I,J})(\X) = \prod_{a = 1}^s X_{i_{a}, j_{a}}$ is substituted by their transposed variables. Since the degree on each variable is $1$, there are no changes to the term $M_{id}$ when dividing by $X_{i,j}^q-X_{i,j}$. Since replacing the variable $X_{j,i}$ by $X_{i,j}$ will make the ordering of the variable decrease, the leading term is unchanged. Therefore $$lt_{\prec_{lex}}(det_{I,J}(\X)) = lt_{\prec_{lex}}(det_{I,J}(\X) \mod \IS) .$$

\end{proof}


\begin{lemma}\label{lem:ideal}
There is no linear combination of doset minors that can be written as an element in $\IS.$
 
 
\end{lemma}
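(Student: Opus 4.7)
The plan is to argue by contradiction using the Gröbner basis machinery already set up, reducing the claim to the statement that distinct doset minors have distinct leading monomials under $\prec_{lex}$.

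Suppose for contradiction that there is a nonzero linear combination $f = \sum_{(I,J)} c_{I,J}\, det_{I,J}(\X) \in \IS$ where the sum ranges over doset pairs $(I,J)$. Since the polynomials generating $\IS$ form a Gröbner basis under $\prec_{lex}$, we have $f \in \IS$ if and only if the normal form $\bar f := f \bmod \IS$ obtained by multivariate division against that basis is zero. So it suffices to show $\bar f \neq 0$.

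The first step is to prove that the map $(I,J) \mapsto \mathrm{lt}_{\prec_{lex}}(det_{I,J}(\X))$ is injective on doset pairs. From the earlier lemma on the leading term, $\mathrm{lt}_{\prec_{lex}}(det_{I,J}(\X)) = \prod_{a=1}^s X_{i_a, j_a}$ where $I = \{i_1 < \cdots < i_s\}$, $J = \{j_1 < \cdots < j_s\}$, and $i_a \leq j_a$. This monomial is squarefree (the $i_a$ are distinct) and determined as the product of the variables $X_{i_a,j_a}$, so the multiset of pairs $\{(i_a,j_a)\}$ can be read off from the monomial itself; the first coordinates recover $I$ as a set, and since both $I$ and $J$ are sorted and their entries paired in order, $J$ is recovered as well. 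Hence equal leading monomials force $(I,J)=(I',J')$.

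Next, combine this injectivity with the previous lemma, which states $\mathrm{lt}_{\prec_{lex}}(det_{I,J}(\X)) = \mathrm{lt}_{\prec_{lex}}(\overline{det_{I,J}(\X)})$ for every doset minor. Because reduction by a Gröbner basis is $\F_q$-linear, $\bar f = \sum_{(I,J)} c_{I,J}\, \overline{det_{I,J}(\X)}$. Pick the doset pair $(I^*,J^*)$ for which $c_{I^*,J^*}\neq 0$ and the leading term $m^* := \prod_a X_{i^*_a, j^*_a}$ is $\prec_{lex}$-maximal among the leading terms of the minors appearing with nonzero coefficient; by injectivity this pair is unique. Every other reduced summand $\overline{det_{I,J}(\X)}$ has \emph{all} monomials $\prec_{lex} m^*$ (since its leading term, the maximum of its monomials, is already strictly below $m^*$). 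Thus the coefficient of $m^*$ in $\bar f$ is exactly $c_{I^*,J^*}\neq 0$, so $\bar f \neq 0$, contradicting $f \in \IS$.

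The main obstacle is really just the injectivity claim for leading terms of doset minors, and the verification that no cancellation of the distinguished monomial $m^*$ can occur during reduction; both reduce to the shape constraint $i_a \leq j_a$ that defines the doset condition, which is precisely what prevents the leading variables from being transposed away when reducing modulo the generators $X_{j,i}-X_{i,j}$ of $\IS$.
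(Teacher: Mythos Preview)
Your proof is correct and follows essentially the same route as the paper: both use the Gr\"obner basis for $\IS$, the explicit form of the leading term of a doset minor, and the preservation of that leading term under reduction to conclude that a nonzero doset combination has nonzero normal form. You are more careful than the paper in two places---you actually justify the injectivity of $(I,J)\mapsto \mathrm{lt}_{\prec_{lex}}(det_{I,J}(\X))$ on doset pairs, and you spell out why the maximal leading monomial $m^*$ survives in $\bar f$---whereas the paper simply asserts ``all doset minors have different leading terms'' and that the leading term of $f$ equals that of $f \bmod \IS$.
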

\begin{proof}

 Let $f \in \Fl.$ Consider $f_S = f \mod \IS$, the reduction of $f$ modulo the ideal $\IS$. Since all doset minors have different leading terms under $\prec_{lex}$ and these leading terms are all distinct, it follows that if $f \neq 0$, then $f$ it has a nonzero leading term. Since $f \mod \IS$ has the same nonzero leading term as $f$ it means that $f \mod \IS$ has to be different from zero. This implies $f \not \in \IS$.
\end{proof}

\begin{lemma}\label{lem:inyective}
 The evaluation map $Ev_{\mathbb{S}^{\ell}(\F_{q})}: \Fl \rightarrow \F_q^n$ is injective.
\end{lemma}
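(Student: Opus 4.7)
The plan is to deduce injectivity directly from the structural results already established, namely that the doset minors form a basis of $\Fl$ and that the ideal $\IS$ is precisely the kernel of the evaluation map on the full polynomial ring $\F_q[\X]$.

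First I would recall that $\mathbb{S}^{\ell}(\F_q)$ is, by construction, the variety of zeros of the generators of $\IS$: the symmetry relations $X_{i,j}-X_{j,i}$ cut out the symmetric matrices inside $\mathbb{M}^{\ell\times\ell}(\F_q)$, and the field equations $X_{i,j}^q-X_{i,j}$ enforce that entries lie in $\F_q$. Since these two families generate $\IS$ and $\F_q$ is finite, the kernel of the evaluation map $Ev_{\mathbb{S}^{\ell}(\F_q)}:\F_q[\X]\to \F_q^n$ is exactly $\IS$ (this is the standard finite-field Nullstellensatz-style fact, and it is consistent with the earlier comment in the paper stating that $\IS$ is the kernel of evaluation on $\mathbb{S}^{\ell}(\F_q)$).

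Next I would take an arbitrary $f \in \Fl$ with $Ev_{\mathbb{S}^{\ell}(\F_q)}(f)=0$ and aim to show $f=0$. By the proposition established earlier, the doset minors form an $\F_q$-basis of $\Fl$, so I may write $f$ uniquely as an $\F_q$-linear combination of doset minors. On the other hand, $Ev_{\mathbb{S}^{\ell}(\F_q)}(f)=0$ forces $f \in \IS$. But Lemma \ref{lem:ideal} asserts that no nonzero linear combination of doset minors lies in $\IS$. Combining these two facts yields $f=0$, establishing that $Ev_{\mathbb{S}^{\ell}(\F_q)}$ is injective on $\Fl$.

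There is essentially no obstacle here: all the work has already been done in the preceding lemmas, where the leading-term analysis under $\prec_{lex}$ showed that the reduction modulo $\IS$ of any nontrivial doset-minor combination retains its top monomial and therefore cannot vanish. The proof of the lemma is thus a short assembly step invoking the basis property of doset minors together with Lemma \ref{lem:ideal}.
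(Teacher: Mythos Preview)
Your proposal is correct and follows essentially the same approach as the paper: both arguments identify $\IS$ as the kernel of the evaluation map on $\F_q[\X]$, then invoke Lemma~\ref{lem:ideal} to conclude that $\Fl \cap \IS = \{0\}$, hence injectivity. Your version is slightly more explicit in noting that every element of $\Fl$ is a linear combination of doset minors before applying Lemma~\ref{lem:ideal}, but this is just spelling out what the paper's proof takes for granted.
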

\begin{proof}
Note that the ideal $\IS$ is precisely the ideal of polynomial functions which vanish on $\HM$. That is $ev(f) = 0$ if and only if $f \in \IS.$ As Lemma \ref{lem:ideal} implies there is no nonzero element in both $\Fl$ and $\IS$, we then have that $Ker(ev) = \Fl \cap \IS = \{0\}.$ Thus the evaluation map is injective. 

\end{proof}

From our discussion of the injectivity of the map $Ev$ we have the following corollary.

\begin{corollary}
The dimension of the affine symplectic Grassmann code is $dim(\CH)  = C(\ell).$

\end{corollary}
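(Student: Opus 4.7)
The plan is to obtain this as a one-line consequence of the two ingredients already in hand: the computation of $\dim(\Fl)$ in terms of Catalan numbers, and the injectivity of the evaluation map.

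First, I would recall that by definition $\CH = Ev_{\mathbb{S}^{\ell}(\F_q)}(\Fl)$, so that $\CH$ is the image of the $\F_q$--linear map $Ev_{\mathbb{S}^{\ell}(\F_q)}\colon \Fl \to \F_q^n$. By the rank-nullity theorem,
\[
\dim(\CH) \;=\; \dim(\Fl) \;-\; \dim\bigl(\ker Ev_{\mathbb{S}^{\ell}(\F_q)}|_{\Fl}\bigr).
\]

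Next, I would invoke Lemma~\ref{lem:inyective}, which asserts that $Ev_{\mathbb{S}^{\ell}(\F_q)}$ restricted to $\Fl$ is injective, equivalently that $\Fl \cap \IS = \{0\}$. This forces the kernel term above to vanish, so $\dim(\CH) = \dim(\Fl)$.

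Finally, I would substitute the value of $\dim(\Fl)$ computed in the earlier Proposition, namely $\dim(\Fl) = \sum_{t=0}^{\ell} N(\ell+1, t+1) = C(\ell+1)$ (given by the sum of Narayana numbers through the doset-minor basis). This yields the claimed formula for $\dim(\CH)$. There is no genuine obstacle at this stage: the real content has been absorbed into Lemma~\ref{lem:ideal} (no nonzero doset-minor combination lies in $\IS$, proved via leading terms under $\prec_{lex}$ and the Gr\"obner basis for $\IS$) and the counting of doset minors by Narayana numbers; the corollary is simply the bookkeeping that combines them.
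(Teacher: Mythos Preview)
Your proof is correct and follows essentially the same approach as the paper: both use that $\CH$ is the image of the evaluation map, invoke the injectivity lemma to kill the kernel, and substitute the known value of $\dim(\Fl)$. Note that the paper's corollary statement contains a typo (it says $C(\ell)$ while the Proposition gives $\dim(\Fl)=C(\ell+1)$); your computation of $C(\ell+1)$ is the right value.
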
\begin{proof} From the definition of $\CH$ it follows that $\CH = Im(Ev_{\mathbb{S}^{\ell}(\F_{q})})$. Since $\dim Ker(Ev_{\mathbb{S}^{\ell}(\F_{q})}) = 0$ and $\dim \Fl = C(\ell)$ the corollary follows. \end{proof}

\section{Automorphisms of $\CH$}
In this section, we determine some automorphisms of $\Fl$ and $\CH$. Our aim is to use the automorphisms to impose certain conditions on $f \in \Fl$ without loss of generality.
\begin{definition}

    The affine Grassmann code $\CA$ is obtained by evaluating all linear combinations of minors onto all $\ell \times \ell$ matrices over $\F_q$.
    

\end{definition}

\begin{definition}
Let $C$ be a code of length $n$. We say that a permutation $\sigma \in S_n$ is an automorphism of $C$ if and only if $$(c_1, c_2, \ldots, c_n) \in C \makebox{ if and only if } (c_{\sigma(1)}, c_{\sigma(2)}, \ldots, c_{\sigma(n)}) \in C. $$

The group of such automorphisms is determined by $Aut(C).$
\end{definition}

We recall the automorphism group of $\CA$.

\begin{proposition}\cite[Lemma 7]{BGT2}

The automorphism group $Aut(\CA)$ contains the following permutations:
\begin{itemize}
    \item For $A \in GL_{\ell}(\F_q), \X \mapsto A\X$.
    \item For $B \in GL_{\ell}(\F_q), \X \mapsto \X B$.
    \item For $M \in \mathbb{M}^{\ell \times \ell}(\F_q), \X \mapsto \X + M$.
    \item $\X \mapsto \X^T$.
\end{itemize}

\end{proposition}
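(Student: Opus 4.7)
The plan is, for each of the four listed maps $\phi:\mathbb{M}^{\ell\times\ell}(\F_q)\to\mathbb{M}^{\ell\times\ell}(\F_q)$, to verify two things: (i) that $\phi$ is a bijection, so it induces a coordinate permutation $\sigma_{\phi}$ of the $n=q^{\ell^2}$ evaluation matrices; and (ii) that the pullback $f\mapsto f\circ\phi$ sends $\F_q$-linear combinations of minors of $\X$ back to $\F_q$-linear combinations of minors of $\X$. Granted these, for any codeword $c=Ev(f)$ of $\CA$ we would have $c_{\sigma_{\phi}(i)}=f(\phi(P_i))=(f\circ\phi)(P_i)$, so the permuted word $(c_{\sigma_{\phi}(1)},\ldots,c_{\sigma_{\phi}(n)})=Ev(f\circ\phi)$ is again a codeword, giving $\sigma_{\phi}\in Aut(\CA)$.

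Part (i) is immediate in every case: invertibility of $A,B\in GL_{\ell}(\F_q)$ makes left and right multiplication bijective, translation by $M$ is a bijection, and transposition is an involution. For part (ii) I would dispatch the four cases using standard determinantal identities. For $\X\mapsto A\X$ and $\X\mapsto \X B$, the Cauchy--Binet formula gives
$$det_{I,J}(A\X)=\sum_{\substack{K\subseteq[\ell]\\ \#K=\#I}}det_{I,K}(A)\,det_{K,J}(\X),$$
and an analogous formula for $\X B$, so each minor of the transformed matrix is a $\F_q$-linear combination of minors of $\X$ of the same order. For $\X\mapsto \X^T$, the identity $det_{I,J}(\X^T)=det_{J,I}(\X)$ sends each minor to another single minor. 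For $\X\mapsto \X+M$, I would expand $det_{I,J}(\X+M)$ column-by-column by multilinearity of the determinant; each of the $2^{\#J}$ resulting determinants splits, via a Laplace expansion along the columns coming from $M$, into a minor of $\X$ multiplied by a minor of $M$ (hence a constant of $\F_q$), so the whole expression lies in the span of minors of $\X$.

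The main thing requiring care is the translation case: unlike the other three, it mixes minors of different orders, producing minors of $\X$ of order strictly smaller than $\#I$ together with compensating minors of $M$, and bookkeeping the signs from the Laplace expansion is the only place where a nontrivial calculation is needed. The remaining three cases are immediate from Cauchy--Binet and the definition of transpose, and once (ii) holds in each case the automorphism statement follows from the formal argument above.
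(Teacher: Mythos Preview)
Your argument is correct. The paper itself does not supply a proof of this proposition; it is quoted verbatim as \cite[Lemma 7]{BGT2} and used as a black box to derive the automorphisms of $\CH$. Your strategy---checking that each map is a bijection of $\mathbb{M}^{\ell\times\ell}(\F_q)$ and that pullback preserves the span of minors, via Cauchy--Binet for left/right multiplication, the identity $det_{I,J}(\X^T)=det_{J,I}(\X)$ for transpose, and multilinearity plus Laplace expansion for translation---is exactly the standard route and is the argument one finds in the cited reference. There is nothing to compare against in the present paper, and your outline has no gaps.
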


As a consequence of the previous lemma, we have the following:

\begin{lemma}

The automorphism group $Aut(\CH)$ contains the group generated by the following permutations
\begin{itemize}
    \item For $A \in GL_{\ell}(\F_{q}), \X \mapsto A^{T}\X A$.
    \item For $M \in \HM, \X \mapsto \X + M$.
\end{itemize}

\end{lemma}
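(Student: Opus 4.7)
The plan is, for each of the two candidate maps $\phi$, to prove: (i) $\phi$ restricts to a bijection $\HM \to \HM$; and (ii) the pullback $f \mapsto f \circ \phi$ stabilises $\Fl$. Combined with Lemma \ref{lem:inyective}, these two facts produce a well-defined permutation $\sigma_\phi$ of coordinates sending the codeword $Ev_{\HM}(f)$ to $Ev_{\HM}(f \circ \phi) \in \CH$, hence $\sigma_\phi \in Aut(\CH)$.

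Step (i) is essentially a one-line check. If $\X = \X^T$ then $(A^T \X A)^T = A^T \X^T A = A^T \X A$, and invertibility of $A$ makes the map bijective on $\HM$ with inverse $\X \mapsto (A^{-1})^T \X A^{-1}$; for $M \in \HM$, translation by $M$ preserves symmetry and is obviously bijective.

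Step (ii) is where the work lies, and my plan is to lift to the affine (non-symmetric) setting rather than argue directly. The cited Proposition says that $\X \mapsto A'\X$, $\X \mapsto \X B'$, and $\X \mapsto \X + N$ are automorphisms of $\CA$ for any $A', B' \in GL_\ell(\F_q)$ and $N \in \mathbb{M}^{\ell \times \ell}(\F_q)$. Composing the first two with $A' = A^T$ and $B' = A$ realises $\X \mapsto A^T \X A$. Since $\CA$ has dimension equal to that of the span of all minors of a generic $\ell \times \ell$ matrix, its evaluation map is injective, so being an automorphism of $\CA$ is equivalent at the polynomial level to pulling minors back to linear combinations of minors. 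Concretely, Cauchy--Binet yields
$$\det_{I,J}(A^T \X A) = \sum_{K,L} \det_{I,K}(A^T)\, \det_{K,L}(\X)\, \det_{L,J}(A),$$
and the multilinear expansion of the determinant gives $\det_{I,J}(\X + M)$ as a linear combination of minors of $\X$ with coefficients that are (constant) minors of $M$. Substituting $X_{j,i} = X_{i,j}$ on both sides turns each side into a polynomial in the entries of a symmetric $\X$, expressed as an $\F_q$-linear combination of minors of a symmetric matrix; each such minor lies in $\Fl$ by the proposition that doset minors span $\Fl$. Since $\Fl$ is spanned by such minors, $f \circ \phi \in \Fl$ for every $f \in \Fl$.

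The main obstacle is the bookkeeping in step (ii): I must keep the generic-matrix and symmetric-matrix polynomial rings separate and verify that the substitution $X_{j,i} = X_{i,j}$ commutes with pullback by $\phi$. This commutation is automatic from $\phi(\HM) \subseteq \HM$ proved in (i) -- composing and then restricting to $\HM$ gives the same evaluation as restricting first and then composing -- but it is the only place where the descent from the automorphisms of $\CA$ to those of $\CH$ requires care.
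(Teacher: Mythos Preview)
Your proposal is correct and follows essentially the same approach as the paper: both arguments use the cited Proposition on $Aut(\CA)$ and the observation that the maps $\X \mapsto A^T\X A$ and $\X \mapsto \X+M$ preserve $\HM$. The only difference is packaging: the paper states in one line that $\CH$ is a puncturing of $\CA$ at the non-symmetric matrices, so any automorphism of $\CA$ that setwise fixes $\HM$ descends to $Aut(\CH)$, whereas you unpack this principle explicitly at the polynomial level via Cauchy--Binet and the substitution $X_{j,i}=X_{i,j}$.
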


\begin{proof}
Note that $\CH$ is obtained from $\CA$ by removing all matrices which are not symmetric. That is the code $\CH$ is a puncturing of the code $\CA$ at the matrices in $\mathbb{M}^{\ell \times \ell}(\F_{q}) \setminus \HM$ to obtain $\CH$. Therefore the permutations in $Aut(\CA)$ fixing the subset $\HM$ are permutations of $Aut(\CH)$.
\end{proof}


One of the most important parameters of a linear code is its minimum distance, which is defined as follows:

\begin{definition}
The \emph{Hamming distance} of the vectors $x = (X_{1,1}, X_{1,2}, \ldots, x_n)$ and $y = (Y_{1,3}, Y_{2,3}, \ldots, y_n)$ is the number of positions in which $x$ and $y$ differ. That is:

$$d(x,y) := \#\{ i \ | \ x_i \neq y_i  \} $$

\end{definition}
For example: If $x = (1101)$ and $y = (1000)$, then $d(x,y) = 2$ because they differ in the second and fourth position.

\begin{definition}
The \emph{weight} of $x = (X_{1,1}, X_{1,2}, \ldots, x_n)$ is the number of positions in which $x_i \neq 0$. That is:

$$w(x) := \#\{ i \ | \ x_i \neq 0  \} $$

\end{definition}

Note that the weight of a vector is the same as its distance to the zero vector. That is $w(x) = d(x,0)$. We now state the definition of the minimum distance of a code.

\begin{definition}

Let $C$ be a code. Then the \emph{minimum distance} of $C$ is the minimum number of positions in which any two distinct elements of $C$ differ.

$$d(C) = \min\limits_{x,y \in C} d(x,y). $$

\end{definition}




We remark that for a linear code, the minimum distance is equivalent to the smallest weight of a nonzero codeword. This particular definition will be the one used throughout the following sections and we set the following notation.

\begin{definition}
Let $f \in \Fl$ We define the weight of $f$ as $$wt(f) = \# \{ S \in \HM \ | \ f(S) \neq 0 \}. $$
\end{definition}

In the next two sections of the paper we work out a proof by induction for the minimum distance. We use polynomial evaluation and bounds from the fundamental theorem of Algebra to determine $d(\CH).$

\section{The action of $Aut(\CH)$ on $\Fl$}

To compute $d(\CH)$  we shall treat $f(\X) \in \Fl$ as a multivariate polynomial. We estimate the number of zeroes of $f(\X)$ using partial evaluations and bounds on the number of zeroes of certain linear and quadratic polynomials. It is much easier to first use the automorphism group of $\CH$ to find conditions on $supp(f)$ that hold without loss of generality.  

\begin{lemma}\label{lem:hyperboliczeroes}
 Let $a,b,\lambda \in \F_q$, where $\lambda\neq 0$. Then
 \begin{itemize}
     \item The equation $(T_1+a)(T_2+b) = 0$ has $2q-1$ solutions over $\F_q$.
     \item The equation $(T_1+a)(T_2+b) = \lambda $ has $q-1$ solutions over $\F_q$.
 \end{itemize}
\end{lemma}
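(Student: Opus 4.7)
The plan is to reduce each equation to a standard hyperbolic form by an affine change of variables and then count directly. Specifically, since the map $(T_1, T_2) \mapsto (U_1, U_2) := (T_1 + a, T_2 + b)$ is a bijection of $\F_q^2$, it preserves solution counts, so it suffices to count solutions $(U_1, U_2) \in \F_q^2$ to $U_1 U_2 = 0$ and $U_1 U_2 = \lambda$ respectively.

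For the first bullet, I will use that $\F_q$ is an integral domain, so $U_1 U_2 = 0$ forces $U_1 = 0$ or $U_2 = 0$. The set $\{U_1 = 0\}$ has $q$ solutions (one for each choice of $U_2$), likewise $\{U_2 = 0\}$ has $q$ solutions, and these two lines meet in the single point $(0,0)$. Inclusion--exclusion gives $q + q - 1 = 2q - 1$.

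For the second bullet, with $\lambda \neq 0$, neither $U_1$ nor $U_2$ can be zero, so $U_1$ ranges over $\F_q^*$ and $U_2 = \lambda U_1^{-1}$ is then uniquely determined. This immediately gives $|\F_q^*| = q - 1$ solutions.

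There is no real obstacle here; the entire argument is a routine exercise in counting points on degenerate versus nondegenerate affine conics over $\F_q$. The only thing worth emphasizing in the write-up is that the affine shift by $(a,b)$ is a bijection, so the two cases collapse to the well-known $q - 1$ points on a hyperbola $xy = \lambda$ and $2q - 1$ points on its degeneration $xy = 0$.
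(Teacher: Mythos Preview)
Your proof is correct and essentially the same as the paper's: both use inclusion--exclusion on the two lines for the $=0$ case and parametrize by the first coordinate for the $=\lambda$ case. The only cosmetic difference is that you first substitute $U_i = T_i + (\text{shift})$ while the paper works directly with $T_1+a$ and $T_2+b$, which amounts to the same counting.
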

\begin{proof}
We begin with the case $(T_1+a)(T_2+b) = 0$. In this case then either $(T_1+a)= 0$ or $(T_2+b) = 0$. If $T_1 = -a$, any of the $q$ values for $T_2$ is a solution to the equation. Similarly, for $T_2 = -b$, any of the $q$ values for $T_1$ is a solution to the equation. The solution $T_1 = -a, T_2 = -b$ is counted twice this implies we have $q+q-1 = 2q-1$ total solutions to the equation.

Now we consider $(T_1+a)(T_2+b) = \lambda \neq 0$. If $T_1 = -a$ then the equation becomes $(a-a)(T_2+b) = \lambda$ which has no solution.  If $T_1 = \alpha$  where $\alpha$ is any element of $\F_q$ except $\alpha = -a$, then for there is exactly one value of   $T_2$ (namely $T_2 = \frac{\lambda}{\alpha+a}-b$) such that the equation is satisfied. 

As for any of $q-1$ values $\alpha \neq -a$ for $T_1$, we find exactly one value of $T_2$ such that $(T_1+a)(T_2+b) = \lambda $ is satisfied, it is established that $(T_1+a)(T_2+b) = \lambda $ has $q-1$ solutions. \end{proof}

We shall also need the following lemma on the number of solutions to a particular system of polynomial equations over $\F_{q}$.

\begin{lemma}\label{lem:systemsols}
Let $a_1, a_2, \ldots, a_n \in \F_q$ and for $1 \leq i,j \leq n$ let $b_{i,j} \in \F_{q}$. Then the system of polynomial equations given by $$T_i^{2} = a_i, 1 \leq i \leq n $$
$$T_iT_j = b_{i,j} $$ has at most $2$ solutions.
\end{lemma}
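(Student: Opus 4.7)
The plan is to exploit the fact that the products $T_i T_j$ and squares $T_i^2$ are invariant under the sign flip $(T_1,\ldots,T_n) \mapsto (-T_1,\ldots,-T_n)$, so any solution occurs together with its negation. The real content is showing that once we pin down one coordinate, all the others are forced.

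First I would dispose of the degenerate case. Suppose every $a_i = 0$. Then the equation $T_i^2 = 0$ in $\F_q$ forces $T_i = 0$ for all $i$, giving at most one solution (which exists iff $b_{i,j} = 0$ for all $i,j$). This is well within the bound of $2$.

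Next I would handle the generic case. Suppose there exists some index $k$ with $a_k \neq 0$. The equation $T_k^2 = a_k$ has at most two solutions in $\F_q$, namely $T_k = \pm\alpha$ for some fixed $\alpha$ with $\alpha^2 = a_k$. Since $a_k \neq 0$, we have $\alpha \neq 0$, so $T_k$ is invertible in either case. Now for every $j \neq k$, the equation $T_k T_j = b_{k,j}$ uniquely determines $T_j = b_{k,j}/T_k$. Hence each of the two possible values of $T_k$ gives rise to at most one candidate tuple $(T_1, \ldots, T_n)$, so there are at most two solutions in total.

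Finally I would note that any additional constraints coming from the remaining equations $T_i^2 = a_i$ (for $i \neq k$) and $T_i T_j = b_{i,j}$ (for $i,j \neq k$) can only cut down the solution set further; they never produce new solutions. Thus the system has at most $2$ solutions, as claimed. The only step that needs any care is checking that the generic case covers everything once $a_k \neq 0$ for some $k$, and the trivial case $a_i = 0$ for all $i$ does not push us above the bound; neither step is really an obstacle.
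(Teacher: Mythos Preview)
Your proof is correct and follows essentially the same approach as the paper: split into the case where all $a_i=0$ (forcing the zero solution) and the case where some $a_k\neq 0$, then use that $T_k^2=a_k$ has at most two roots and each root determines every other $T_j$ via $T_j=b_{k,j}/T_k$. The paper's argument is identical up to notation.
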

\begin{proof}

If all $a_i$'s satisfy $a_i =  0,$ then the only possible solution is $T_i = 0.$ If there is some $a_s \neq 0$, then there are at most $2$ values for $T_s$ which satisfy $T_s^{2} = a_s$ In this case, for each solution to $T_s^{2} = a_s$, there is at most one value (namely $T_s = c_s$, $T_i = \frac{b_{i,s}}{c_s^q}$ which satisfies the equation $T_i T_s = b_{i,s}$. Therefore there are at most $2$ solutions to the system of equations. \end{proof}

Now we define the support of a combination of minors $f \in \Fl$ and the concept of a maximal "term". This concept will be akin to the degree in order to make the induction proof for the minimum distance.

\begin{definition}
Let $f \in \Fl$, where $$f = \sum\limits_{I,J \subseteq [\ell], \# I = \# J} f_{I,J} det_{I,J}(X).$$ The support of $f$ is defined as $$supp(f) :=\{ det_{I,J}(X) \ | \ f_{I,J} \neq 0  \}. $$

\end{definition}
Example: Let $\X = \begin{bmatrix}
    X_{1,1}&X_{1,2}\\X_{1,2}&X_{2,2}
\end{bmatrix}$. If $f = 1+X_{1,1}X_{2,2}-X_{1,2}^2$, then $supp(f) = \{ det_{\emptyset, \emptyset}(\X), det_{12,12}(\X)\}$
\begin{definition}
Let $f \in \Fl$, where $$f = \sum\limits_{I,J \subseteq [\ell], \# I = \# J} f_{I,J} det_{I,J}(\X).$$ We say a minor $det_{I,J}(\X) \in supp(f)$ is \emph{maximal} if and only if for any other minor $det_{I',J'}(\X) \in supp(f)$ we have that $I \not\subseteq I'$ or $J \not\subseteq J'$. That is the columns and rows of $det_{I,J}(\X)$ are not contained in the rows and columns of any other determinant in $supp(f)$.

\end{definition}

\begin{definition}
Let $I, J \subseteq [\ell]$. We define the \emph{spread} of the minor $det_{I,J}(\X)$ as the set $I \cup J$. 

\end{definition}

 Consider the matrix $\X = \begin{bmatrix}
    X_{1,1}&X_{1,2}& Y_{1,3}\\X_{1,2}&X_{2,2} & Y_{2,3}\\ Y_{1,3}&Y_{2,3}&Y_{3,3}
    \end{bmatrix}$ and the minor given by rows $\{1,2\}$ and columns $\{2,3\}$. That is the minor $f = det_{12,23}(\X) =  \begin{vmatrix}
    X_{1,2}& Y_{1,3}\\X_{2,2} & Y_{2,3}
    \end{vmatrix}$. The spread of the minor is  $I \cup J = \{1,2\} \cup \{2,3\} =  \{1,2,3\}$. The following lemma will prove that in several cases we can view the $\ell = 3$ case as several $\ell = 2$ cases.

In order to bound $wt(f)$ in an orderly manner, we determine conditions on $supp(f)$ which imply $wt(f)$ is too large. We begin with the following definition.

\begin{definition} We define $E_{a,b} = (e_{i,j})_{1 \leq i,j \leq \ell}$ as the following $(0,1)$--matrix where $e_{i,j}$ satisfies
$$ e_{i,j} := 
  \begin{cases} 
       1 & \makebox { if } i = a \makebox { and } b = j \\
       0 & \makebox{ otherwise } 
   \end{cases}.$$
\end{definition}

\begin{lemma}\label{lem:principalsubdeterminant}

Let $\X$ be a generic $\ell \times \ell$ symmetric matrix, let $a \in I$ and let $\gamma_{a,a} \in \F_q$.  Then $$det_{I,I}(\X + \gamma_{a,a}E_{a,a}) = det_{I ,I }(\X)  + \gamma_{a,a}det_{I \setminus \{a\}, I \setminus \{a\} }(\X)  $$
\end{lemma}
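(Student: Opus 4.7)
The plan is to reduce this to a standard multilinearity computation for determinants. First, I would observe that the matrix $E_{a,a}$ has a single nonzero entry in position $(a,a)$, so $\X + \gamma_{a,a}E_{a,a}$ agrees with $\X$ in every position except $(a,a)$, where the entry is increased by $\gamma_{a,a}$. When we pass to the $I \times I$ submatrix (which contains the row and column indexed by $a$, since $a \in I$ and we are taking a principal submatrix), the only altered entry is the one on the diagonal at position $(a,a)$.

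Next, I would apply multilinearity of the determinant in the $a$-th row. The $a$-th row of the altered submatrix is the $a$-th row of $\X$ (restricted to columns $I$) plus $\gamma_{a,a}$ times the standard basis vector $e_a$ (the indicator of column $a$ inside $I$). Multilinearity splits the determinant:
$$\det_{I,I}(\X + \gamma_{a,a}E_{a,a}) = \det_{I,I}(\X) + \gamma_{a,a}\det_{I,I}(\X'),$$
where $\X'$ is obtained from $\X$ by replacing its $a$-th row with $e_a$ (in the columns indexed by $I$).

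Then I would finish by Laplace expansion of $\det_{I,I}(\X')$ along the $a$-th row. Since that row has a unique nonzero entry $1$ in column $a$, exactly one cofactor survives. The cofactor sign is $(-1)^{p+p}=1$, where $p$ is the position of $a$ inside the ordered set $I$, because $a$ occupies the same position in both the row index list $I$ and the column index list $I$. The surviving cofactor is precisely the minor $\det_{I\setminus\{a\},I\setminus\{a\}}(\X)$, giving the claimed identity.

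There is no real obstacle here beyond tracking the sign in the cofactor expansion, and this sign is automatically $+1$ for a principal submatrix. The symmetry of $\X$ is irrelevant to the identity itself; the result is a purely multilinear fact about determinants, which is why the statement holds as written.
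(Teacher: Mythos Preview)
Your proof is correct and follows essentially the same approach as the paper, which simply says ``the statement follows from the expansion of the determinant of $\X$ along the $a$-th column.'' You use the $a$-th row instead of the $a$-th column and spell out the multilinearity step and the cofactor sign explicitly, but the underlying idea is identical.
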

\begin{proof} The statement follows from the expansion of the determinant of $\X$ along the $a$--th column.\end{proof}

\begin{lemma}\label{lem:nonprincipalsubdeterminant}

Let $\X$ be a generic $\ell \times \ell$ symmetric matrix, let $a,b \in I$ where $a<b$ and let $\gamma_{a,b} \in \F_q$.  Then $$det_{I,I}(\X + \gamma_{a,b}E_{a,b} + \gamma_{a,b}E_{b,a}) = det_{I,I}(\X ) +(-1)^{b+a}2\gamma_{a,b}det_{I\setminus \{b\},I\setminus \{a\}}(\X) +\gamma_{a,b}^2det_{I\setminus \{a,b\},I\setminus \{a,b\}}(\X) .$$
\end{lemma}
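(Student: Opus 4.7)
The plan is to reduce the identity to a routine multilinear expansion of the determinant together with the symmetry of $\X$. The key observation is that $\gamma_{a,b}(E_{a,b}+E_{b,a})$ is supported only on the two entries $(a,b)$ and $(b,a)$, so within the submatrix $\X_{I,I}$ (recall $a,b \in I$) only the columns indexed by $a$ and $b$ are affected, with $\gamma_{a,b}$ added at exactly one position of each.

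First I would expand by bilinearity in the two perturbed columns. Write column $a$ of $\X_{I,I}$ as $c_a + \gamma_{a,b}\, e_b$ and column $b$ as $c_b + \gamma_{a,b}\, e_a$, where $c_a, c_b$ are the unperturbed columns and $e_a, e_b$ are the standard basis vectors of $\F_q^{|I|}$ supported at the positions of $a$ and $b$ in $I$. The determinant splits into four pieces of total degree $0, 1, 1, 2$ in $\gamma_{a,b}$, which I would analyze separately.

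The degree-zero piece is $\det_{I,I}(\X)$. Each degree-one piece has exactly one of the two affected columns replaced by a standard basis vector, and I would collapse it via cofactor expansion along that column to a single $(|I|-1)$-minor of $\X$: one yields (up to a sign) $\det_{I \setminus \{b\}, I \setminus \{a\}}(\X)$ and the other yields $\det_{I \setminus \{a\}, I \setminus \{b\}}(\X)$. Invoking $\X = \X^T$ to identify these transposed minors, the two linear pieces combine into a single term with a factor of $2$, producing $(-1)^{a+b}\,2\gamma_{a,b}\,\det_{I \setminus \{b\}, I \setminus \{a\}}(\X)$. For the degree-two piece, both affected columns are standard basis vectors, so a generalized Laplace expansion along these two columns leaves only the contribution from the pair of rows $\{a,b\}$: this is a $2\times 2$ antidiagonal block times the complementary minor $\det_{I\setminus\{a,b\}, I\setminus\{a,b\}}(\X)$.

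The main obstacle is sign bookkeeping. One has to verify that the two linear contributions share the same sign (so they add to give the factor $2$ rather than cancel), which is precisely where the symmetry of $\X$ is doing the work, and that the sign assembled from the cofactor rule on rows/columns of $I$ matches the $(-1)^{a+b}$ in the statement; similarly, the sign of the quadratic term must be assembled from the generalized Laplace sign together with the $\pm 1$ determinant of the antidiagonal block. No deeper machinery is required, but a careful enumeration of positions of $a$ and $b$ within $I$ is necessary.
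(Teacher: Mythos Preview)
Your proposal is correct and follows essentially the same approach as the paper: both use multilinearity of the determinant to isolate the degree-$0$, $1$, and $2$ contributions in $\gamma_{a,b}$, collapse each via a cofactor expansion, and invoke $\X=\X^T$ to merge the two linear terms into the single factor of $2$. The only cosmetic difference is that the paper peels off the perturbations sequentially (linearity in column $b$ for $E_{a,b}$, then linearity in row $b$ for $E_{b,a}$), whereas you expand bilinearly in the two affected columns at once; the resulting four terms and the sign bookkeeping are identical.
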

\begin{proof} Expanding $det_{I,I}(\X + \gamma_{a,b}E_{a,b} + \gamma_{a,b}E_{b,a})$ along the cofactors in column $b$ we obtain $$det_{I,I}(\X + \gamma_{a,b}E_{a,b} + \gamma_{a,b}E_{b,a}) = det_{I,I}(\X + \gamma_{a,b}E_{b,a}) +(-1)^{a+b}\gamma_{a,b}det_{I\setminus \{a\},I\setminus \{b\}}(\X + \gamma_{a,b}E_{b,a}).$$

Now we expand $det_{I,I}(\X + \gamma_{a,b}E_{b,a})$ and $det_{I\setminus \{a\},I\setminus \{b\}}(\X + \gamma_{a,b}E_{b,a})$ along the row given by $b$ and obtain

$$det_{I,I}(\X +  \gamma_{a,b}E_{b,a}) = det_{I,I}(\X ) +(-1)^{b+a}\gamma_{a,b}det_{I\setminus \{b\},I\setminus \{a\}}(\X)$$ and
$$det_{I \setminus \{a\},I \setminus \{b\}}(\X + \gamma_{a,b}E_{b,a}) = det_{I \setminus \{a\} ,I\setminus \{b\} }(\X) +(-1)^{a+b}\gamma_{a,b}det_{I\setminus \{a,b\},I\setminus \{a,b\}}(\X).$$

Since the matrix $\X$ is symmetric, we know that $$det_{I \setminus \{a\} ,I\setminus \{b\} }(\X)  =d et_{I \setminus \{b\} ,I\setminus \{a\} }(\X)$$
Putting all the equations together we obtain:

$$det_{I,I}(\X + \gamma_{a,b}E_{a,b} + \gamma_{a,b}E_{b,a}) = det_{I,I}(\X ) +(-1)^{b+a}2\gamma_{a,b}det_{I\setminus \{b\},I\setminus \{a\}}(\X) +\gamma_{a,b}^2det_{I\setminus \{a,b\},I\setminus \{a,b\}}(\X) .$$

\end{proof}

Now we prove that when adding certain matrices the determinant function is unchanged.

\begin{lemma}\label{lem:principalsubdeterminant-nonsupport}

Let $\X$ be a generic $\ell \times \ell$ symmetric matrix, let $a \not\in I$ and let $\gamma_{a,a} \in \F_q$.  Then $$det_{I,I}(\X + \gamma_{a,a}E_{a,a}) = det_{I ,I }(\X)$$
\end{lemma}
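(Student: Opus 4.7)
The plan is to observe that the perturbation $\gamma_{a,a} E_{a,a}$ is supported only at the single entry $(a,a)$, and then argue that this entry is not visible to the minor $\det_{I,I}(\X)$ whenever $a \notin I$. Concretely, I would first recall that $\det_{I,I}(\X)$ depends only on those entries $\X_{i,j}$ with $i,j \in I$, since by definition this minor is computed from the submatrix obtained by restricting to rows in $I$ and columns in $I$.

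Next I would unpack the matrix $\X + \gamma_{a,a} E_{a,a}$ entrywise: its $(i,j)$ entry equals $\X_{i,j} + \gamma_{a,a}$ when $i = j = a$, and equals $\X_{i,j}$ otherwise. Now restricting to the rows and columns indexed by $I$, the hypothesis $a \notin I$ forces that for every $(i,j) \in I \times I$ we have $(i,j) \neq (a,a)$, and hence $(\X + \gamma_{a,a} E_{a,a})_{i,j} = \X_{i,j}$ throughout the submatrix. Consequently the submatrix of $\X + \gamma_{a,a} E_{a,a}$ indexed by $I \times I$ coincides entry by entry with the submatrix of $\X$ indexed by $I \times I$, so the two minors agree:
\[
\det_{I,I}(\X + \gamma_{a,a}E_{a,a}) = \det_{I,I}(\X).
\]

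There is no real obstacle here; the lemma is essentially a bookkeeping statement about which entries a principal minor depends on. The only thing to be slightly careful about is to note that, unlike the situation of Lemma \ref{lem:nonprincipalsubdeterminant}, no symmetrization is needed, because $E_{a,a}$ is already symmetric on its own, so the perturbation preserves symmetry of $\X$ and we do not need to add a transposed companion term. Thus the proof reduces to the one-line observation that $E_{a,a}$ has no entries in the rows or columns indexed by $I$.
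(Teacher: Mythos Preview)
Your proof is correct and is essentially identical to the paper's own proof, which simply notes that $\X$ and $\X+\gamma_{a,a}E_{a,a}$ have the same entries in the rows and columns indexed by $I$. Your write-up just makes this one-line observation more explicit.
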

\begin{proof} The statement follows from the fact that the matrices $\X$ and $\X+\gamma_{a,a}+E_{a,a}$ have the same entries in the set of rows and columns defined by $I$.\end{proof}

\begin{lemma}\label{lem:nonprincipalsubdeterminant-nonsupport}

Let $\X$ be a generic $\ell \times \ell$ symmetric matrix, let $a,b \not\in I$ where $a<b$ and let $\gamma_{a,b} \in \F_q$.  Then $$det_{I,I}(\X + \gamma_{a,b}E_{a,b} + \gamma_{a,b}E_{b,a}) = det_{I,I}(\X ).$$
\end{lemma}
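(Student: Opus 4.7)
The plan is to prove this by a direct entry-by-entry comparison of the submatrices, in exactly the same spirit as Lemma \ref{lem:principalsubdeterminant-nonsupport}. The only difference is that now the perturbation is an off-diagonal one, but because it still lives entirely outside the index set $I$, the submatrix extracted by rows $I$ and columns $I$ is unchanged.

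In more detail, I would first recall that the matrix $E_{a,b}$ has its unique nonzero entry in position $(a,b)$, and similarly $E_{b,a}$ has its unique nonzero entry in position $(b,a)$. Hence the matrix $\X + \gamma_{a,b}E_{a,b} + \gamma_{a,b}E_{b,a}$ differs from $\X$ only at positions $(a,b)$ and $(b,a)$. Next, since $a \notin I$ and $b \notin I$, neither of the indices $a$ nor $b$ appears as a row index in $I$ or as a column index in $I$. Therefore positions $(a,b)$ and $(b,a)$ do not belong to the set $I \times I$.

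Consequently the submatrices of $\X$ and of $\X + \gamma_{a,b}E_{a,b} + \gamma_{a,b}E_{b,a}$ obtained by selecting rows in $I$ and columns in $I$ are literally equal as $\#I \times \#I$ matrices, and their determinants agree. That gives the claimed identity.

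There is essentially no obstacle here; the statement is a record-keeping lemma whose proof is a one-line observation about where $E_{a,b}$ and $E_{b,a}$ are supported. The only thing to be careful about is to explicitly note that \emph{both} the row index set and the column index set of the minor are $I$, so that excluding $a,b$ from $I$ simultaneously rules out rows $a,b$ and columns $a,b$ from the submatrix. This is what makes the off-diagonal case work just as cleanly as the diagonal case in Lemma \ref{lem:principalsubdeterminant-nonsupport}.
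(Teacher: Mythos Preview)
Your proposal is correct and follows exactly the approach the paper uses for the analogous Lemma~\ref{lem:principalsubdeterminant-nonsupport}: the perturbed matrix agrees with $\X$ on all entries indexed by $I\times I$, so the $I,I$ submatrices coincide and their determinants are equal. In fact the paper states this lemma without proof, implicitly relying on precisely the observation you spell out.
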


We use Lemma \ref{lem:principalsubdeterminant}, Lemma \ref{lem:nonprincipalsubdeterminant}, Lemma \ref{lem:principalsubdeterminant-nonsupport} and Lemma \ref{lem:nonprincipalsubdeterminant-nonsupport} to impose certain conditions of $f \in \Fl$ without losing any generality on $wt(f)$. We shall use the fact that a symmetric matrix $S \in \HM$  permutes $\HM$ via the map $M \mapsto M + S.$ The matrix $S \in \HM$ also induces a permutation of $\Fl$ via the map $f(\X) \mapsto f(\X + S).$ The next lemma implies that under a certain translation, we may consider $f$ has no "terms" of second highest degree.

\begin{lemma}\label{lem:minorclear}
Let $f \in \Fl$ and $q$ be odd. Suppose $det_{I,I}$ is a maximal minor of $f$ and that $f_{I,I} = 1$. Let $S$ as the following symmetric matrix $$ s_{a,b} := 
  \begin{cases} 
       -f_{I\setminus \{a\}, I \setminus \{a\}} & \makebox { if } a = b  \\
              -(-1)^{a+b}\frac{1}{2}f_{I\setminus \{a\}, I \setminus \{b\}} & \makebox { if } a < b  \\
                     -(-1)^{a+b}\frac{1}{2}f_{I\setminus \{b\}, I \setminus \{a\}} & \makebox { if } a > b  
   \end{cases}.$$ 
   
   Then $g(\X) = f(\X + S_f)$  is an element of $\Fl$ with no $(\# I -1) \times (\# I -1)$ minors in its support whose rows and columns are contained in $I$.

\end{lemma}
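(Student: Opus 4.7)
The plan is to show that the prescribed translation $S$ cancels, minor by minor, the coefficients of all $(\#I-1)\times(\#I-1)$ minors of $\X$ whose rows and columns are contained in $I$. I first note that $g$ remains in $\Fl$ automatically: by the previous lemma on $Aut(\CH)$, the map $\X \mapsto \X + S$ induces a permutation of $\Fl$. Hence $g$ admits a unique expansion in the doset basis, and it suffices to check that the coefficient of every $\det_{I\setminus\{a\}, I\setminus\{b\}}(\X)$ with $a,b \in I$ vanishes in $g$.

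First I would identify, for a fixed target sub-minor $\det_{I\setminus\{a\}, I\setminus\{b\}}(\X)$, which minors $\det_{I_0, J_0}$ in $supp(f)$ can possibly contribute such a term after the substitution $\X \mapsto \X + S$. Because $S$ is zero outside $I \times I$, the multilinear (Laplace-type) expansion of $\det_{I_0, J_0}(\X + S)$ yields a term proportional to $\det_{I\setminus\{a\}, I\setminus\{b\}}(\X)$ only when the removed row indices $I_0 \setminus (I\setminus\{a\})$ and column indices $J_0 \setminus (I\setminus\{b\})$ both lie in $I$. Combined with the size constraint $|I_0| = |J_0|$ and the inclusions $I\setminus\{a\} \subseteq I_0$, $I\setminus\{b\} \subseteq J_0$, this forces $(I_0, J_0) \in \{(I\setminus\{a\}, I\setminus\{b\}),\, (I,I)\}$. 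So the coefficient of $\det_{I\setminus\{a\}, I\setminus\{b\}}(\X)$ in $g$ is just the sum of the original coefficient $f_{I\setminus\{a\}, I\setminus\{b\}}$ and a contribution from the maximal minor $\det_{I,I}$.

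Next I would compute the latter contribution using the elementary identities. For $a = b$, Lemma \ref{lem:principalsubdeterminant} gives coefficient $s_{a,a}$ for $\det_{I\setminus\{a\}, I\setminus\{a\}}(\X)$, so the prescribed $s_{a,a} = -f_{I\setminus\{a\}, I\setminus\{a\}}$ cancels the existing coefficient. For $a < b$, Lemma \ref{lem:nonprincipalsubdeterminant} applied to the symmetric piece $s_{a,b}(E_{a,b} + E_{b,a})$ contributes $(-1)^{a+b}\cdot 2\, s_{a,b}$ to the coefficient of $\det_{I\setminus\{b\}, I\setminus\{a\}}(\X)$, which by the symmetry of $\X$ is identified with $\det_{I\setminus\{a\}, I\setminus\{b\}}(\X)$ in the doset basis. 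Substituting $s_{a,b} = -(-1)^{a+b}\tfrac{1}{2} f_{I\setminus\{a\}, I\setminus\{b\}}$ then produces total coefficient zero.

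The main obstacle is justifying the multilinear expansion for a perturbation $S$ with many simultaneous nonzero entries, since Lemmas \ref{lem:principalsubdeterminant} and \ref{lem:nonprincipalsubdeterminant} each treat only a single elementary piece $\gamma E_{a,a}$ or $\gamma(E_{a,b}+E_{b,a})$. One can either establish the Laplace-type expansion directly from the multilinearity of the determinant, or add the elementary pieces iteratively while invoking Lemmas \ref{lem:principalsubdeterminant-nonsupport} and \ref{lem:nonprincipalsubdeterminant-nonsupport} to certify that additions at disjoint index pairs behave inertly on the already-cancelled terms. The hypothesis that $q$ is odd enters precisely through the factor $\tfrac{1}{2}$ in the off-diagonal entries of $S$.
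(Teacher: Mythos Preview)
Your proposal is correct and follows essentially the same route as the paper's proof: use maximality of $\det_{I,I}$ to see that only $\det_{I,I}$ and $\det_{I\setminus\{a\},I\setminus\{b\}}$ itself can contribute to the coefficient of $\det_{I\setminus\{a\},I\setminus\{b\}}(\X)$ in $f(\X+S)$, and then apply Lemmas~\ref{lem:principalsubdeterminant} and~\ref{lem:nonprincipalsubdeterminant} to see that the prescribed entries of $S$ force cancellation. You are in fact more careful than the paper on two points the paper glosses over: you explicitly justify $g\in\Fl$ via the translation automorphism, and you flag (and indicate how to resolve) the fact that those two lemmas treat one elementary symmetric perturbation at a time while $S$ has many nonzero entries simultaneously---the key observation being that any cross-terms in the multilinear expansion of $\det_{I,I}(\X+S)$ land in minors of size at most $\#I-2$ and hence do not disturb the $(\#I-1)$-level cancellation.
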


\begin{proof}
Let $f, S$ be as in the statement of the theorem. Consider $g(\X)  = f(\X+S)$. Since $det_{I,I}(\X)$ is a maximal minor of $f$ the only terms which may contain $det_{I \setminus\{a\}, I \setminus \{b\} }(\X)$ are the ones coming from $det_{I,I}(\X)$ and $det_{I \setminus\{a\}, I \setminus \{b\} }(\X)$. From Lemma \ref{lem:principalsubdeterminant} and Lemma \ref{lem:nonprincipalsubdeterminant} it follows that taking $s_{a,a} = -f_{a,a}$ and $s_{a,b} = s_{b,a} = (-1)^{a+b}\frac{1}{2}f_{a,b}$ will cause the $det_{I \setminus \{a\},  I \setminus \{b\}}$ terms of $f(\X+ S)$ cancel out. Thus $g(\X) = f(\X+S)$ has no terms in its support of size $(\# I-1) \times (\# I-1)$ contained in $I \times I$.\end{proof}




Now we establish certain relations between $\CH$ and $C^\mathbb{S}(\ell+1)$. These relations determine the minimum distance in the general case. First, we recall the following notation:
\begin{definition}
We denote the elementary matrix for row addition operations $L_{i,j}(m)$ as the corresponding matrix obtained by adding m times row $j$ to row $i$.
\end{definition}
Example:\\
Let $\ell = 3$, $L_{1,2}(\alpha) = \begin{bmatrix}
    1&\alpha & 0\\0& 1 & 0\\ 0&0&1
    \end{bmatrix}$\\

In particular, the weight of a function $Ev_{\HM}(f)$ for  $f\in \Fl$ depends on three aspects: the ambient matrix space $\HM$, the size of the maximal minor on $supp(f)$ and the spread of the maximal minor on $supp(f)$. We now introduce the following notation:
\begin{definition} Denote by 
$w_{\ell,k,s}$ the minimum weight of a function $Ev_{\HM}(f)$ such that the matrices are $n \times n$, with a maximal minor of size $k \times k$ and minimal spread $s$.
\end{definition}
 As in the case of affine Hermitian Grassmann codes $C^\mathbb{H}(\ell)$, we will find the minimum distance of the code $\CH$ by inducting on the different terms $w_{\ell,k,s}$. Now we generalize some behavior of the code using the new notation.

\begin{lemma}\label{lem: reduce to spread}
Let $f \in \Fl$. Suppose that $f$ has a maximal minor of size $k$ whose spread has size $=s$. Then $wt(f) \geq q^\frac{\ell^2+\ell-s^2-s}{2}w_{s,k,s}$. 
\end{lemma}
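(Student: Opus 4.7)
The strategy is to normalize the spread of the given maximal minor to $\{1,\ldots,s\}$ via an automorphism of $\CH$, partition $\HM$ into $q^{(\ell^2+\ell-s^2-s)/2}$ parallel fibers indexed by the block outside the upper-left $s\times s$ corner, and then invoke the inductive definition of $w_{s,k,s}$ fiberwise.

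\noindent\textbf{Setup and reduction.} For any permutation matrix $A\in GL_\ell(\F_q)$ the map $\X\mapsto A^T\X A$ is an automorphism of $\CH$ preserving $\HM$, the weight, and the inclusion relation on $supp(f)$. Choosing $A$ to correspond to a permutation sending $I\cup J$ to $\{1,\ldots,s\}$, I may assume $I,J\subseteq[s]$ and $I\cup J=[s]$. Decompose every $P\in\HM$ as
$$P=\begin{pmatrix} P_1 & P_2\\ P_2^T & P_3\end{pmatrix},\qquad P_1\in\mathbb{S}^s(\F_q),\ P_2\in M_{s\times(\ell-s)}(\F_q),\ P_3\in\mathbb{S}^{\ell-s}(\F_q).$$
The ``outside'' data $(P_2,P_3)$ takes $q^{s(\ell-s)+(\ell-s)(\ell-s+1)/2}=q^{(\ell^2+\ell-s^2-s)/2}$ values, partitioning $\HM$ into fibers each in bijection with $\mathbb{S}^s(\F_q)$.

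\noindent\textbf{Key claim and conclusion.} Define $g_{P_2,P_3}(\X_1):=f(\X_1,P_2,P_3)$. By iterated Laplace expansion, each restriction $\det_{I',J'}(\X)|_{\X_2=P_2,\X_3=P_3}$ becomes a linear combination of minors $\det_{T,U}(\X_1)$ with $T\subseteq I'\cap[s]$ and $U\subseteq J'\cap[s]$, so $g_{P_2,P_3}\in\mathcal{F}_{\X_1}(s)$. I claim $\det_{I,J}(\X_1)$ is a maximal minor of $g_{P_2,P_3}$ of size $k$ and spread $s$. Indeed, any contribution to $\det_{I,J}(\X_1)$, or to a doset minor $\det_{T,U}(\X_1)$ dominating $(I,J)$, from some $(I',J')\in supp(f)$ requires $I\subseteq T\subseteq I'$ and $J\subseteq U\subseteq J'$; maximality of $(I,J)$ in $supp(f)$ then forces $(I',J')=(I,J)$, whence $T=I$ and $U=J$. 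Hence $\det_{I,J}(\X_1)$ has coefficient $f_{I,J}\neq 0$ and no strictly larger doset minor appears. Applying $w_{s,k,s}$ fiberwise,
$$wt(f)=\sum_{(P_2,P_3)}wt(g_{P_2,P_3})\geq q^{(\ell^2+\ell-s^2-s)/2}\,w_{s,k,s}.$$

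\noindent\textbf{Main obstacle.} The most delicate step is the key claim: confirming that Laplace expansion followed by rewriting into the doset-minor basis of $\mathcal{F}_{\X_1}(s)$ does not introduce spurious doset minors dominating $(I,J)$ nor cancel its coefficient $f_{I,J}$. The main combinatorial input is that Garnir-type identifications among minors of a symmetric matrix preserve the spread $T\cup U$, so no non-doset minor arising in the expansion can, upon being rewritten, create or destroy a doset minor outside the range already permitted by the maximality of $(I,J)$.
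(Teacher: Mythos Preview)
Your proof is correct and follows essentially the same approach as the paper: specialize the entries of $\X$ lying outside the spread of the given maximal minor, observe that each of the $q^{(\ell^2+\ell-s^2-s)/2}$ specializations yields an element of $\mathcal{F}_{\X_1}(s)$ in which $\det_{I,J}$ survives as a maximal minor of size $k$ and spread $s$, and sum the fiberwise bound $w_{s,k,s}$. The paper's four-sentence proof omits your explicit permutation step and your careful discussion of the doset-basis subtlety, simply asserting that the specialization has ``the same maximal minors,'' but the underlying argument is identical.
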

\begin{proof}
    Let $f$ be as in the statement of the lemma. This implies there are $\ell-s$ rows and columns which do not appear in the spread of the maximal minor. Then for any of the $q^\frac{\ell^2+\ell-s^2-s}{2}$ values one can put on these columns, $f$ specializes to a combination of $s\times s$ determinants with the same maximal minors. Other nonmaximal minors may change due to the specialization. As each specialization has weight bounded by $w_{s,k,s}$ and there are $q^\frac{\ell^2+\ell-s^2-s}{2}$ specializations, the Lemma follows. 
\end{proof}

Note the following behavior in the $3\times 3$ case:\\
Let $\X = \begin{bmatrix}
    X_{1,1}&X_{1,2}& Y_{1,3}\\X_{1,2}&X_{2,2} & Y_{2,3}\\ Y_{1,3}&Y_{2,3}&Y_{3,3}
    \end{bmatrix}$ and $f = det_{12,23}(\X) =  \begin{vmatrix}
    X_{1,2}& Y_{1,3}\\X_{2,2} & Y_{2,3}
    \end{vmatrix}$\\
Note that $f \in \Fl$ is a minor of size $2$ and spread $3$. Also recall that the following map $\X \mapsto A^{T}\X A$ is an automorphism which preserves the weight of of a codeword, that is $wt(f(\X)) = wt(f(A^{T} \X A))$. \\
Then let $A = \begin{bmatrix}
    1&0&1\\0&1&0\\ 0&0&1
    \end{bmatrix}$\\
    $\begin{bmatrix}
    1&0&0\\0&1&0\\ 1&0&1
    \end{bmatrix} \begin{bmatrix}
    X_{1,1}&X_{1,2}& Y_{1,3}\\X_{1,2}&X_{2,2} & Y_{2,3}\\ Y_{1,3}&Y_{2,3}&Y_{3,3}
    \end{bmatrix}  \begin{bmatrix}
    1&0&1\\0&1&0\\ 0&0&1
    \end{bmatrix} = \begin{bmatrix}
    X_{1,1}&X_{1,2}& X_{1,1}+Y_{1,3}\\X_{1,2} &X_{2,2} & X_{1,2}+Y_{2,3}\\ X_{1,1}+Y_{1,3}& X_{1,2}+Y_{2,3}&X_{1,1}+Y_{1,3}+Y_{1,3}+Y_{3,3}
    \end{bmatrix}$\\
Then $f =  \begin{vmatrix}
    X_{1,2}& X_{1,1}+Y_{1,3}\\X_{2,2} & X_{1,2}+Y_{2,3}
    \end{vmatrix} = \begin{vmatrix}
    X_{1,2}& Y_{1,3}\\X_{2,2} & Y_{2,3}
    \end{vmatrix}+\begin{vmatrix}
    X_{1,2}& X_{1,1}\\X_{2,2} & X_{1,2}
    \end{vmatrix}=det_{12,23}(\X)-det_{12,12}(\X)$\\
Note $f(A^{T} \X A)$ is combination of minors in $\Fl$ containing a maximal minor of size $2$ and spread $3$. We may transform a function with a given maximal minor of size $m$ and spread $s$ to another function of the same weight with the maximal minor size $m$ and spread $s-1$. 
The matrix $A$  we used in the transformation may be obtained from elementary matrices. In this example we used $L_{1,3}(1)$. We split the proof in two cases: when $s = k+1$ and $s > k+1$. \begin{lemma}\label{lem:spread reduction 1}
Let $f \in \Fl$. Suppose that $f$ has a maximal minor of minimal spread, $\mathcal{M}$ of size $k$ and spread $=s=k+1$. Then there exists $g \in \Fl$ of the same weight with a maximal minor with size $k$ and spread size $k$.
\end{lemma}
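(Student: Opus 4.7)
The approach is to apply a congruence automorphism of $\HM$ that moves the problematic ``extra'' element of the spread off the column side. Since $|I|=|J|=k$ and $|I\cup J|=k+1$, we have $|I\cap J|=k-1$; set $C=I\cap J$ so that $I=C\cup\{a\}$ and $J=C\cup\{b\}$, where $a\in I\setminus J$ and $b\in J\setminus I$. The doset property of $\mathcal{M}$ forces $a<b$. I take $A=I_\ell+\alpha E_{a,b}$ for a suitable $\alpha\in\F_q\setminus\{0\}$ and define $g(\X):=f(A^T\X A)$. Since $A\in GL_\ell(\F_q)$ and the map $\X\mapsto A^T\X A$ preserves symmetry, the automorphism lemma yields $g\in\Fl$ and $wt(g)=wt(f)$.

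The next step is to identify a new minor of spread $k$ inside $supp(g)$. A direct computation shows that $A^T\X A$ agrees with $\X$ except that row $b$ picks up $\alpha$ times row $a$ and column $b$ picks up $\alpha$ times column $a$. Multilinearity of the determinant then gives
\[
det_{I,J}(A^T\X A) \;=\; det_{I,J}(\X) \;+\; \alpha\, det_{I,I}(\X),
\]
since $b\notin I$ while $J_{b\to a}=C\cup\{a\}=I$. The minor $det_{I,I}$ has size $k$ and spread $|I|=k$, which is exactly the shape I am after. Enumerating the (at most four) substitution patterns of the congruence and using the symmetry identification $det_{I,J}(\X)=det_{J,I}(\X)$ to discard the non-doset index pair $(J,I)$, only $det_{I,I}$, $\mathcal{M}=det_{I,J}$, and $det_{J,J}$ in $supp(f)$ can contribute to the $det_{I,I}$-coefficient of $g$ in the doset basis. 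This coefficient is therefore
\[
p(\alpha) \;=\; f_{I,I} \;+\; \alpha\, f_{I,J} \;+\; \alpha^2\, f_{J,J}.
\]
Because $\mathcal{M}\in supp(f)$, the linear term $f_{I,J}$ is nonzero, so $p$ has at most two zeros. When $q\ge 3$, we can pick $\alpha\in\F_q\setminus\{0\}$ with $p(\alpha)\neq 0$, placing $det_{I,I}$ into $supp(g)$.

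The main obstacle is verifying that $det_{I,I}$ is actually \emph{maximal} in $supp(g)$. The plan is a case analysis on how a hypothetical doset minor $det_{I',J'}\in supp(g)$ with $I\subseteq I'$, $I\subseteq J'$ and $|I'|>k$ could arise from the congruence. Any such $det_{I',J'}$ must come from a doset $det_{I'',J''}\in supp(f)$ via one of the four substitution patterns (keep or swap $b\leftrightarrow a$ on rows and columns). Using the symmetry identity $det_{I'',J''}=det_{J'',I''}$ (needed because several substitution images land outside the doset region), the cases reduce to showing that $det_{I'',J''}$ either dominates $\mathcal{M}$ in the doset sense, forcing $|I''|=k$ by maximality and contradicting $|I'|>k$, or else produces a maximal minor of $f$ with spread strictly smaller than $k+1$, contradicting the minimal-spread hypothesis on $\mathcal{M}$. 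This book-keeping is the delicate part of the argument, but once completed $det_{I,I}$ is established as a maximal minor of $g$ of size $k$ and spread $k$, and the lemma follows.
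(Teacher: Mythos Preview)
Your approach is essentially the paper's: apply the congruence $\X\mapsto A^T\X A$ with $A=I_\ell+\alpha E_{a,b}$ (the paper writes it as $L_{1,s}(\lambda)\X L_{s,1}(\lambda)$ after normalizing $I=[k]$, $J=\{2,\dots,k+1\}$) and read off a new spread-$k$ minor $det_{I,I}$ in $supp(g)$. The congruence, the multilinear expansion, and the target minor are identical in both arguments.

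Where you differ is in bookkeeping. The paper asserts that the competing contribution $\mathcal{P}=det_{J,J}$ cannot lie in $supp(f)$ ``since its spread is $s-1$, contradicting that $\mathcal{M}$ is of minimal spread''; but that argument only works if $det_{J,J}$ were maximal, which is not guaranteed. You avoid this by simply keeping $f_{I,I}$ and $f_{J,J}$ in the coefficient and choosing $\alpha$ so that the quadratic $p(\alpha)=f_{I,I}+\alpha f_{I,J}+\alpha^2 f_{J,J}$ does not vanish; your observation that the nonzero linear coefficient $f_{I,J}$ makes this possible for every $q\ge 3$ is correct (over $\F_3$, $p(1)=p(2)=0$ would force the linear coefficient to vanish). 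The paper also never verifies that the new minor is \emph{maximal} in $g$; you at least flag this as the delicate step and outline the right case split (pull back a hypothetical dominating minor through the four substitution patterns and confront it with either maximality of $\mathcal{M}$ or minimality of its spread). So your write-up is a more honest version of the same proof.

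One remaining point: your restriction $q\ge 3$ leaves $q=2$ uncovered, and the paper's version has the same lacuna. If you want the lemma for $q=2$ you will need an extra device (e.g.\ arguing separately that one of $f_{I,I},f_{J,J}$ must vanish using the minimal-spread hypothesis together with maximality, or composing two elementary congruences).
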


\begin{proof}
Let $\mathcal{M} = det_{I,J}(\X)$ be a maximal minor of size $k$ and spread size $s>k$ in the support of $f$.
Applying a suitable permutation of rows and columns, Without loss of generality, we may assume $$I = [k] \makebox{ and } J = \{s-k+1, s-k+2, ..., s\}.$$ 
Let $\lambda \in \F_{q}^*$ 
We take $L_{1,s}(\lambda)\in GL_{\ell}(\F_{q})$. Consider the generic matrix $\Y=L_{1,s}(\lambda)\X L_{s,1}(\lambda)$. We shall prove that the codeword given by $g = f(\Y) = f(L_{1,s}(\lambda)\X L_{s,1}(\lambda))$ has a maximal determinant with a smaller spread. The multilinearity of the determinant implies that $$det_{I,J}(\Y) = det_{I,J}(\X) - \lambda det_{I,J\cup \{1\}-\{s\}}(\X).$$

Note that the only minors of the form $det_{A,B}(\Y)$ such that $det_{I,J\cup \{1\}-\{s\}}(\X)$ may appear are the following:
$$\mathcal{Q} = det_{I\cup \{s\}-\{1\},J\cup \{1\}-\{s\}}(\Y), \mathcal{P} = det_{I\cup \{s\}-\{1\},J}(\Y).$$

Note the spread of $\mathcal{P}$ has size $s-1$, which contradicts that $\mathcal{M}$ is of minimal spread size. This implies $f_{I\cup \{s\}-\{1\},J} = 0$. This implies we only need to worry about $\mathcal{Q}$. However, because $\Y$ is symmetric, $\mathcal{Q} = det_{I,J}(\Y)$. Thus it is not considered as a part of the linear combination. Therefore $\mathcal{N} = det_{I,J\cup \{1\}-\{s\}}(\X) \in supp(g)$ where $\mathcal{N}$ is size k and has spread $s-1 = k$.


\end{proof}
\begin{lemma}\label{lem:spread reduction}
Let $f \in \Fl$. Suppose that $f$ has a maximal minor of minimal spread, $\mathcal{M}$ of size $k$ and spread $=s>k+1$. Then there exists $g \in \Fl$ of the same weight with a maximal minor with size $k$ and spread size $\leq s-1$.
\end{lemma}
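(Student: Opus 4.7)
The plan is to follow the template of Lemma~\ref{lem:spread reduction 1} and adapt it to the case $s > k+1$, where one has more room to maneuver. First I would apply a simultaneous row-column permutation automorphism in $\mathrm{Aut}(\CH)$ to reduce, without loss of generality, to $I = [k]$ and $J = \{s-k+1, \ldots, s\}$. Because $s > k+1$, the sets $I \setminus J = \{1, \ldots, s-k\}$ and $J \setminus I = \{k+1, \ldots, s\}$ each have size at least two, giving extra flexibility compared with the previous lemma.

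Next, pick $\lambda \in \F_q^*$ and take the symmetric-preserving conjugation $\Y := L_{1,s}(\lambda)\,\X\,L_{s,1}(\lambda)$, setting $g(\X) := f(\Y)$; this $g$ lies in $\Fl$ and has the same weight as $f$. By multilinearity of the determinant in the column indexed by $s$,
$$\det_{I,J}(\Y) \;=\; \det_{I,J}(\X) \;\pm\; \lambda\,\det_{I,\tilde J}(\X), \qquad \tilde J := (J \setminus \{s\}) \cup \{1\},$$
so the new minor $\mathcal{N} := \det_{I,\tilde J}$ has size $k$ and spread $|I \cup \tilde J| = |(I \cup J) \setminus \{s\}| = s-1$.

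The remaining task is to ensure $\mathcal{N}$ appears with nonzero coefficient in $g$. Enumerating the doset minors in $\mathrm{supp}(f)$ whose transformed image contains $\det_{I,\tilde J}(\X)$ --- exactly as in Lemma~\ref{lem:spread reduction 1} --- yields only $\det_{I,\tilde J}$, $\det_{I',J}$ and $\det_{I',\tilde J}$, where $I' := (I \setminus \{1\}) \cup \{s\}$. The minor $\det_{I',J}$ has spread $(I \cup J) \setminus \{1\}$ of size $s-1$, which would contradict the minimality of the spread of $\mathcal{M}$; hence $f_{I',J} = 0$. The genuinely new difficulty is the minor $\det_{I',\tilde J}$, whose spread is again $s$. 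Unlike the $s = k+1$ case --- where the symmetry $\det_{A,B}(\X) = \det_{B,A}(\X)$ causes this minor to coincide with $\mathcal{M}$ itself --- here it is a distinct minor that minimality alone does not rule out.

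To handle this I would view the total coefficient of $\mathcal{N}$ in $g$ as a polynomial of degree at most two in $\lambda$ whose linear term combines $f_{I,J}$ and $f_{I',\tilde J}$, and then exploit the freedom from $s > k+1$: since $|I \setminus J|, |J \setminus I| \geq 2$, several pairs $(a,b) \in (I \setminus J) \times (J \setminus I)$ produce valid elementary conjugations. If for some such pair the polynomial is not identically zero, a suitable $\lambda \in \F_q^*$ puts $\mathcal{N}$ in $\mathrm{supp}(g)$; if instead every choice forces vanishing, the resulting simultaneous linear relations on the spread-$s$ coefficients of $f$ would force $f_{I,J} = 0$, contradicting $\mathcal{M} \in \mathrm{supp}(f)$. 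Since conjugation preserves minor sizes and $\mathcal{M}$ was maximal of size $k$, the size-$k$ minor $\mathcal{N}$ of smaller spread then yields the required maximal minor of $g$ of size $k$ and spread $\leq s-1$. I expect the main obstacle to be precisely this ``extra'' minor $\det_{I',\tilde J}$, which is what distinguishes the $s > k+1$ case from the $s = k+1$ case, and the additional elementary-swap freedom provided by $s > k+1$ is exactly what is needed to push it out of the way.
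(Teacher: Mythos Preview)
Your approach diverges from the paper's at one crucial point: you conjugate by $L_{1,s}(\lambda)$, exactly as in the case $s=k+1$, whereas the paper conjugates by $L_{1,s-1}(\lambda)$. The whole point of the hypothesis $s>k+1$ is that $s-1\in J\setminus I$, so this second choice is available. With that choice the new size-$k$ minor is $\mathcal{N}=\det_{I,\,(J\cup\{1\})\setminus\{s-1\}}(\X)$, and the only candidate doset minors of $f$ whose image under the conjugation could produce $\mathcal{N}$ are $\mathcal{P}=\det_{(I\cup\{s-1\})\setminus\{1\},\,J}$ and $\mathcal{Q}=\det_{(I\cup\{s-1\})\setminus\{1\},\,(J\cup\{1\})\setminus\{s-1\}}$. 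The minor $\mathcal{P}$ has spread $s-1$ and is excluded by minimality, exactly as you argue. For $\mathcal{Q}$, write its transpose with row set $\{1,s-k+1,\ldots,s-2,s\}$ and column set $\{2,\ldots,k,s-1\}$: the last pair is $(s,s-1)$, which violates the doset condition $i_k\le j_k$. Hence $\mathcal{Q}$ is not a doset minor at all and never appears in the expansion of $f$, so the $\mathcal{N}$-coefficient in $g$ is simply $-\lambda f_{I,J}\neq 0$ and we are done.

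Your choice of $L_{1,s}$ loses exactly this structural fact. As you correctly observe, the analogue of $\mathcal{Q}$ becomes $\det_{I',\tilde J}$ with $I'=(I\setminus\{1\})\cup\{s\}$ and $\tilde J=(J\setminus\{s\})\cup\{1\}$; for $s=k+2$ its transpose $\det_{\tilde J,I'}$ has sorted pairs $(1,2),(3,3),\ldots,(k+1,k+2)$ and \emph{is} a doset minor of spread $s$, so nothing prevents it from lying in $\mathrm{supp}(f)$ and cancelling your $\mathcal{N}$-term. Your proposed remedy --- treat the $\mathcal{N}$-coefficient as a degree-$\le 2$ polynomial in $\lambda$, then vary the pair $(a,b)\in(I\setminus J)\times(J\setminus I)$ and argue that simultaneous vanishing forces $f_{I,J}=0$ --- is not carried out and is not obviously true: you would need to exhibit the linear system explicitly and prove it has full rank, and for $q\le 3$ even a nonzero quadratic in $\lambda$ can vanish on all of $\F_q^{\,*}$. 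The paper sidesteps the entire issue with the single observation that replacing $s$ by $s-1$ forces the obstructing minor out of the doset basis; that is the missing idea in your proof.
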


\begin{proof}
Let $\mathcal{M} = det_{I,J}(\X)$ be a maximal minor of size $k$ and spread size $s>k$ in the support of $f$.
Applying a suitable permutation of rows and columns, Without loss of generality, we may assume $$I = [k] \makebox{ and } J = \{s-k+1, s-k+2, ..., s\}.$$ 
Note that $s>k+1$ implies $s-1 \in J-I$
Let $\lambda \in \F_{q}^*$ 
We take $L_{1,s-1}(\lambda)\in GL_{\ell}(\F_{q})$. Consider the generic matrix $\Y=L_{1,s-1}(\lambda)\X L_{s-1,1}(\lambda)$. We shall prove that the codeword given by $g = f(\Y) = f(L_{1,s-1}(\lambda)\X L_{s-1,1}(\lambda))$ has a maximal determinant with a smaller support. The multilinearity of the determinant implies that $$det_{I,J}(\Y) = det_{I,J}(\X) - \lambda det_{I,J\cup \{1\}-\{s-1\}}(\X).$$

Note that the only minors of the form $det_{A,B}(\Y)$ such that $det_{I,J\cup \{1\}-\{s-1\}}(\X)$ may appear are the following:
$$\mathcal{Q} = det_{I\cup \{s-1\}-\{1\},J\cup \{1\}-\{s-1\}}(\Y), \mathcal{P} = det_{I\cup \{s-1\}-\{1\},J}(\Y).$$

Note the spread of $\mathcal{P}$ has size $s-1$, which contradicts that $\mathcal{M}$ is of minimal spread size. This implies $f_{I\cup \{s-1\}-\{1\},J} = 0$. This implies we only need to worry about $\mathcal{Q}$ and its corresponding $f_{I\cup \{s-1\}-\{1\},J\cup \{1\}-\{s-1\}}$. By symmetry it is equivalent to the minor given by switching its row and columns set. Note the following:
\begin{itemize}
    \item Rowset of $\mathcal{Q}^T$ is $\{1, s-k+1, s-k+2, ..., s-2, s\}$
    \item Columnset of $\mathcal{Q}^T$ is $\{2, 3, ..., k, s-1\}$
\end{itemize}
Note $1<2$, however $s \nless s-1$. This implies $\mathcal{Q}$ is not a doset minor and thus is not considered to be a part of the linear combination. 
Therefore $\mathcal{N} = det_{I,J\cup \{1\}-\{s-1\}}(\X) \in supp(g)$ where $\mathcal{N}$ is size $k$ and has spread $s-1$.


\end{proof}
As a direct consequence of the previous lemma, we obtain the following:
\begin{corollary}\label{cor:spread s to k}
Let $f \in \Fl$. Suppose that $f$ has a maximal minor of minimal spread, $\mathcal{M}$ of size $k$ and spread $=s>k$. Then we may find $g$ of the same weight with a maximal minor of minimal spread, $\mathcal{N}$ with size $k$ and spread $k$.
\end{corollary}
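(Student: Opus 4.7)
The plan is a straightforward induction on the spread size $s$, using the two lemmas immediately preceding the corollary as the inductive step and the base case respectively. Both lemmas transform $f$ via an automorphism of $\CH$ of the form $\X \mapsto L_{1,t}(\lambda)\,\X\, L_{t,1}(\lambda)$, so the weight of the resulting codeword is unchanged; the only thing we need to keep track of is how the size and spread of a maximal minor of minimal spread evolve under these transformations.

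First I would set up the induction on $s$. The base case is $s = k+1$: here Lemma \ref{lem:spread reduction 1} directly produces a $g \in \Fl$ of the same weight as $f$ having a minor $\mathcal{N} = det_{I, J \cup\{1\} \setminus \{s\}}(\X)$ in its support of size $k$ and spread $k$. I still need to verify that $\mathcal{N}$ is maximal of minimal spread in $g$: any minor in $\supp(g)$ with smaller spread would, pulled back under the inverse automorphism, give a minor in $\supp(f)$ of spread $\leq k-1$, contradicting maximality of $\mathcal{M}$ in $f$ (since any minor properly containing nothing still has spread at least its size); and maximality of $\mathcal{N}$ follows similarly because any minor whose rows and columns strictly contain those of $\mathcal{N}$ would pull back to a minor whose spread is at most $s$ and whose rows/columns contain those of $\mathcal{M}$, contradicting either minimality of spread or maximality of $\mathcal{M}$.

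For the inductive step $s > k+1$, Lemma \ref{lem:spread reduction} yields some $g$ of the same weight as $f$ that has a maximal minor of size $k$ and spread at most $s-1$. I would then take the minimal such spread $s'$ in $g$; if $s' = k$ we are done, otherwise $s' \geq k+1$ and we can apply the appropriate lemma again to $g$. Iterating at most $s-k$ times therefore reduces the spread to $k$, yielding the required $\mathcal{N}$.

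The main subtlety (what I expect to be the main obstacle) is the bookkeeping about minimality and maximality across the iterations: each application of Lemma \ref{lem:spread reduction} only guarantees that \emph{some} minor of size $k$ survives with spread at most $s-1$, and one must ensure this minor can again be taken as maximal of minimal spread in the new function $g$ so that the next iteration's hypothesis holds. This follows by the same pullback argument as above, since the transformation is invertible and weight-preserving, so the map $f \mapsto g$ is a bijection on $\Fl$ that preserves the minimal spread of any maximal size-$k$ minor (it can only decrease, and by invertibility it cannot strictly decrease more than claimed). With this observation in hand, the induction closes cleanly and the corollary follows.
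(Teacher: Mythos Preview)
Your proposal is correct and follows essentially the same approach as the paper: the paper's own proof is a single line, ``We apply Lemma \ref{lem:spread reduction} repeatedly and end with Lemma \ref{lem:spread reduction 1},'' which is exactly the iteration you set up. Your extra bookkeeping about maximality and minimality of spread across iterations is more careful than what the paper writes, but it does not change the argument.
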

\begin{proof} We apply Lemma \ref{lem:spread reduction} repeatedly and end with Lemma \ref{lem:spread reduction 1}.
\end{proof}

Note that by Corollary \ref{cor:spread s to k}, we may without loss of generality assume that $f$ has a maximal minor of size $k$ whose spread is $k$. 
Consequently we can stop considering the spread for our bounds. Thus we shall define $w_{\ell,k}$.
\begin{definition}
$w_{\ell,k}$ denotes the minimum weight of a function $f$ such that the matrices are $\ell \times \ell$, with maximal minor of size $k \times k$.
\end{definition}

We finalize this section with the following:
\begin{corollary}\label{cor:weight bound}
Let $f \in \Fl$. Suppose that $f$ has a maximal minor of size $k$ whose spread has size $=s$. Then $wt(f) \geq q^\frac{\ell^2+\ell-k^2-k}{2}(w_{k,k})$.
\end{corollary}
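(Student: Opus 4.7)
The plan is to combine the two preceding results in the section: Lemma \ref{lem: reduce to spread}, which gives a weight bound in terms of $w_{s,k,s}$ whenever the spread has size $s$, and Corollary \ref{cor:spread s to k}, which lets us replace $f$ by an equal-weight function whose maximal minor has spread exactly $k$. Neither step involves any substantive new calculation; the entire work was carried out in the lemmas that precede this corollary.

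First I would invoke Corollary \ref{cor:spread s to k}: since $f$ has a maximal minor of size $k$ with spread of size $s \geq k$, there exists $g \in \Fl$ with $wt(g) = wt(f)$ having a maximal minor of size $k$ and spread exactly $k$. Note that if the hypothesis already gives $s = k$ we can just take $g = f$, and if $s < k$ this is impossible since the spread of a $k \times k$ minor always has size at least $k$.

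Next I would apply Lemma \ref{lem: reduce to spread} to $g$ with spread size equal to $k$, which yields
\[
wt(g) \;\geq\; q^{(\ell^2 + \ell - k^2 - k)/2}\, w_{k,k,k}.
\]
Finally, I would observe that $w_{k,k,k} = w_{k,k}$: when the ambient matrices are $k \times k$ and the maximal minor has size $k$, its spread is automatically contained in $[k]$ and therefore equals $[k]$, so the spread constraint in the definition of $w_{k,k,k}$ is vacuous. Combining these pieces gives
\[
wt(f) \;=\; wt(g) \;\geq\; q^{(\ell^2 + \ell - k^2 - k)/2}\, w_{k,k},
\]
which is the claimed bound. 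There is no real obstacle here; the only thing to be careful about is matching the notation $w_{k,k,k}$ to $w_{k,k}$, which is immediate from the two definitions.
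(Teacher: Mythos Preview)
Your proposal is correct and follows essentially the same approach as the paper: first apply Corollary~\ref{cor:spread s to k} to replace $f$ by an equal-weight function with a maximal minor of size $k$ and spread $k$, then apply Lemma~\ref{lem: reduce to spread}. The paper's proof is terser and silently identifies $w_{k,k,k}$ with $w_{k,k}$, whereas you spell this identification out explicitly, but the argument is the same.
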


\begin{proof}
Let $f$ have a maximal minor of size $k$ whose spread has size $=s$. By Corollary \ref{cor:spread s to k}, we may apply an automorphism such that we transform $f$ to a function of the same weight with a maximal minor of minimal spread, with size $k$ and spread $k$. By Lemma \ref{lem: reduce to spread}, we may then state that  $wt(f) \geq q^\frac{\ell^2+\ell-k^2-k}{2}(w_{k,k})$. \end{proof}

\section{Calculating $d(C^{\mathbb{S}}(2))$}

For the case $\ell = 2$, recall $\X$ us a generic symmetric matrix of the form 
$$\X = \begin{bmatrix}
    X_{1,1}&X_{1,2}\\X_{1,2}&X_{2,2}
    \end{bmatrix},$$ \\
where $X_{1,1}, X_{1,2}, X_{2,2} \in \F_q$    

In this case a function $f \in \Fl$ is of the form
$$f = f_{\emptyset, \emptyset} + f_{1,1}X_{1,1} + f_{1,2}X_{1,2} + f_{2,2}X_{2,2}+ f_{12,12}(X_{1,1}X_{2,2}-X_{1,2}^2), f_{i,j} \in \F_{q}.$$

We split our proof in two cases: $f_{12,12} = 0$ or $f_{12,12} \neq 0$.
\begin{lemma}
Let $\ell = 2$. Suppose $f \in \Fl$ where $f$ is a nonzero function of the form
$$f = f_{\emptyset, \emptyset} + f_{1,1}X_{1,1} + f_{1,2}X_{1,2} + f_{2,2}X_{2,2}.$$
Then $wt(f) \geq q^3-q^2$

\end{lemma}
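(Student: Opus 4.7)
The plan is to recognize that when the top coefficient $f_{12,12}$ vanishes, the function $f$ reduces to an affine (degree at most $1$) polynomial in the three free entries $X_{1,1}, X_{1,2}, X_{2,2}$ of a generic $2 \times 2$ symmetric matrix. Since $\mathbb{S}^{2}(\F_q)$ is in bijection with $\F_q^{3}$ via these three coordinates, the evaluation vector $Ev_{\mathbb{S}^{2}(\F_q)}(f)$ coincides with the evaluation of an affine polynomial on all of $\F_q^{3}$; its weight is exactly the weight of a codeword of the first-order generalized Reed--Muller code $RM_q(1,3)$.

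First I would split into two subcases according to whether the linear part $(f_{1,1}, f_{1,2}, f_{2,2})$ of $f$ is zero. If the linear part is zero, then by hypothesis $f = f_{\emptyset,\emptyset} \neq 0$ is a nonzero constant, so $f(P) \neq 0$ for every $P \in \mathbb{S}^{2}(\F_q)$, giving $wt(f) = q^3 \geq q^3 - q^2$. Otherwise at least one of $f_{1,1}, f_{1,2}, f_{2,2}$ is nonzero, and the zero locus
\[
\{(X_{1,1}, X_{1,2}, X_{2,2}) \in \F_q^3 \;:\; f_{\emptyset,\emptyset} + f_{1,1}X_{1,1} + f_{1,2}X_{1,2} + f_{2,2}X_{2,2} = 0\}
\]
is an affine hyperplane in $\F_q^3$, hence contains exactly $q^2$ points. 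Consequently $f$ vanishes on precisely $q^2$ matrices of $\mathbb{S}^{2}(\F_q)$, so $wt(f) = q^3 - q^2$.

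There is no real obstacle here; the only thing to check carefully is that the map $\mathbb{S}^{2}(\F_q) \to \F_q^3$ sending $P$ to $(P_{1,1}, P_{1,2}, P_{2,2})$ is a bijection, so counting zeroes of $f$ on $\mathbb{S}^2(\F_q)$ is the same as counting zeroes of the affine polynomial on $\F_q^3$. Combining the two subcases yields $wt(f) \geq q^3 - q^2$ as desired.
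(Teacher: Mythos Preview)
Your proof is correct and follows essentially the same approach as the paper: both split into the constant case (weight $q^3$) and the case where some linear coefficient is nonzero, then count the zeros of the resulting affine linear form on $\F_q^3 \cong \mathbb{S}^2(\F_q)$. The only cosmetic difference is that the paper invokes an automorphism to assume $f_{1,1}\neq 0$ and then solves for $X_{1,1}$, whereas you phrase the same count as ``the zero locus is an affine hyperplane of size $q^2$''; these are equivalent elementary arguments.
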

\begin{proof}

To determine $wt(f)$ we count the solutions to $$f = f_{\emptyset, \emptyset} + f_{1,1}X_{1,1} + f_{1,2}X_{1,2} + f_{2,2}X_{2,2} = 0.$$ Suppose that $f_{\emptyset, \emptyset}$ is the only nonzero coefficient. this implies $f = f_{\emptyset, \emptyset} \neq 0$. This implies that for all matrices in $\HM$, we have that $f$ is nonzero. This implies $wt(f) = q^3$. Then suppose that some other $f_{i,j}$ is nonzero. Then through the use of code automorphisms we may assume without loss of generality that $f_{1,1} \neq 0$. Then for any of the $q$ values of $X_{1,2}$ and any of the $q$ values of $X_{2,2}$ there is at most $1$ value of $X_{1,1}$ which makes $f = 0.$ Consequently there are at most $q^2$ matrices in $\HM$ such that $f(S) = 0$ and $wt(f) \geq q^3-q^2$. 


\end{proof}

Now consider $f \in \Fl$ where the $2 \times 2$ minor of $\X$ appears in $supp(f)$. Without loss of generality we assume $f_{12,12} = 1$
\begin{lemma}\label{lem:wt22}
Let $\ell = 2$. Suppose $f \in \Fl$ where $f$ is of the form
$$f = f_{\emptyset, \emptyset} + f_{1,1}X_{1,1} + f_{1,2}X_{1,2} + f_{2,2}X_{2,2}+ X_{1,1}X_{2,2}-X_{1,2}^2.$$
Then $wt(f) \geq q^3-q^2-q$

\end{lemma}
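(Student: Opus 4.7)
The plan is to first apply the translation automorphism of Lemma \ref{lem:minorclear} to strip off the linear terms, then count zeros directly using the hyperbolic conic formula of Lemma \ref{lem:hyperboliczeroes}. Take $I = \{1,2\}$; then $\det_{I,I}(\X) = X_{1,1}X_{2,2} - X_{1,2}^2$ is the unique maximal minor of $f$ with coefficient $1$, and the $1 \times 1$ minors whose rows and columns lie in $I$ are precisely $X_{1,1}$, $X_{1,2}$, $X_{2,2}$. Since translation $\X \mapsto \X + S$ is an automorphism of $\CH$ (and hence weight--preserving), Lemma \ref{lem:minorclear} produces a function
\[
g(\X) \;=\; c + X_{1,1}X_{2,2} - X_{1,2}^{2}
\]
for some $c \in \F_q$, with $wt(g) = wt(f)$. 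From here on it suffices to bound the number of zeros of $g$.

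Next, I treat $X_{1,2}$ as a parameter and count solutions of $X_{1,1}X_{2,2} = X_{1,2}^{2} - c$ for each fixed value of $X_{1,2}$. By Lemma \ref{lem:hyperboliczeroes} (with $a = b = 0$), this equation contributes $2q-1$ pairs $(X_{1,1},X_{2,2})$ when $X_{1,2}^{2} = c$, and $q-1$ pairs when $X_{1,2}^{2} \neq c$. So the total number of zeros of $g$ equals
\[
N(c) \;=\; (2q-1)\,\#\{t \in \F_q : t^2 = c\} + (q-1)\bigl(q - \#\{t \in \F_q : t^2 = c\}\bigr).
\]

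Finally I split into the three cases according to the square--class of $c$. If $c = 0$ then $\#\{t : t^2 = c\} = 1$ and $N(0) = (2q-1) + (q-1)^2 = q^{2}$. If $c$ is a nonzero square (recall $q$ odd, from the hypothesis of Lemma \ref{lem:minorclear}), then $\#\{t : t^2 = c\} = 2$ and $N(c) = 2(2q-1) + (q-2)(q-1) = q^{2} + q$. If $c$ is a nonsquare, then $\#\{t : t^2 = c\} = 0$ and $N(c) = q(q-1) = q^{2} - q$. In every case $N(c) \leq q^{2} + q$, so
\[
wt(f) \;=\; wt(g) \;=\; q^{3} - N(c) \;\geq\; q^{3} - q^{2} - q,
\]
which is the stated bound. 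The only place any real thought is needed is the case split in the last paragraph, and the extremal case (where the bound is actually attained) is $c$ a nonzero square.
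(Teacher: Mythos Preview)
Your argument is essentially the paper's: both fix $X_{1,2}$, reduce $f=0$ to an equation of hyperbolic type in $(X_{1,1},X_{2,2})$, and invoke Lemma~\ref{lem:hyperboliczeroes} to bound the zero count by $q^{2}+q$. The only structural difference is that you invoke the translation automorphism (Lemma~\ref{lem:minorclear}) to clear all linear terms first, whereas the paper completes the square directly, rewriting $f=0$ as $(X_{1,1}+f_{2,2})(X_{2,2}+f_{1,1}) = P(X_{1,2})$ with $P$ quadratic.

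That difference creates a genuine gap. Lemma~\ref{lem:minorclear} requires $q$ odd (it divides by $2$ to clear the off-diagonal term $f_{1,2}X_{1,2}$), and you yourself flag this restriction in your case split. But Lemma~\ref{lem:wt22} is stated for arbitrary $q$, and the paper relies on it in even characteristic as well (see the proof of Lemma~\ref{lem:22classfiereven}). So as written your proof does not cover $q$ even. The fix is immediate: skip Lemma~\ref{lem:minorclear} and factor directly as the paper does. The identity $(X_{1,1}+f_{2,2})(X_{2,2}+f_{1,1}) = X_{1,1}X_{2,2}+f_{1,1}X_{1,1}+f_{2,2}X_{2,2}+f_{1,1}f_{2,2}$ holds in every characteristic, and what remains on the right is a degree-$2$ polynomial in $X_{1,2}$ with at most two roots --- which is all your counting actually needs.
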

\begin{proof}

As in the previous case we count the number of solutions to 
$$f = f_{\emptyset, \emptyset} + f_{1,1}X_{1,1} + f_{1,2}X_{1,2} + f_{2,2}X_{2,2}+ X_{1,1}X_{2,2}-X_{1,2}^2 = 0.$$

We move the terms with $X_{1,1}, X_{2,2}$ to one side and obtain:

$$f_{1,1}X_{1,1} + f_{2,2}X_{2,2} + X_{1,1}X_{2,2} = X_{1,2}^2 -f_{1,2}X_{1,2} -f_{\emptyset, \emptyset}.$$

Now we add $f_{1,1}f_{2,2}$ to both sides:

$$f_{1,1}f_{2,2} + f_{1,1}X_{1,1} + f_{2,2}X_{2,2} + X_{1,1}X_{2,2} = X_{1,2}^2  -f_{1,2}X_{1,2} -f_{\emptyset, \emptyset} +f_{1,1}f_{2,2}.$$

The left side factors as:

$$(X_{1,1}+f_{2,2})(X_{2,2}+f_{1,1}) = X_{1,2}^2 -f_{1,2}X_{1,2} -f_{\emptyset, \emptyset} +f_{1,1}f_{2,2}.$$

The right hand side of the equation is an univariate polynomial in $X_{1,2}$ of degree $2$. Denote by $$P(X_{1,2}) =X_{1,2}^2 -f_{1,2}X_{1,2} -f_{\emptyset, \emptyset} +f_{1,1}f_{2,2}.$$ Let $S = \{ \lambda \in \F_q | P(\lambda) = 0\}$ denote the set of zeroes of $P(X_{1,2})$. Note that $\# S \leq 2$. Let $\alpha \in S.$
In this case $P(\alpha) = 0$. Lemma \ref{lem:hyperboliczeroes} implies that there are $2q-1$ values of $X_{1,1}$ and $X_{2,2}$ such that the equation
$$(X_{1,1}+f_{2,2})(X_{2,2}+f_{1,1}) = P(\alpha)$$ is satisfied. This implies both sides are $0$ for exactly $\#S(2q-1)$ values.

Now assume $\alpha \in \F_q \setminus S.$ In this case Lemma \ref{lem:hyperboliczeroes} implies that there are $q-1$ values of $X_{1,1}$ and $X_{2,2}$ such that the equation
$$(X_{1,1}+f_{2,2})(X_{2,2}+f_{1,1}) = P(\alpha)$$ is satisfied. This implies there are $(q-\# S)(q-1)$ solutions to the equation where $P(\alpha) \neq 0.$

Therefore there are $$\# S (2q-1) + (q-\#S)(q-1)$$ elements of $\HM$ such that $f = 0.$ As $\# S \leq 2$, we have that $f$ has at most $$(2)(2q-1) + (q-2)(q-1) = q^2+q$$
Consequently, $wt(f) \geq q^3-q^2-q$.
\end{proof}

It is very useful to classify the codewords with $2 \times 2$ determinant as a maximal determinant in $supp(f)$.

\begin{lemma}\label{lem:22classfier}
Let $\ell = 2$ and $q$ be odd. Suppose $f \in \Fl$ where $f$ is of the form
$$f = f_{\emptyset, \emptyset} + f_{1,1}X_{1,1} + f_{1,2}X_{1,2} + f_{2,2}X_{2,2}+ X_{1,1}X_{2,2}-X_{1,2}^2.$$ The following statements are true:

If $ {(\frac{1}{2}f_{1,2})}^2-f_{1,1}f_{2,2}+f_{\emptyset, \emptyset} = 0 $ then  $wt(f) \geq q^3 -q^2.$

If $ {(\frac{1}{2}f_{1,2})}^2-f_{1,1}f_{2,2}+f_{\emptyset, \emptyset} \neq 0 $ then  $wt(f) \geq q^3-q^2-q$.

\end{lemma}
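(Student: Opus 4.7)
The plan is to piggyback directly on the argument of Lemma \ref{lem:wt22}, since the only new content here is the refined bound in the discriminant-zero case. First I would rewrite the equation $f=0$ in the factored form obtained in the proof of Lemma \ref{lem:wt22}, namely
$$(X_{1,1}+f_{2,2})(X_{2,2}+f_{1,1}) = P(X_{1,2}),$$
where $P(T) = T^{2} - f_{1,2} T - f_{\emptyset,\emptyset} + f_{1,1}f_{2,2}$. As in that proof, letting $S = \{\lambda \in \F_q \mid P(\lambda) = 0\}$ and applying Lemma \ref{lem:hyperboliczeroes} to each value of $X_{1,2}$, the exact number of matrices with $f(S)=0$ is
$$\# S (2q-1) + (q - \# S)(q - 1).$$

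The next step is to identify $\# S$ from the hypothesis. Because $q$ is odd, I can compute the discriminant of $P$ and, dividing by $4$, obtain exactly $(\tfrac{1}{2}f_{1,2})^{2} - f_{1,1}f_{2,2} + f_{\emptyset,\emptyset}$. Thus the hypothesis of the first case is equivalent to $P$ having a (unique) double root, and the hypothesis of the second case to $P$ being separable.

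In Case 1 the discriminant vanishes, so $\# S = 1$, and substituting gives $(2q-1) + (q-1)(q-1) = q^{2}$ zeros of $f$, hence $wt(f) \geq q^{3} - q^{2}$. In Case 2 the discriminant is nonzero, so $\# S \in \{0,2\}$, and the count is maximized when $\# S = 2$, giving at most $2(2q-1) + (q-2)(q-1) = q^{2}+q$ zeros, hence $wt(f) \geq q^{3} - q^{2} - q$ as asserted (this is just the bound already established in Lemma \ref{lem:wt22}, restated).

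There is no real obstacle; the only delicate point is verifying that the scalar $(\tfrac{1}{2}f_{1,2})^{2} - f_{1,1}f_{2,2} + f_{\emptyset,\emptyset}$ is $\tfrac{1}{4}$ of the discriminant of $P$, which requires $q$ odd so that $\tfrac{1}{2} \in \F_q$ is well defined. Everything else is a direct case split on $\# S$ combined with Lemma \ref{lem:hyperboliczeroes}.
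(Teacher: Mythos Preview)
Your proposal is correct and follows essentially the same route as the paper: rewrite $f=0$ in the factored form $(X_{1,1}+f_{2,2})(X_{2,2}+f_{1,1})=P(X_{1,2})$, then apply Lemma~\ref{lem:hyperboliczeroes} for each value of $X_{1,2}$ and split cases according to $\#S$. The only cosmetic difference is that the paper completes the square explicitly via the substitution $X_{1,2}=T_2+\tfrac{1}{2}f_{1,2}$ to read off that $P$ has a unique root precisely when $(\tfrac{1}{2}f_{1,2})^{2}-f_{1,1}f_{2,2}+f_{\emptyset,\emptyset}=0$, whereas you invoke the discriminant directly; these are the same computation.
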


\begin{proof}

Suppose $f$ is of the form $$f = f_{\emptyset, \emptyset} + f_{1,1}X_{1,1} + f_{1,2}X_{1,2} + f_{2,2}X_{2,2}+ X_{1,1}X_{2,2}-X_{1,2}^2.$$ As in the proof of Lemma \ref{lem:wt22} we write $f$ as $$f = (X_{1,1} +f_{2,2})(X_{2,2} +f_{1,1}) + P(X_{1,2}) $$ where $$P(X_{1,2}) = -X_{1,2}^2 +f_{1,2}X_{1,2} +f_{\emptyset, \emptyset} -f_{1,1}f_{2,2} $$ is a polynomial of degree $2$.



Now we shall change the variable $X_{1,2}$ to the variable $T_2$ where $X_{1,2} = T_2 + \frac{1}{2}f_{1,2}$. In this case $$ P\left(T_2+ \frac{1}{2}f_{1,2}\right) = -\left(T_2 + \frac{1}{2}f_{1,2}\right)^2 + f_{1,2}^2 +f_{1,2}\left(T_2 + \frac{1}{2}f_{1,2}\right) +f_{\emptyset, \emptyset} -f_{1,1}f_{2,2}.$$

Expanding and eliminating like terms we obtain

$$P\left(T_2 + \frac{1}{2}f_{1,2}\right) = -T_2^2 + {\left(\frac{1}{2}f_{1,2}\right)}^2+f_{\emptyset, \emptyset} - f_{1,1}f_{2,2}. $$
Note that if ${\left(\frac{1}{2}f_{1,2}\right)}^2+f_{\emptyset, \emptyset} - f_{1,1}f_{2,2} = 0$, then $P\left(T_2 +\frac{1}{2}f_{1,2}\right)$ has exactly one zero, namely $T_2 = -\frac{1}{2}f_{1,2}$. 

 Note that the number of zeroes of $P(T_2 + \frac{1}{2}f_{1,2})$ is precisely the same number of solutions to $P(X_{1,2}) = 0$

When $P(X_{1,2}) = 0$ there are $2q-1$ values of $X_{1,1}$ and $X_{2,2}$ which make $f = 0$. When $P(X_{1,2}) \neq 0$, there are $q-1$ values of $X_{1,1}$ and $X_{2,2}$ which make $f = 0$.

Therefore $f = 0$ for at most $2q-1 + (q-1)(q-1) = q^2 -2q +1 +2q -1 =q^2$ and $wt(f) \geq q^3 -q^2.$

If ${(\frac{1}{2}f_{1,2})}^2+f_{\emptyset, \emptyset} - f_{1,1}f_{2,2} \neq 0$, then there are at most $2$ values of $X_{1,2}$ which make $P(X_{1,2}) = 0$. When $P(X_{1,2}) = 0$, there are $2q-1$ values of $X_{1,1}$ and $X_{2,2}$ which make $f = 0$. When $P(X_{1,2})  \neq 0$, there are $q-1$ values of $X_{1,1}$ and $X_{2,2}$ which make the equation true. Therefore $f = 0$ for at most $(2q-1)(2) + (q-2)(q-1) = q^2 +q $ and $wt(f) \geq q^3- q^2 -q.$

\end{proof}

By Berlekamp, Rumsey and Solomon, we have the following:
\begin{proposition}\cite{BERLEKAMP1967553}\label{prop:quadsols}
The quadratic equation $x^2+bx+c = 0$, if $b=0$, has a unique solution in $GF(2^k)$. 
\end{proposition}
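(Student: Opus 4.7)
The plan is to observe that when $b=0$ the equation reduces to $x^{2}=c$ (since in characteristic $2$ the sign is immaterial), so the statement amounts to showing that every element of $GF(2^{k})$ has a unique square root. First I would note that the Frobenius map $\phi : GF(2^{k}) \to GF(2^{k})$ defined by $\phi(x) = x^{2}$ is a ring homomorphism, because $(x+y)^{2} = x^{2} + 2xy + y^{2} = x^{2}+y^{2}$ in characteristic $2$, and multiplicativity is automatic.

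Next I would show $\phi$ is injective: if $x^{2} = y^{2}$ then $(x-y)^{2} = x^{2} - 2xy + y^{2} = x^{2}+y^{2} = 0$, and since $GF(2^{k})$ is a field, this forces $x = y$. Because $GF(2^{k})$ is finite, an injective self-map is automatically a bijection, so $\phi$ is a bijection. Consequently, for every $c \in GF(2^{k})$ there is exactly one $x \in GF(2^{k})$ with $x^{2} = c$, which is precisely the claim.

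If one wants an explicit witness, I would point out that $x = c^{2^{k-1}}$ works: by Fermat's little theorem in $GF(2^{k})$, every element $c$ satisfies $c^{2^{k}} = c$, so $(c^{2^{k-1}})^{2} = c^{2^{k}} = c$. Combined with the injectivity argument above, this furnishes both existence and uniqueness. There is no real obstacle here; the only subtlety worth flagging is that the result is special to characteristic $2$, since in odd characteristic $x^{2} = c$ typically has either $0$ or $2$ solutions depending on whether $c$ is a quadratic residue.
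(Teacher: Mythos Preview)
Your proof is correct. The paper does not supply its own argument for this proposition; it simply cites it as a known result from Berlekamp, Rumsey, and Solomon \cite{BERLEKAMP1967553} and uses it as a black box. Your direct argument via the Frobenius endomorphism (injectivity of $x\mapsto x^{2}$ on a finite field of characteristic $2$, hence bijectivity) is the standard elementary justification, and the explicit inverse $c\mapsto c^{2^{k-1}}$ is a nice touch.
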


\begin{lemma}\label{lem:22classfiereven}
Let $\ell = 2$ and $q$ be even. Suppose $f \in \Fl$ where $f$ is of the form
$$f = f_{\emptyset, \emptyset} + f_{1,1}X_{1,1} + f_{1,2}X_{1,2} + f_{2,2}X_{2,2}+ X_{1,1}X_{2,2}-X_{1,2}^2.$$ The following statements are true:

If $ f_{1,2} = 0 $ then  $wt(f) \geq q^3 -q^2.$

If $ f_{1,2} \neq 0 $ then  $wt(f) \geq q^3-q^2-q$.

\end{lemma}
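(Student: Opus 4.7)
The plan is to mirror the proof of Lemma \ref{lem:22classfier}, observing that the factorization trick used there survives in characteristic $2$ (since $-1 = 1$), but the completion-of-the-square step does not, and must be replaced by Proposition \ref{prop:quadsols}.

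First I would rewrite the equation $f = 0$ in the form
$$(X_{1,1}+f_{2,2})(X_{2,2}+f_{1,1}) \;=\; X_{1,2}^{2} + f_{1,2}X_{1,2} + f_{1,1}f_{2,2} + f_{\emptyset,\emptyset},$$
using that $-X_{1,2}^{2} = X_{1,2}^{2}$ and $-f_{\emptyset,\emptyset} = f_{\emptyset,\emptyset}$ in $\F_{q}$ of even characteristic, together with adding $f_{1,1}f_{2,2}$ to both sides. Let $P(X_{1,2})$ denote the univariate quadratic on the right hand side and let $S = \{\lambda \in \F_{q} \mid P(\lambda) = 0\}$. For every fixed value $\alpha$ of $X_{1,2}$, Lemma \ref{lem:hyperboliczeroes} counts the pairs $(X_{1,1}, X_{2,2})$ that satisfy the factored equation: there are $2q-1$ such pairs when $\alpha \in S$ and $q-1$ such pairs when $\alpha \notin S$. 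So the number of zeros of $f$ on $\HM$ is at most $\#S\,(2q-1) + (q-\#S)(q-1)$.

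The case split comes from bounding $\#S$. If $f_{1,2} \neq 0$, then $P$ is a genuine degree-two polynomial and has at most $2$ roots in $\F_{q}$, giving at most $2(2q-1) + (q-2)(q-1) = q^{2}+q$ zeros and hence $wt(f) \geq q^{3}-q^{2}-q$. If $f_{1,2} = 0$, then $P(X_{1,2}) = X_{1,2}^{2} + c$ with $c = f_{1,1}f_{2,2} + f_{\emptyset,\emptyset}$; by Proposition \ref{prop:quadsols} this equation has exactly one solution in $\F_{q}$, so $\#S \leq 1$ and the count becomes at most $(2q-1) + (q-1)(q-1) = q^{2}$, giving $wt(f) \geq q^{3} - q^{2}$.

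The only subtle point, and the only place where this proof differs from Lemma \ref{lem:22classfier}, is that over characteristic $2$ one cannot complete the square to distinguish the ``degenerate'' branch of $P$; instead the Frobenius-bijectivity input from Proposition \ref{prop:quadsols} gives the required upper bound on $\#S$ when $f_{1,2}=0$. Everything else, including the factorization, the use of Lemma \ref{lem:hyperboliczeroes}, and the final arithmetic, carries over verbatim.
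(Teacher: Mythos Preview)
Your proposal is correct and follows essentially the same approach as the paper's proof: both factor $f=0$ as $(X_{1,1}+f_{2,2})(X_{2,2}+f_{1,1}) = P(X_{1,2})$, invoke Lemma~\ref{lem:hyperboliczeroes} to count pairs $(X_{1,1},X_{2,2})$ according to whether $P(\alpha)=0$, and then use Proposition~\ref{prop:quadsols} to bound $\#S$ by $1$ when $f_{1,2}=0$ (and the generic bound $\#S\le 2$ when $f_{1,2}\neq 0$), arriving at the identical arithmetic.
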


\begin{proof}

Suppose $f$ is of the form $$f = f_{\emptyset, \emptyset} + f_{1,1}X_{1,1} + f_{1,2}X_{1,2} + f_{2,2}X_{2,2}+ X_{1,1}X_{2,2}-X_{1,2}^2.$$ As in the proof of Lemma \ref{lem:wt22} we write $f$ as $$f = (X_{1,1} +f_{2,2})(X_{2,2} +f_{1,1}) + P(X_{1,2}) $$ where $$P(X_{1,2}) = X_{1,2}^2 +f_{1,2}X_{1,2} +f_{\emptyset, \emptyset} -f_{1,1}f_{2,2} $$ is a polynomial of degree $2$.



By Proposition \ref{prop:quadsols}, if $ f_{1,2} = 0 $, then $P(X_{1,2})$ has exactly one zero. 
When $P(X_{1,2}) = 0$ there are $2q-1$ values of $X_{1,1}$ and $X_{2,2}$ which make $f = 0$. When $P(X_{1,2}) \neq 0$, there are $q-1$ values of $X_{1,1}$ and $X_{2,2}$ which make $f = 0$.

Therefore $f = 0$ for at most $2q-1 + (q-1)(q-1) = q^2 -2q +1 +2q -1 =q^2$ and $wt(f) \geq q^3 -q^2.$

If $ f_{1,2} \neq 0$, then there are at most $2$ values of  $X_{1,2}$ which make $P(X_{1,2}) = 0$.  When $P(X_{1,2}) = 0$, there are $2q-1$ values of $X_{1,1}$ and $X_{2,2}$ which make $f = 0$. When $P(X_{1,2})  \neq 0$, there are $q-1$ values of $X_{1,1}$ and $X_{2,2}$ which make the equation true. Therefore $f = 0$ for at most $(2q-1)(2) + (q-2)(q-1) = q^2 +q $ and $wt(f) \geq q^3- q^2 -q.$

\end{proof}

\section{Calculating $d(C^{\mathbb{S}}(3))$}

The authors in \cite{BGT} specialized from the $\ell \times \ell$ case down to the $(\ell-1) \times (\ell -1 )$ case to find the minimum distance of the affine Grassmann codes. As the matrices in the affine Grassmann code are generic, one can perform a partial evaluation on any row and any column while still preserving the structure of the code $\CA$. In the case of symmetric matrices, symmetry must be preserved. This means that, just as in \cite{gonzalez2021affine}, a partial evaluation on a column also fixes the corresponding row. For $\ell = 3$ we have two cases: spread $< 3$ and spread $=3$

\begin{lemma}\label{lem: 33 small spread weight}
Let $f \in \Fl$. Suppose that $\ell = 3$ and $f$ has a maximal minor whose spread has size $\leq 2$. Then $wt(f) \geq q^6 -q^5-q^4$.
\end{lemma}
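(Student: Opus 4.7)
The plan is to invoke Corollary \ref{cor:weight bound} and reduce to the $\ell = 2$ bounds already established in Section 6. Since $f$ has a maximal minor whose spread is of size at most $2$, its size $k$ also satisfies $k \leq 2$, because a $k \times k$ minor necessarily has spread of size at least $k$. Consequently only two subcases arise: $k = 1$ and $k = 2$. In each, Corollary \ref{cor:weight bound} specialized to $\ell = 3$ yields
$$wt(f) \geq q^{(12 - k^2 - k)/2} \, w_{k,k},$$
which is $q^{5}\,w_{1,1}$ for $k = 1$ and $q^{3}\,w_{2,2}$ for $k = 2$.

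For the subcase $k = 1$, I would observe that $w_{1,1}$ is the minimum weight of a nonzero element of $\mathcal{F}_{\X}(1)$ whose support contains $X_{1,1}$; any such element is of the form $f_{1,1}X_{1,1} + f_{\emptyset,\emptyset}$ with $f_{1,1} \neq 0$, so it has at most one root, giving $w_{1,1} \geq q - 1$. This yields $wt(f) \geq q^{5}(q - 1) = q^{6} - q^{5}$, which is strictly stronger than the claimed bound.

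The tight case is $k = 2$. Here $w_{2,2}$ is precisely the quantity bounded in Lemma \ref{lem:wt22}, namely the minimum weight of an element of $\mathcal{F}_{\X}(2)$ whose support contains the full $2 \times 2$ minor $X_{1,1}X_{2,2} - X_{1,2}^{2}$. That lemma gives $w_{2,2} \geq q^{3} - q^{2} - q$, from which
$$wt(f) \geq q^{3}(q^{3} - q^{2} - q) = q^{6} - q^{5} - q^{4},$$
matching the asserted bound.

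Combining the two subcases closes the proof. I foresee no real obstacle: the spread-reduction mechanism packaged inside Corollary \ref{cor:weight bound} (built from Corollary \ref{cor:spread s to k}), the specialization step of Lemma \ref{lem: reduce to spread}, and the explicit $\ell = 2$ estimate of Lemma \ref{lem:wt22} have all been established beforehand. The one piece of bookkeeping is checking that $k = 1$ is dominated by $k = 2$, which is immediate from $q^{6} - q^{5} \geq q^{6} - q^{5} - q^{4}$.
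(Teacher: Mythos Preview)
Your proposal is correct and takes essentially the same approach as the paper: the paper's proof consists of the single sentence ``The lemma follows by applying Corollary \ref{cor:weight bound},'' and you have simply unpacked that application in detail, splitting into the cases $k=1$ and $k=2$ and plugging in the $\ell=2$ estimate $w_{2,2}\geq q^{3}-q^{2}-q$ from Lemma~\ref{lem:wt22}. Your treatment of the $k=1$ subcase (computing $w_{1,1}\geq q-1$) is a minor addition that the paper leaves implicit.
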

\begin{proof}
The lemma follows by applying Corollary \ref{cor:weight bound}
\end{proof}

Now we shall assume $\X$ is of the form

   $$\X = \begin{bmatrix}
    X_{1,1}&X_{1,2}& Y_{1,3}\\X_{1,2}&X_{2,2} & Y_{2,3}\\ Y_{1,3}&Y_{2,3}&Y_{3,3}
    \end{bmatrix},$$ 
    where $X_{1,1}, X_{1,2}, X_{2,2},Y_{1,3}, Y_{2,3}, Y_{3,3}$ are variables of elements in $\F_q$. Now we shall study the $q^3$ possible specializations of $Y_{1,3}, Y_{2,3}, Y_{3,3}$ and their effect on $f \in \Fl$ on the remaining unspecialized $2 \times 2$ submatrix. We remind the reader of the following minor expansion:
    
    \begin{proposition}
$$    det_{123,123}(\X) = det_{12,12}(\X)Y_{3,3} -det_{13,12}(\X)Y_{2,3} + det_{23,12}(\X)Y_{1,3}   $$
    \end{proposition}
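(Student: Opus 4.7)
The plan is to obtain this identity as a direct cofactor (Laplace) expansion of $\det(\X)$ along its third column. Since $\X$ is a $3 \times 3$ matrix whose third column is $(Y_{1,3}, Y_{2,3}, Y_{3,3})^T$, the standard cofactor expansion gives
$$\det(\X) = (-1)^{1+3} Y_{1,3} M_{1,3} + (-1)^{2+3} Y_{2,3} M_{2,3} + (-1)^{3+3} Y_{3,3} M_{3,3},$$
where $M_{i,3}$ denotes the $2 \times 2$ minor obtained by deleting row $i$ and column $3$.

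Next I would identify each of these minors in the notation of the paper. Deleting row $1$ and column $3$ leaves the submatrix on rows $\{2,3\}$ and columns $\{1,2\}$, which is $det_{23,12}(\X)$. Similarly, deleting row $2$ and column $3$ gives $det_{13,12}(\X)$, and deleting row $3$ and column $3$ gives $det_{12,12}(\X)$. Substituting the signs $+, -, +$ produces exactly
$$det_{123,123}(\X) = det_{12,12}(\X)\, Y_{3,3} - det_{13,12}(\X)\, Y_{2,3} + det_{23,12}(\X)\, Y_{1,3},$$
as claimed.

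There is no substantive obstacle: the identity is simply the Laplace expansion, and symmetry of $\X$ plays no role in the derivation (it only explains why the expansion is taken along the third column, namely to keep the variables $Y_{1,3}, Y_{2,3}, Y_{3,3}$ explicit and the remaining $2 \times 2$ minors expressed purely in the $X_{i,j}$ entries). The one point worth double-checking is the correspondence between the positions in the cofactor expansion and the row/column index sets used in the $det_{I,J}$ notation, which is handled by the elementary identification above.
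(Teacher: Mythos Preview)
Your argument is correct: the proposition is precisely the Laplace expansion of $\det(\X)$ along the third column, and your identification of the cofactors $M_{i,3}$ with $det_{23,12}(\X)$, $det_{13,12}(\X)$, and $det_{12,12}(\X)$ is accurate. The paper does not supply a proof at all (it introduces the proposition with ``We remind the reader of the following minor expansion''), so your cofactor argument is exactly the intended justification.
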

    
We will make use of the following known quantity to work out some special cases
    \begin{proposition}\cite{McWilliams}\label{prop:det3}
    The number of symmetric $n \times n$ matrices of full rank with entries in $\F_q$ is equal to
    $$q^{\binom{n+1}{2}}  \prod_{i=1}^{\lceil \frac{n}{2} \rceil} \left(1-\frac{1}{q^{2i-1}}\right) $$
    \end{proposition}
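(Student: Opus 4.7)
Let $f(n)$ denote the number of full rank symmetric $n\times n$ matrices over $\F_q$. The plan is to establish a two-term recurrence for $f(n)$ by conditioning on the first row/column, and then verify by induction on $n$ that the closed form in the statement satisfies that recurrence.

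Write a generic symmetric matrix as $M=\begin{pmatrix}a & v^T\\ v & B\end{pmatrix}$ with $a\in\F_q$, $v\in\F_q^{n-1}$, and $B$ symmetric of size $(n-1)\times(n-1)$. I would split on whether $a$ vanishes. When $a\ne 0$, a Schur-complement argument gives $\det M=a\cdot\det(B-a^{-1}vv^T)$, and for each fixed $a$ and $v$ the map $B\mapsto B-a^{-1}vv^T$ is a bijection on symmetric matrices, so this case contributes $(q-1)\,q^{n-1}f(n-1)$. When $a=0$ nonsingularity forces $v\ne 0$; for each such $v$ choose $P\in GL_{n-1}(\F_q)$ with $Pv=e_1$, and apply the congruence $\mathrm{diag}(1,P)\,M\,\mathrm{diag}(1,P^T)$, which is a symmetry-preserving bijection on the completions and brings $M$ into the form $\left(\begin{smallmatrix}0 & e_1^T\\ e_1 & B'\end{smallmatrix}\right)$. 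Cofactor expansion along the first row then gives $\det M=\pm\det C$, where $C$ is the lower-right $(n-2)\times(n-2)$ block of $B'$; since the remaining $1+(n-2)$ entries of $B'$ (a scalar and a vector of length $n-2$) are free, this case contributes $(q^{n-1}-1)\,q^{n-1}f(n-2)$.

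Combining the two cases yields
$$f(n) = (q-1)\,q^{n-1} f(n-1) + (q^{n-1}-1)\,q^{n-1} f(n-2),$$
with base cases $f(0)=1$ and $f(1)=q-1$. Setting $F(n):=f(n)/q^{\binom{n+1}{2}}$ normalizes this to
$$F(n) = (1-q^{-1})\,F(n-1) + q^{-1}\bigl(1-q^{-(n-1)}\bigr)F(n-2).$$
The claimed closed form is $F(n)=G_{\lceil n/2\rceil}$ where $G_k:=\prod_{i=1}^{k}(1-q^{-(2i-1)})$. The key observation is that $\lceil(2m-1)/2\rceil=\lceil 2m/2\rceil=m$, so the conjectured values satisfy $F(2m-1)=F(2m)=G_m$. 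Substituting into the recurrence, the even step reduces via $G_m=(1-q^{-(2m-1)})G_{m-1}$ and $(1-q^{-1})+q^{-1}=1$ to $G_m$, and the odd step reduces via $q^{-1}\cdot q^{-2m}=q^{-(2m+1)}$ to $(1-q^{-(2m+1)})G_m=G_{m+1}$, confirming the inductive step.

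The main obstacle I expect is the $a=0$ branch: one must be careful that the congruence by $\mathrm{diag}(1,P)$ is well-defined at the level of counting (i.e.\ the induced map on symmetric $B$'s is a bijection independent of the auxiliary choice of $P$) and that the cofactor expansion really reduces the problem to the $(n-2)\times(n-2)$ case with the claimed number of free parameters. The ceiling in the exponent $\lceil n/2\rceil$ is a mild notational nuisance that is cleanly handled by the parity-independence observation $F(2m-1)=F(2m)$.
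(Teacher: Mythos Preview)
Your argument is correct. The recurrence
\[
f(n) = (q-1)\,q^{\,n-1} f(n-1) + (q^{\,n-1}-1)\,q^{\,n-1} f(n-2)
\]
is derived cleanly: the Schur-complement step in the $a\neq 0$ case is standard, and in the $a=0$ case your congruence by $\mathrm{diag}(1,P)$ is indeed a bijection on symmetric completions for any fixed choice of $P$, so the count is independent of that choice; the cofactor expansion then gives $\det M'=-\det C$ with $b$ and $w$ genuinely free, yielding the $q^{\,n-1}f(n-2)$ factor. The normalization $F(n)=f(n)/q^{\binom{n+1}{2}}$ and the parity verification $F(2m-1)=F(2m)=G_m$ are checked correctly.

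There is nothing to compare against here: the paper does not prove this proposition at all but simply quotes it from MacWilliams \cite{McWilliams}. Your self-contained derivation via a first-row/column recursion is a standard and efficient route to this classical count, and it goes strictly beyond what the paper provides.
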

    
As a consequence, we have the following corollary:
\begin{corollary}\label{col:spec det3}
Let $\ell = 3$, if $f = det_{123,123}(\X) +f_{\emptyset, \emptyset}$, then
$$wt(f)  =
\begin{cases}
     q^6-q^5-q^3+q^2 & \makebox{ if } f_{\emptyset, \emptyset} = 0\\
     q^6-q^5+q^2 & \makebox{ if } f_{\emptyset, \emptyset} \neq 0
\end{cases}
$$
\end{corollary}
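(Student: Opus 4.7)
The plan is to treat the two cases separately, using Proposition \ref{prop:det3} together with a symmetry argument on the fiber sizes of the determinant map $\det : \HM \to \F_q$. When $f_{\emptyset,\emptyset}=0$ the argument is immediate: the support of $Ev_{\HM}(\det_{123,123}(\X))$ is exactly the set of full-rank symmetric $3\times 3$ matrices over $\F_q$, so Proposition \ref{prop:det3} with $n=3$ gives
$$
wt(f)=q^{6}\left(1-\frac{1}{q}\right)\!\left(1-\frac{1}{q^{3}}\right)=q^{6}-q^{5}-q^{3}+q^{2}.
$$

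For $f_{\emptyset,\emptyset}\neq 0$ set $\beta:=-f_{\emptyset,\emptyset}\in\F_q^{*}$ and $N(c):=\#\{S\in\HM : \det S = c\}$, so $wt(f)=q^{6}-N(\beta)$. The main step is to show that $N$ is constant on $\F_q^{*}$. For this I would exhibit two weight-preserving bijections of $\HM$: scalar multiplication $S\mapsto \lambda S$ for $\lambda\in\F_q^{*}$, which multiplies the determinant by $\lambda^{3}$; and congruence $S\mapsto P^{T}SP$ for $P\in GL_{3}(\F_q)$, which multiplies the determinant by $(\det P)^{2}$. These yield $N(c)=N(\lambda^{3}c)$ and $N(c)=N(\mu^{2}c)$ for all $\lambda,\mu\in\F_{q}^{*}$, so $N$ is constant on every coset of the subgroup $H:=\langle (\F_q^{*})^{2},(\F_q^{*})^{3}\rangle$ of $\F_q^{*}$. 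Since $\F_q^{*}$ is cyclic and $\gcd(2,3)=1$, the subgroup $H$ equals $\F_q^{*}$, so $N$ is constant on $\F_q^{*}$.

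Given this constancy, the first case yields $(q-1)N(\beta)=q^{6}-q^{5}-q^{3}+q^{2}=q^{2}(q-1)(q^{3}-1)$, hence $N(\beta)=q^{5}-q^{2}$ and $wt(f)=q^{6}-q^{5}+q^{2}$, as claimed. The only real subtlety is the verification that $H=\F_q^{*}$; I expect this to go through uniformly in the characteristic of $\F_q$ and in $\gcd(3,q-1)$ precisely because $2$ and $3$ are coprime, so no case analysis on these is needed. A more computational alternative would be to expand $\det(\X)=M_{33}Y_{3,3}+Q(Y_{1,3},Y_{2,3})$ and split on the rank of the upper-left $2\times 2$ block, using Lemma \ref{lem:hyperboliczeroes} in the rank-$2$ case and tracking how the rank-$1$ forms balance squares against non-squares in the rank-$1$ case; but the symmetry argument above avoids the characteristic split and is considerably cleaner.
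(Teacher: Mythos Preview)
Your proof is correct and follows essentially the same route as the paper: both arguments reduce Case~2 to counting symmetric $3\times 3$ matrices with a fixed nonzero determinant and obtain this count as $\dfrac{q^{6}-q^{5}-q^{3}+q^{2}}{q-1}=q^{5}-q^{2}$. The only difference is that the paper simply asserts this division, implicitly using that the nonzero values of $\det$ are equidistributed on $\HM$, whereas you actually prove that equidistribution via the subgroup $H=\langle(\F_q^{*})^{2},(\F_q^{*})^{3}\rangle=\F_q^{*}$; so your write-up fills a gap the paper leaves to the reader.
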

\begin{proof}
\textbf{Case 1: $f_{\emptyset, \emptyset} = 0$}\\
If $f_{\emptyset, \emptyset} = 0$, then the number of matrices such that $f$ evaluates to a nonzero value is exactly the amount of $3\times 3$ full rank matrices. By Proposition \ref{prop:det3}, $wt(f) = q^6-q^5-q^3+q^2.$

\textbf{Case 2: $f_{\emptyset, \emptyset} \neq 0$}\\
If $f_{\emptyset, \emptyset} \neq 0$, then the number of matrices such that $f$ evaluates to $0$ is exactly the amount of $3\times 3$ full rank matrices whose determinant is $-f_{\emptyset, \emptyset}$. Using Proposition \ref{prop:det3}, this amount is $\frac{q^6-q^5-q^3+q^2}{q-1} = q^5-q^2$. Consequently, the number of matrices which evaluate $f$ to a nonzero value is $q^6-(q^5-q^2).$ Therefore, 

$wt(f) = q^6-q^5+q^2.$
\end{proof}

We shall use symmetric translations to transform the codewords and simplify the calculation of $wt(f)$. Let us denote by $f_{a,b,c}(\X)$ the minor combination obtained by the partial evaluation of $f(\X)$ at $Y_{1,3} = a$, $Y_{2,3}= b$ and $Y_{3,3} = c$.

If q is odd and $f \in \Fl$ such that $det_{123,123}(\X) \in supp(f)$, then
Lemma \ref{lem:minorclear} implies we may assume $f$ has no $2\times 2$ minors. Recall $f_{123,123}\neq 0$ implies we may assume without loss of generality that $f_{123,123} = 1$. Thus $f$ is of the form

$$det_{123,123}(\X) + f_{1,1} X_{1,1} + f_{2,2} X_{2,2} + f_{1,2}X_{1,2}+ f_{3,3} Y_{3,3} + f_{1,3}Y_{1,3}+ f_{2,3}Y_{2,3} + f_{\emptyset, \emptyset}.$$

Note that by Corollary \ref{col:spec det3} if $f = det_{123,123}(\X)$ or $f = det_{123,123}(\X) +f_{\emptyset, \emptyset}$ then we already know the weight of $f$. Thus we assume there's at least one of the following coefficients $f_{1,1}$, $f_{1,2}$, $f_{1,3}$, $f_{2,2}$, $f_{2,3}$ or $f_{3,3}$ which is not zero. We now prove that we may perform automorphisms while preserving this structure of $f.$

\begin{lemma}\label{lem: invertible mats and dets}

Let $f(\X) \in \Fl_t$ and let $A \in GL_\ell(\F_q)$. Then $g(\X) = f(A^T \X A) \in \Fl_t$.
\end{lemma}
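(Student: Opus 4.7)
The plan is to recognize this as a direct consequence of the Cauchy--Binet formula, together with the fact that $\Fl_t$ by definition contains \emph{every} $t \times t$ minor $\det_{I,J}(\X)$ of the symmetric generic matrix (not only doset minors).

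First I would write $Y = A^T \X A$ and apply Cauchy--Binet twice. For any $I, J \subseteq [\ell]$ with $|I| = |J| = t$, viewing the product as $(A^T \X) \cdot A$ and then further expanding $A^T \X$ yields
\[
\det_{I,J}(A^T \X A) \;=\; \sum_{\substack{K,L \subseteq [\ell] \\ |K|=|L|=t}} \det_{I,L}(A^T)\,\det_{L,K}(\X)\,\det_{K,J}(A).
\]
Here $\det_{I,L}(A^T)$ and $\det_{K,J}(A)$ are constants in $\F_q$, while $\det_{L,K}(\X)$ is an element of $\Delta_t(\ell)$ (it is a $t \times t$ minor of the symmetric generic matrix, regardless of whether it is a doset minor, since by the remark after the definition of doset minors, every such minor lies in $\Fl_t$). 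Therefore each individual term lies in $\Fl_t$, and hence $\det_{I,J}(A^T \X A) \in \Fl_t$.

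Second, I would use linearity. Writing $f(\X) = \sum_{I,J} f_{I,J} \det_{I,J}(\X)$ with $|I|=|J|=t$, the substitution $\X \mapsto A^T \X A$ gives
\[
g(\X) \;=\; f(A^T \X A) \;=\; \sum_{I,J} f_{I,J}\, \det_{I,J}(A^T \X A),
\]
which by the previous paragraph is an $\F_q$-linear combination of elements of $\Delta_t(\ell)$, so $g(\X) \in \Fl_t$.

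The only mild subtlety, and the place where I would be careful, is the distinction between $\Fl_t$ (the span of \emph{all} $t \times t$ minors of the symmetric matrix $\X$) and the basis of doset minors; since Cauchy--Binet produces minors $\det_{L,K}(\X)$ with arbitrary row and column index sets, one should appeal to the defining description of $\Fl_t$ rather than to the doset basis. Once that is noted, the proof is essentially immediate and requires no further computation.
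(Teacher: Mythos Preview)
Your proof is correct and is in fact cleaner than the paper's. The paper takes a more hands-on route: it decomposes an arbitrary $A \in GL_\ell(\F_q)$ as a product of elementary matrices (row additions $L_{i,j}(a)$, transpositions $T_{i,j}$, and dilations $D_i(a)$), and for each type writes out explicitly what $\det_{I,J}(E\X)$ and $\det_{I,J}(\X E^T)$ look like as $\F_q$-linear combinations of $t\times t$ minors of $\X$; the general case then follows by composition. Your Cauchy--Binet argument bypasses this case analysis entirely and gives the result in one stroke, since $\X$ has entries in the commutative polynomial ring $\F_q[x_{i,j}:i\le j]$ and the formula applies verbatim. What the paper's approach buys is explicit transformation rules for individual elementary operations, which it uses elsewhere (e.g.\ in the spread-reduction lemmas); what your approach buys is brevity and a proof that does not depend on $A$ being invertible. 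One small quibble: your justification that every $\det_{L,K}(\X)$ lies in $\Fl_t$ is simply the \emph{definition} of $\Fl_t$ as the span of $\Delta_t(\ell)$, not a remark following the doset-minor definition; but this does not affect the validity of the argument.
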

\begin{proof}
Let $L_{i,j}(a)$ be the elementary operation of adding $a$  times row $i$ to row $j$. Then $$det_{I,J}(L_{i,j}(a)\X) =    \begin{cases} 
       det_{I,J}(\X) & \makebox { if } j \not\in I \\
       det_{I,J}(\X) +a det_{I\cup \{i\} \setminus\{j\} ,J }(\X) & \makebox { if } j \in I, i \not\in I   \\
              det_{I,J}(\X) & \makebox { if } i,j \in I\\ 
   \end{cases}.$$
   
   Then $$det_{I,J}(\X L_{i,j}(a)^T) =    \begin{cases} 
       det_{I,J}(\X) & \makebox { if } j \not\in J \\
       det_{I,J}(\X) +a det_{I,J\cup \{i\} \setminus\{j\} }(\X) & \makebox { if } j \in I, i \not\in I   \\
              det_{I,J}(\X) & \makebox { if } i,j \in I 
   \end{cases}.$$

Let $T_{i,j}$ be the elementary operation of switching row $i$ and row $j$. Then $$det_{I,J}(T_{i,j}\X) =    \begin{cases} 
       det_{I,J}(\X) & \makebox { if } i,j \not\in I \\
       det_{I\cup \{i\} \setminus\{j\} ,J }(\X) & \makebox { if } j \in I, i \not\in I \\
       det_{I\{j\} \setminus\{i\} ,J  }(\X) & \makebox { if } i \in I, j \not\in I\\
              -det_{I,J}(\X) & \makebox { if } i,j \in I\\ 
   \end{cases}.$$
Similarly
$$det_{I,J}(\X T_{i,j}) =    \begin{cases} 
       det_{I,J}(\X) & \makebox { if } i,j \not\in J \\
       det_{I ,J\cup \{i\} \setminus\{j\} }(\X) & \makebox { if } j \in J, i \not\in J \\
       det_{I ,J\{j\} \setminus\{i\}  }(\X) & \makebox { if } i \in J, j \not\in J\\
              -det_{I,J}(\X) & \makebox { if } i,j \in J\\ 
   \end{cases}.$$

Let $D_{i}(a)$ be the elementary operation of multiplying row $i$ by $a$. Then $$det_{I,J}(D_{i}(a)\X) =    \begin{cases} 
       det_{I,J}(\X) & \makebox { if } i \not\in I \\
       a det_{I,J}(\X)  & \makebox { if } i \in I\\ 
   \end{cases}.$$
   
   Similarly, $$det_{I,J}(\X D_{i}(a)) =    \begin{cases} 
       det_{I,J}(\X) & \makebox { if } i \not\in J \\
       a det_{I,J}(\X)  & \makebox { if } i \in J\\ 
   \end{cases}.$$
Any invertible matrix $A$, may be written as a product of these elementary matrices. This implies that the permutation induced by $A$ is a composition of the previous maps. As all of the maps, take $t$ minors to linear combinations of other $t$ minors, $g(\X) = f(A^T \X A) \in \Fl_t$.

\end{proof}

\begin{corollary}\label{col: full detscalar}
Let $f\in \Fl$ such that $det_{[\ell],[\ell]}(\X) \in supp(f)$ and $\forall det_{I,J}(\X) \in supp(f)$ $ \#I\neq \ell-1$. Let $A \in GL_{\ell}(\F_q)$, and $g(\X) = f(A^T\X A)$. Then  $det_{[\ell],[\ell]}(\X) \in supp(g)$ and $\forall det_{I,J}(\X) \in supp(g)$ $\# I\neq \ell-1$.
\end{corollary}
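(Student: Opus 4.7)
The plan is to leverage Lemma \ref{lem: invertible mats and dets}, which tells us that the pullback map $\Phi_A : \Fl \to \Fl$ defined by $\Phi_A(f)(\X) = f(A^T\X A)$ respects the decomposition $\Fl = \bigoplus_{t=0}^{\ell} \Fl_t$, that is, $\Phi_A$ sends $\Fl_t$ into $\Fl_t$ for every $t$. Once we have this grading, the corollary splits naturally into two independent statements: one about the size-$\ell$ piece of $f$ and one about the size-$(\ell-1)$ piece.

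First I would decompose $f = \sum_{t=0}^{\ell} f_t$ with $f_t \in \Fl_t$. The two hypotheses on $f$ translate to: $f_\ell \neq 0$ (because $det_{[\ell],[\ell]}(\X) \in supp(f)$) and $f_{\ell-1} = 0$ (because no $(\ell-1)\times(\ell-1)$ doset minor appears in $supp(f)$). Since $\Phi_A$ is linear and graded, $g = \Phi_A(f) = \sum_t \Phi_A(f_t)$ has $(\ell-1)$-part equal to $\Phi_A(f_{\ell-1}) = \Phi_A(0) = 0$. Thus no $(\ell-1)\times(\ell-1)$ doset minor appears in $supp(g)$, which is the second conclusion.

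For the first conclusion I would use the fact that $\Fl_\ell$ is one-dimensional: the only $\ell\times\ell$ submatrix of the $\ell\times\ell$ matrix $\X$ is $\X$ itself, so $\Fl_\ell = \F_q \cdot det_{[\ell],[\ell]}(\X)$. Hence $f_\ell = c \cdot det_{[\ell],[\ell]}(\X)$ for some $c \in \F_q^*$. Using multiplicativity of the determinant,
\[
det_{[\ell],[\ell]}(A^T\X A) = \det(A^T)\det(A)\cdot det_{[\ell],[\ell]}(\X) = \det(A)^2 \cdot det_{[\ell],[\ell]}(\X),
\]
so the $\Fl_\ell$ component of $g$ equals $c\det(A)^2 \cdot det_{[\ell],[\ell]}(\X)$. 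Since $A \in GL_\ell(\F_q)$ forces $\det(A) \neq 0$, this component is a nonzero multiple of $det_{[\ell],[\ell]}(\X)$, and hence $det_{[\ell],[\ell]}(\X) \in supp(g)$.

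There is no serious obstacle here; the corollary is essentially a direct consequence of Lemma \ref{lem: invertible mats and dets} combined with the one-dimensionality of $\Fl_\ell$. The only points requiring care are: (i) emphasizing that the grading by minor size is preserved by $\Phi_A$, which is exactly what Lemma \ref{lem: invertible mats and dets} provides; and (ii) the clean multiplicativity formula $\det(A^T\X A) = \det(A)^2\det(\X)$ for the top minor.
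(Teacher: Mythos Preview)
Your proof is correct and follows essentially the same approach as the paper: both arguments use Lemma \ref{lem: invertible mats and dets} to conclude that the map $f(\X)\mapsto f(A^T\X A)$ preserves the grading by minor size (so no $(\ell-1)\times(\ell-1)$ minors are introduced), and both handle the top component via the multiplicativity identity $\det(A^T\X A)=\det(A)^2\det(\X)$ together with $\det(A)\neq 0$. Your version is slightly more explicit in phrasing the argument through the decomposition $\Fl=\bigoplus_t \Fl_t$, but the content is the same.
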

\begin{proof}
   

Let $f\in \Fl$ such that $det_{[\ell],[\ell]}(\X) \in supp(f)$, $\forall det_{I,J}(\X) \in supp(f)$ $ \#I\neq \ell-1$, $A \in GL_{\ell}(\F_q)$, and $g(\X) = f(A^T\X A)$. Note $det_{[\ell],[\ell]}(\X)$ is mapped to $det_{[\ell],[\ell]}(A^T\X A) = det_{[\ell],[\ell]}(A)^2det_{[\ell],[\ell]}(\X)$. $A \in GL_{\ell}(\F_q)$ implies $det_{[\ell],[\ell]}(A) \neq 0$. Thus, $det_{[\ell],[\ell]}(\X) \in supp(g)$. For all other minors, by Lemma \ref{lem: invertible mats and dets}, we have that $det_{I,J}(\X)$ gets mapped to minors of the same size. Thus, there are no added $\ell-1 \times \ell-1$ minors. Therefore, $det_{[\ell],[\ell]}(\X) \in supp(g)$ and $\forall det_{I,J}(\X) \in supp(g)$ $\# I\neq \ell-1$.

\end{proof}

Corollary \ref{col: full detscalar} implies that we may perform automorphisms of the form $X \rightarrow A^T\X A$ keeping the condition that there are no $2\times 2$ minors in $f$. Thus without loss of generality, we may assume we have $f_{1,1} \neq 0.$

\begin{lemma}\label{lem:33classifier1}
Let $f \in \Fl$, with $q$ odd. Suppose that the maximal minor of $f$ is the full determinant $det_{123,123}(\X)$. If $f_{1,1}\neq 0$ then $$ wt(f) \geq q^6-q^5-q^4+q^3.$$
\end{lemma}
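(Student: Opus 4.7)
The plan is to view $f$ as a polynomial that is linear in $X_{1,1}$ and count its zeros on $\HM$ directly. Expanding $\det_{123,123}(\X)$ by cofactors along the first row, write
\[
f \;=\; X_{1,1}\cdot L(X_{2,2},Y_{2,3},Y_{3,3})\;+\;R(X_{1,2},X_{2,2},Y_{1,3},Y_{2,3},Y_{3,3}),
\]
where $L=X_{2,2}Y_{3,3}-Y_{2,3}^{2}+f_{1,1}$; the hypothesis $f_{1,1}\neq 0$ enters here as an additive shift in $L$. For each choice of the five remaining variables, if $L\neq 0$ there is a unique $X_{1,1}$ making $f=0$, whereas if $L=0$ we need $R=0$ and then all $q$ values of $X_{1,1}$ produce zeros. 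Consequently
\[
\#\{f=0\}\;=\;q^{5}-\#\{L=0\}\;+\;q\cdot \#\{L=0,\,R=0\}.
\]

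First I would compute $\#\{L=0\}$: since $L$ depends only on $(X_{2,2},Y_{2,3},Y_{3,3})$ and $(X_{1,2},Y_{1,3})$ are free, I split on whether $Y_{3,3}=0$. The fibre $Y_{3,3}=0$ requires $Y_{2,3}^{2}=f_{1,1}$ and contributes $2q$ or $0$ triples according to whether $f_{1,1}$ is a quadratic residue, while the fibre $Y_{3,3}\neq 0$ determines $X_{2,2}$ uniquely and contributes $q^{2}-q$ triples. Multiplying by $q^{2}$ for the free $(X_{1,2},Y_{1,3})$ yields $\#\{L=0\}=q^{4}+q^{3}$ in the QR case and $q^{4}-q^{3}$ otherwise.

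The heart of the proof is bounding $N_L := \#\{L=0,\,R=0\}$. I split again on $Y_{3,3}$. When $Y_{3,3}\neq 0$, $X_{2,2}$ is pinned by $L=0$ and $R$ becomes a quadratic polynomial in $(X_{1,2},Y_{1,3})$ whose quadratic part has matrix $\left(\begin{smallmatrix}-Y_{3,3}&Y_{2,3}\\ Y_{2,3}&-X_{2,2}\end{smallmatrix}\right)$ with determinant $Y_{3,3}X_{2,2}-Y_{2,3}^{2}=-f_{1,1}\neq 0$. Thus $R=0$ cuts out a non-degenerate affine conic: hyperbolic when $f_{1,1}$ is a QR (at most $2q-1$ affine points per $(Y_{2,3},Y_{3,3})$) and elliptic otherwise (at most $q+1$ points). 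When $Y_{3,3}=0$ (contributing only in the QR case), $R$ is linear in $(X_{1,2},X_{2,2})$ for each fixed $Y_{1,3}$, giving at most $q$ zeros unless both of its linear coefficients $f_{2,2}-Y_{1,3}^{2}$ and $2Y_{1,3}Y_{2,3}+f_{1,2}$ vanish simultaneously, which forces the special condition $f_{1,2}^{2}=4f_{1,1}f_{2,2}$.

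Assembling these estimates yields $\#\{f=0\}\leq q^{5}+q^{4}-q^{3}$ and hence $wt(f)\geq q^{6}-q^{5}-q^{4}+q^{3}$. The main obstacle is the exceptional subcase $f_{1,2}^{2}=4f_{1,1}f_{2,2}$ in the QR regime: here the $Y_{3,3}=0$ fibre can contribute an extra $q^{2}$-sized zero pattern for a single critical $Y_{1,3}$, threatening to break the count. To close the gap one exploits the remaining affine constraint $f_{1,3}Y_{1,3}^{*}+f_{2,3}Y_{2,3}^{*}+f_{\emptyset,\emptyset}=0$ on the constant term of $R$ in that fibre, and combines it with a sharper count of conic zeros in the $Y_{3,3}\neq 0$ fibre coming from Lemma \ref{lem:22classfier} (applied to the rescaled $\frac{1}{Y_{3,3}}R$); this trade-off between degenerate conic fibres and the degenerate linear fibre is what forces the exact bound $q^{6}-q^{5}-q^{4}+q^{3}$.
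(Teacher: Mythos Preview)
Your approach is genuinely different from the paper's. The paper specializes the third row and column $(Y_{1,3},Y_{2,3},Y_{3,3})=(a,b,c)$ and applies the $\ell=2$ classifier Lemma~\ref{lem:22classfier} to each resulting $2\times 2$ combination, using $f_{1,1}\neq 0$ to make the discriminant condition there a genuine quadratic in $a$. You instead fiber over the five variables other than $X_{1,1}$ and exploit the linearity of $f$ in $X_{1,1}$.

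Your degenerate case $f_{1,2}^{2}=4f_{1,1}f_{2,2}$ is a real gap, and the ``trade-off'' sketched in the last paragraph is not a proof. Worse, it cannot be closed within your counting scheme. Take $f=\det_{123,123}(\X)+f_{1,1}X_{1,1}$ with $f_{1,1}$ a nonzero square and all other coefficients zero. On every fibre $Y_{3,3}\neq 0$ with $L=0$, your conic $R=0$ is already in the worst (degenerate) regime: after completing the square it becomes $(Y_{3,3}X_{1,2}-Y_{2,3}Y_{1,3})^{2}=f_{1,1}Y_{1,3}^{2}$, two lines with $2q-1$ points, so no ``sharper count from Lemma~\ref{lem:22classfier}'' is available. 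For $Y_{3,3}=0$, each of the two values $Y_{2,3}=\pm\sqrt{f_{1,1}}$ gives $R=Y_{1,3}(2Y_{2,3}X_{1,2}-Y_{1,3}X_{2,2})$, so the critical value $Y_{1,3}=0$ contributes a full $q^{2}$. Plugging into your own exact formula,
\[
\#\{f=0\}=q^{5}-(q^{4}+q^{3})+q\bigl[(q^{2}-q)(2q-1)+2(2q^{2}-q)\bigr]=q^{5}+q^{4}-q^{2},
\]
hence $wt(f)=q^{6}-q^{5}-q^{4}+q^{2}$, strictly below the lemma's asserted $q^{6}-q^{5}-q^{4}+q^{3}$. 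Thus for this $f$ the stated inequality fails outright, and no refinement of your fibrewise counting can salvage it. The paper's argument has a parallel fragility at the same spot: it uses the \emph{upper} bound $\le 2q(q-1)$ on the number of discriminant-zero specializations in a step where only a \emph{lower} bound on that number would push the weight estimate upward. (For the paper's ultimate goal, the minimum distance, only $wt(f)\ge q^{6}-q^{5}-q^{4}$ is needed, and that weaker bound survives.)
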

\begin{proof}

Let $f$ be as in the statement of the lemma. We consider what happens when we evaluate $f$ along the third row and column. Now we shall count the number of zeroes of $f$ when specializing $Y_{3,3} = 0.$ The specialization $Y_{1,3} = a, Y_{2,3} = b, Y_{3,3} = 0$ is of the form 

$$f_{a,b,0}(\X) = \begin{bmatrix}
    X_{1,1}&X_{1,2}& a\\X_{1,2}&X_{2,2} & b\\ a&b&0
    \end{bmatrix} +    f_{1,1} X_{1,1} + f_{2,2} X_{2,2} + f_{3,3} (0) + f_{1,2}X_{1,2} +f_{1,3}a+ f_{2,3}b+ f_{\emptyset, \emptyset}.$$

Expanding the $3 \times 3$ determinant we obtain:

$$f_{a,b,0}(\X) =  abX_{1,2} -b^{2}X_{1,1} -a^{2}X_{2,2} + abX_{1,2} + f_{1,1} X_{1,1} + f_{2,2} X_{2,2} + f_{1,2}X_{1,2} + f_{1,3}a+ f_{2,3}b + f_{\emptyset, \emptyset}.$$
Collecting like terms on $X_{1,1}, X_{1,2}, X_{2,2}$ we obtain
$$f_{a,b,0}(\X) =  (f_{1,1}-b^2)X_{1,1} + (f_{1,2}+2ab)X_{1,2} + (f_{2,2}-a^{2})X_{2,2} + f_{1,3}a+ f_{2,3}b + f_{\emptyset, \emptyset}.$$

The resulting polynomial is linear in $X_{1,1}$, $X_{1,2}$ and $X_{2,2}$. Note that the coefficient of the $(1,1)$--minor of the partial specialization $f_{a,b,0}(\X)$ is $f_{1,1} - b^{2}$,  the coefficient of the $(2,2)$--minor is $f_{2,2} - a^{2}$, and the coefficient of the $(1,2)$ minor is $2ab+f_{1,2}$. Lemma \ref{lem:systemsols} implies there are at most $2$ partial specializations such that all three coefficients are $0$. Therefore for the remaining $q^2-2$ specializations we get a nonzero polynomial with at least $q^3-q^2$ nonzeroes.

Suppose now that $f_{a,b,c}(\X)$ is the partial evaluation of $f$ with $Y_{1,3} = a$, $Y_{2,3} = b$ and $Y_{3,3} = c \neq 0.$ Then $f_{a,b,c}(\X)$ expands to:

$$c(X_{1,1}X_{2,2} - X_{1,2}^{2}) + abX_{1,2} -b^{2}X_{1,1} -a^{2}X_{2,2} + abX_{1,2} + f_{1,1} X_{1,1} + f_{2,2} X_{2,2}+ f_{1,2}X_{1,2} + f_{3,3}c + f_{1,3}a+ f_{2,3}b+ f_{\emptyset, \emptyset}.$$

Collecting like terms we obtain:

$$ c(X_{1,1}X_{2,2} - X_{1,2}^{2}) +(f_{1,1} -b^{2})X_{1,1} +(f_{1,2}+ 2ab)X_{1,2} + (f_{2,2}-a^2) X_{2,2}+ f_{1,2}X_{1,2} + f_{3,3}c + f_{1,3}a+ f_{2,3}b+ f_{\emptyset, \emptyset}.$$

Now we shall apply Lemma \ref{lem:22classfier} to determine the weight of each partial evaluation. Lemma \ref{lem:22classfier} implies that if the coefficients of $f_{a,b,c}(\X)$ satisfy $$\left(\frac{1}{2}\frac{2ab+f_{1,2}}{c}\right)^{2} - \frac{f_{1,1}-b^{2}}{c} \frac{f_{2,2}-a^{2}}{c} + \frac{f_{3,3}c + f_{1,3}a+ f_{2,3}b+ f_{\emptyset, \emptyset}}{c} = 0,  $$
    
then the partial evaluation has weight $q^3-q^2$.

and otherwise if $$\left(\frac{1}{2}\frac{2ab+f_{1,2}}{c}\right)^{2} - \frac{f_{1,1}-b^{2}}{c} \frac{f_{2,2}-a^{2}}{c} + \frac{f_{3,3}c + f_{1,3}a+ f_{2,3}b+ f_{\emptyset, \emptyset}}{c} \neq  0$$ then the partial evaluation has weight of at least $q^3-q^2-q$.

Now we count the number of values of $a$ and $b$ such that the partial evaluation has weight either $q^3-q^2-q$ or $q^3-q^2$. After evaluating the parenthesis, and multiplying by $c^2$ we obtain

$$abf_{1,2}+\frac{1}{4}f_{1,2}-f_{1,1}a^2 - f_{2,2}b^2 -f_{1,1}f_{2,2} +acf_{1,3}+bcf_{2,3}+ c^2f_{3,3} + cf_{\emptyset, \emptyset} = 0.$$

Note that for any of the $q-1$ values of $c$ and the $q$ values of $b$, this expression becomes a polynomial over $a$ of degree $2$. This implies there's at most $(q-1)2q = 2q^2-2q$ values such that $f_{a,b,c}(\X)$ has weight $q^3-q^2$. Consequently, there's $(q-1)(q)(q-2) = q^3-3q^2+2q$ values such that $f_{a,b,c}(\X)$ has weight $q^3-q^2-q$. Altogether:

$$wt(f) \geq (q^3-3q^2+2q)(q^3-q^2-q)+(2q^2-2q)(q^3-q^2) + (q^2-2)(q^3-q^2) $$
$$wt(f) \geq q^6-q^5-q^4+q^3 $$
\end{proof}

\begin{lemma}\label{lem:33classifiereven}
Let $f \in \Fl$, with $q$ even. Suppose that the maximal minor of $f$ is the full determinant $det_{123,123}(\X)$. Then $$ wt(f) \geq q^6-q^5-2q^3+3q^2  $$
\end{lemma}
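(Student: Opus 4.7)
The plan is to emulate the odd-characteristic argument of Lemma \ref{lem:33classifier1}, with three modifications forced by characteristic two. First, since Lemma \ref{lem:minorclear} requires $q$ odd (its coefficients involve $\tfrac12$), we cannot clear all $2\times 2$ minors from the support of $f$ in one stroke. The diagonal translation of Lemma \ref{lem:principalsubdeterminant} still permits us to force $f_{12,12}=f_{13,13}=f_{23,23}=0$, but Lemma \ref{lem:nonprincipalsubdeterminant} is vacuous in characteristic two (the $2(-1)^{a+b}\gamma_{a,b}$ term vanishes), so the three non-principal coefficients $f_{12,13}, f_{12,23}, f_{13,23}$ must be carried through. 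Second, the cross term $2abX_{1,2}$ that appears when expanding $det_{123,123}(\X)$ along the third row and column vanishes in characteristic two. Third, the classification Lemma \ref{lem:22classfier} (odd) is replaced by Lemma \ref{lem:22classfiereven} (even), whose criterion depends solely on whether the coefficient of $X_{1,2}$ vanishes.

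After normalizing $f_{123,123}=1$ and clearing the principal $2\times 2$ coefficients, specialize $Y_{1,3}=a$, $Y_{2,3}=b$, $Y_{3,3}=c$. Expanding, $f_{a,b,c}(\X)$ is a polynomial in $X_{1,1}, X_{1,2}, X_{2,2}$ whose coefficient of $X_{1,1}X_{2,2}+X_{1,2}^2$ is $c$, whose coefficients of $X_{1,1}$ and $X_{2,2}$ are the quadratics $b^2+f_{12,13}b+f_{1,1}$ and $a^2+f_{12,23}a+f_{2,2}$, and whose coefficient of $X_{1,2}$ is the affine form $L(a,b,c):=f_{1,2}+f_{12,13}a+f_{12,23}b+f_{13,23}c$; the constant term is $f_{13,23}ab+f_{3,3}c+f_{1,3}a+f_{2,3}b+f_{\emptyset,\emptyset}$.

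I would then handle the specializations as follows. For $c=0$ the polynomial is affine-linear in $X_{1,1},X_{1,2},X_{2,2}$; its three variable-coefficients simultaneously vanish only when $b^2+f_{12,13}b+f_{1,1}=0$, $a^2+f_{12,23}a+f_{2,2}=0$, and $L(a,b,0)=0$. Each quadratic condition has at most $2$ roots in $\F_q$ (and exactly $1$ when the linear term is absent, since squaring is a bijection on $\F_{2^k}$), so at most $4$ pairs $(a,b)$ satisfy all three; for the remaining $\ge q^2-4$ pairs, the partial evaluation is a nonzero affine-linear polynomial contributing weight $\ge q^3-q^2$. For $c\ne 0$, divide $f_{a,b,c}$ by $c$ and apply Lemma \ref{lem:22classfiereven}: the weight is $\ge q^3-q^2$ when $L(a,b,c)=0$, and $\ge q^3-q^2-q$ otherwise. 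Since $L$ is affine-linear in $(a,b)$ for fixed $c$, the number of pairs with $L=0$ is either $0$, $q$, or $q^2$ (the last only when $L$ collapses to a constant, namely $f_{12,13}=f_{12,23}=0$ together with a suitable value of $c$).

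Summing these partial weights over the $q^3$ triples $(a,b,c)$ and verifying the finitely many degenerate configurations (determined by which of $f_{12,13}, f_{12,23}, f_{13,23}, f_{1,2}$ vanish) is expected to yield the bound $wt(f)\ge q^6-q^5-2q^3+3q^2$. The principal obstacle, absent in the odd-characteristic proof, is precisely this case analysis for the three non-principal $2\times 2$ coefficients: each vanishing pattern of $f_{12,13}, f_{12,23}, f_{13,23}$ produces a different count of triples in which Lemma \ref{lem:22classfiereven} yields only the weaker bound $q^3-q^2-q$, and one must verify that the worst such configuration still meets the claimed bound.
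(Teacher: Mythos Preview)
Your approach is essentially the paper's. The paper does not perform your diagonal translation to clear the principal $2\times 2$ coefficients; it keeps all of $f_{12,12},f_{12,13},f_{12,23},f_{13,13},f_{23,23}$ and simply splits the specializations according to whether $c+f_{12,12}=0$ (this becomes your $c=0$ after the translation you propose). For $c+f_{12,12}\neq 0$ it applies Lemma~\ref{lem:22classfiereven}; for $c+f_{12,12}=0$ it notes that the $X_{1,1}$- and $X_{2,2}$-coefficients are quadratics in $b$ and $a$ respectively, each with at most two roots, so at most $4$ pairs $(a,b)$ kill the linear part --- exactly your Case $c=0$ count. The chief difference is that the paper does \emph{not} carry out the case analysis on the non-principal coefficients $f_{12,13},f_{12,23},f_{13,23}$ that you flag as the principal obstacle: it asserts that the $X_{1,2}$-coefficient is ``at worst linear in $a$ or $b$'' and uses a single worst-case count, namely $(q-1)q(q-1)$ triples with the weak bound $q^3-q^2-q$ and $(q-1)q$ triples with the strong bound $q^3-q^2$, then sums. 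So your preprocessing step is a harmless simplification, while your anticipated endgame case analysis is precisely what the paper sidesteps.
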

\begin{proof}
Without loss of generality, we may assume $f_{123,123} = 1$. This implies $f$ is of the form:
$$
    det_{123,123}(\X)+ f_{\emptyset, \emptyset} +f_{12,12}(X_{1,1}X_{2,2}+X_{1,2}^2)+ f_{12,13}(X_{1,1}Y_{2,3}+X_{1,2}Y_{1,3})+ f_{12,23}(X_{1,2}Y_{2,3}+X_{2,2}Y_{1,3})
$$
$$
    +f_{13,13}(X_{1,1}Y_{3,3}+Y_{1,3}^2)+
    f_{23,23}(X_{2,2}Y_{3,3}+Y_{2,3}^2)+ f_{1,1} X_{1,1} + f_{2,2} X_{2,2} + f_{1,2}X_{1,2} + f_{3,3} Y_{3,3} + f_{1,3}Y_{1,3}+ f_{2,3}Y_{2,3} .
$$
We shall consider what happens when we evaluate $f$ along the third row and column. That is, we look at specializations from $Y_{1,3} = a$, $Y_{2,3} = b$ and $Y_{3,3} = c$. We shall split this analysis into two cases.\\
\textbf{Case 1: $c \neq f_{12,12}$}\\
    This implies we have a linear combination with the principal $2 \times  2$ determinant $X_{1,1}X_{2,2}+X_{1,2}^2$. In this case the coefficients of $f_{a,b,c}(\X)$ as a polynomial on $X_{1,1}, X_{1,2}, X_{2,2}$ are:
\begin{itemize}
    \item Coefficient of $X_{1,1}$: $f_{1,1} + f_{13,13}c + f_{12,13}b$
    \item Coefficient of $X_{1,2}$: $f_{1,2} + (c+f_{12,12}) + (f_{12,13}+1)a +(f_{12,23}+1)b $
    \item Coefficient of $X_{1,2}^2$: $(c+f_{12,12}) \neq 0$
    \item Coefficient of $X_{2,2}$: $f_{2,2} + f_{23,23}c + f_{12,23}a$
\end{itemize}    
We shall assume $X_{1,1}$ and $X_{2,2}$ take any fixed value and bound the number of zeroes of the resulting quadratic polynomial on $X_{1,2}$. The resulting quadratic polynomial is of the form $\alpha + \beta X_{1,2} +X_{1,2}^2$ where $\beta$ is the coefficient as described above. Note the coefficient of $X_{1,2}$ is at worst linear in $a$ or $b$. Thus there are at most $(q-1)(q)(q-1) = q^3-2q^2+q$ values such that the coefficient is nonzero. By Lemma \ref{lem:22classfiereven}, these evaluations have weight at least $q^3-q^2-q$. Consequently, we have $(q-1)(q)(q) = q^2-q$ values such that the coefficient is $0$. By Lemma \ref{lem:22classfiereven}, these have weight $q^3-q^2$.\\

\textbf{Case 2: $c = f_{12,12}$}\\
In this case, the coefficients look as such:
\begin{itemize}
    \item Coefficient of $X_{1,1}$: $f_{1,1} + f_{13,13}c + f_{12,13}b + b^2$
    \item Coefficient of $X_{1,2}$: $f_{1,2} + (c+f_{12,12})$
    \item Coefficient of $X_{2,2}$: $f_{2,2} + f_{23,23}c + f_{12,23}a + a^2$
\end{itemize}
 At worst, the coefficients of $X_{1,1}$ and $X_{2,2}$ are quadratics with $2$ solutions over $b$ and $a$ respectively. Altogether, there's at most $4$ values such that both coefficients of $X_{1,1}$ and $X_{2,2}$ vanish. Thus there are at least $q^2-4$ values such that $f$ has weight $q^3-q^2$.\\
$wt(f) \geq (q^3-2q^2+q)(q^3-q^2-q)+(q^2-q)(q^3-q^2)+(q^2-4)(q^3-q^2)$\\
$wt(f) \geq q^6-q^5-2q^3+3q^2$

\end{proof}

As a consequence of the previous lemmas, we have the following:
\begin{lemma}\label{lem:33classifier}
Let $f \in \Fl$. Suppose that the maximal minor of $f$ is the full determinant $det_{123,123}(\X)$. Then 
$$ wt(f) \geq q^6-q^5-q^4+q^3  $$
\end{lemma}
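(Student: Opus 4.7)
The plan is to combine the lemmas just proven into a single uniform bound by splitting on the parity of $q$ and exhausting the possible shapes of the support of $f$.

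First I would handle the even characteristic case in a single line: Lemma~\ref{lem:33classifiereven} already gives $wt(f) \geq q^{6}-q^{5}-2q^{3}+3q^{2}$, and the elementary inequality
$$q^{6}-q^{5}-2q^{3}+3q^{2} - (q^{6}-q^{5}-q^{4}+q^{3}) = q^{2}(q^{2}-3q+3) \geq 0$$
for every prime power $q \geq 2$ (since $q^{2}-3q+3 \geq 1$ at $q=2$ and is increasing thereafter). So the claim follows in the even case.

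For $q$ odd I would first normalize $f$. Since $det_{123,123}(\X)$ is a maximal minor of $f$ we may rescale so that $f_{123,123}=1$, and Lemma~\ref{lem:minorclear} lets us translate by a suitable symmetric matrix so that no $2\times 2$ minor appears in $supp(f)$. After this reduction $f$ has the form
$$det_{123,123}(\X) + f_{1,1}X_{1,1}+f_{1,2}X_{1,2}+f_{2,2}X_{2,2}+f_{1,3}Y_{1,3}+f_{2,3}Y_{2,3}+f_{3,3}Y_{3,3}+f_{\emptyset,\emptyset}.$$

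Next I would split on whether any linear coefficient is nonzero. If all of $f_{1,1},f_{1,2},f_{2,2},f_{1,3},f_{2,3},f_{3,3}$ vanish, then $f = det_{123,123}(\X)+f_{\emptyset,\emptyset}$ and Corollary~\ref{col:spec det3} yields $wt(f)\geq q^{6}-q^{5}-q^{3}+q^{2}$, which exceeds $q^{6}-q^{5}-q^{4}+q^{3}$ because $q^{4}-2q^{3}+q^{2}=q^{2}(q-1)^{2}\geq 0$. Otherwise some linear coefficient is nonzero; I would use the automorphism $\X\mapsto A^{T}\X A$ with $A\in GL_{3}(\F_{q})$ to reduce without loss of generality to the case $f_{1,1}\neq 0$. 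Corollary~\ref{col: full detscalar} guarantees that this action preserves both the maximality of $det_{123,123}(\X)$ and the absence of $2\times 2$ minors, so the normal form we already achieved is not destroyed. Lemma~\ref{lem:33classifier1} then gives the desired bound $wt(f)\geq q^{6}-q^{5}-q^{4}+q^{3}$ directly.

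The main potential obstacle is the automorphism step: one must argue that when some linear term is nonzero, the $GL_{3}(\F_{q})$-congruence action on symmetric matrices can move a nonzero linear combination $\sum f_{i,j}X_{i,j}$ (viewed as a symmetric bilinear form) to one with nonzero $(1,1)$-entry. This is exactly the statement that every nonzero symmetric matrix over $\F_{q}$ (odd $q$) has some nonzero diagonal entry after a suitable congruence --- equivalently, not every nonzero symmetric form vanishes on all of $e_{i}$; over odd characteristic this is immediate from polarization. Once this observation is in place, assembling the four subcases into the single bound $q^{6}-q^{5}-q^{4}+q^{3}$ is mechanical.
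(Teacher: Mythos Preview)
Your proposal is correct and follows essentially the same route as the paper: the paper presents Lemma~\ref{lem:33classifier} as an immediate consequence of the surrounding results, having already set up (in the text preceding Lemma~\ref{lem:33classifier1}) the reduction to $f_{123,123}=1$, the elimination of $2\times 2$ minors via Lemma~\ref{lem:minorclear}, the case $f = det_{123,123}(\X)+f_{\emptyset,\emptyset}$ via Corollary~\ref{col:spec det3}, and the ``without loss of generality $f_{1,1}\neq 0$'' step via Corollary~\ref{col: full detscalar}. Your polarization argument for why the congruence action can force $f_{1,1}\neq 0$ in odd characteristic is in fact more explicit than what the paper writes, which simply asserts this without justification.
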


\section{Finding $d(\CH)$ for $\ell \geq 4$}

\subsection{Finding $w_{\ell,k}$ with mathematical induction}
Having determined $w_{2,2}$, $w_{3,2}$ and $w_{3,3}$ (the base cases for $2\leq \ell \leq 3$), we shall now calculate $w_{\ell,k}$ for general $\ell$ in a similar manner to \cite{gonzalez2021affine}.

\begin{lemma}\label{lem: induction}
$w_{k,k}\geq q^{\frac{k^2+k}{2}}-q^{\frac{k^2+k}{2}-1}-q^{\frac{k^2+k}{2}-2}+q^{\frac{k^2-k}{2}}-q$
\end{lemma}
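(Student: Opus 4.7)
The plan is to argue by strong induction on $k$. The base cases $k = 2$ and $k = 3$ are immediate: at $k = 2$ the formula evaluates to $q^3 - q^2 - q$, which is exactly Lemma~\ref{lem:wt22}, while at $k = 3$ it evaluates to $q^6 - q^5 - q^4 + q^3 - q$, which is weaker than the bound $q^6 - q^5 - q^4 + q^3$ established in Lemma~\ref{lem:33classifier}.

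For the inductive step, I assume the bound for all $2 \leq k \leq K$ and take $f \in \mathcal{F}_{\X}(K+1)$ with a maximal minor of size $K+1$. By Corollary~\ref{cor:spread s to k} I may assume $\det_{[K+1],[K+1]}(\X) \in \mathrm{supp}(f)$ with coefficient $1$. The main step is to specialize the $(K+1)$-th row and column: for each tuple $(y_1,\ldots,y_{K+1}) \in \F_q^{K+1}$, set $Y_{i,K+1} = y_i$ for $i \leq K$ and $Y_{K+1,K+1} = y_{K+1}$, and denote the resulting function by $f_y$ on the remaining $K \times K$ symmetric block. Expanding $\det_{[K+1],[K+1]}$ along the last row/column shows that the coefficient of $\det_{[K],[K]}(\X')$ in $f_y$ equals $y_{K+1} + f_{[K],[K]}$, while the coefficient of each $(K-1)\times(K-1)$ doset minor $\det_{[K]\setminus\{j\},[K]\setminus\{i\}}(\X')$ is a quadratic expression in the $y_i$'s of the form $-y_i^2 + f_{[K]\setminus\{i\},[K]\setminus\{i\}}$ (when $i=j$) or $\pm 2 y_i y_j + f_{[K]\setminus\{j\},[K]\setminus\{i\}}$ (when $i<j$).

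I then split into two cases. When $y_{K+1} \neq -f_{[K],[K]}$, which accounts for $q^{K+1} - q^K$ tuples, the $K \times K$ coefficient is nonzero and $f_y$ has a maximal $K \times K$ minor, so by induction $\mathrm{wt}(f_y) \geq w_{K,K}$. When $y_{K+1} = -f_{[K],[K]}$ (the remaining $q^K$ tuples), the $K \times K$ coefficient vanishes; by Lemma~\ref{lem:systemsols} at most two choices of $(y_1,\ldots,y_K)$ can simultaneously annihilate all the $(K-1)\times(K-1)$ coefficients, so for at least $q^K - 2$ such specializations $f_y$ has a maximal $(K-1)\times(K-1)$ minor and, by Corollary~\ref{cor:weight bound} combined with the inductive hypothesis, weight at least $q^K w_{K-1,K-1}$. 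Summing the contributions gives
$$\mathrm{wt}(f) \;\geq\; (q^{K+1} - q^K)\, w_{K,K} \;+\; (q^{2K} - 2q^K)\, w_{K-1,K-1}.$$

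The final step is to substitute the inductive bounds for $w_{K,K}$ and $w_{K-1,K-1}$ into this inequality and verify that the resulting polynomial in $q$ dominates the target $q^{(K+1)(K+2)/2} - q^{(K+1)(K+2)/2 - 1} - q^{(K+1)(K+2)/2 - 2} + q^{K(K+1)/2} - q$. I expect this algebraic verification to be the main obstacle: the three leading terms line up correctly, but tracking the lower-order cancellations requires care, and for small values of $K$ and $q$ one likely has to invoke the sharper base bound $w_{3,3} \geq q^6 - q^5 - q^4 + q^3$ from Lemma~\ref{lem:33classifier} in place of the weaker value of the formula at $k=3$ in order for the inductive substitution to close.
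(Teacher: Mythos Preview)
Your proposal matches the paper's proof essentially step for step: the same base cases via Lemma~\ref{lem:wt22} and Lemma~\ref{lem:33classifier}, the same specialization along the $(K+1)$-st row/column, the same dichotomy on whether $y_{K+1} = -f_{[K],[K]}$, the same appeal to Lemma~\ref{lem:systemsols} to show at most two specializations kill every $(K-1)\times(K-1)$ coefficient, and the identical recurrence $w_{K+1,K+1} \geq (q^{K+1}-q^K)w_{K,K} + (q^{2K}-2q^K)w_{K-1,K-1}$. One small inaccuracy: the constants in your $(K-1)\times(K-1)$ coefficients omit the contribution from the $K\times K$ minors $\det_{[K+1]\setminus\{i\},[K+1]\setminus\{j\}}$ in $\mathrm{supp}(f)$ (the paper's equation includes the extra term $f_{[K+1]\setminus\{i\},[K+1]\setminus\{j\}}$), but this is harmless since Lemma~\ref{lem:systemsols} applies for arbitrary right-hand constants. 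As for your final worry, the paper does not invoke the sharper $w_{3,3}$ bound in the inductive substitution; it simply plugs in the formula values for both $w_{K,K}$ and $w_{K-1,K-1}$ and carries out the polynomial comparison directly.
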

\begin{proof}
In Lemma \ref{lem:22classfier} and Lemma \ref{lem:33classifier}, the base cases for $k = 2$ and $k = 3$ are established. We shall now assume the statement of the lemma is true for $2\leq k \leq K$. We shall now prove that the statement holds for $K+1$. Without loss of generality we may assume $f$ has a maximal $K+1 \times K+1$ principal minor in its support.

As we did in the argument for the base cases, we specialize $det(X)$ along the $(K+1)-th$ column. Note there are exactly $(q^{(K+1)}-q^{(K+1)-1})$ values for the $(K+1)-th$ column such that $x_{K+1,K+1} \neq -f_{[K],[K]}$. This leaves us with a non trivial combination in the $K$ case and with an $K \times K$ minor in its support. Therefore, there are $(q^{K+1}-q^{K})$ values for the specialization of the $(K+1)$-th row and column where we specialize into the case $w_{K,K}$.

We shall now consider the specialization where we do not obtain such a $K \times K$ maximal minor. This is the exclusive case where $x_{K+1,K+1} = -f_{[K],[K]}$. 
Hence all remaining $K\times K$ minors specialize into $K-1 \times K-1$ minors.
Now we consider the possibilities for the $K-1 \times K-1$ minors. 

Note that all such minors are of the form  $\mathcal{M}_{[K+1]-\{i,K+1\},[K+1]-\{j,K+1\}}$. Note that if the minor given by ${[K+1]-\{i,K+1\},[K+1]-\{j,K+1\}}$ does not appear in the partial evaluation, then the coefficients of the partial evaluation must satisfy the equation 
$$f_{[K+1]-\{i,K+1\},[K+1]-\{j,K+1\}}+f_{[K+1]-\{i\},[K+1]-\{j\}} = x_{i,K+1} {x_{j,K+1}}.$$
Lemma \ref{lem:systemsols} implies that the system of polynomial equations as stated above has at most $2$ solutions. 
Thus there are at least $q^{(K+1)-1}-2$ values for the partial evaluation on the $K+1$-th column such that we have a non-trivial combination with a $K-1 \times K-1$ minor in its support. These specializations are of weight at least $$q^{K}(w_{K-1,K-1}).$$
We put together all the inequalities and we obtain:

$$w_{K+1, K+1} \geq $$ 
\begin{tabular}{ll}

  & $(q^{K+1}-q^{K})w_{K,K}+ (q^{K}-2)q^{K}w_{K-1,K-1}$ \\
%
    $=$   & $(q^{K+1}-q^{K})w_{K,K}+ (q^{2K}-2q^{K})w_{K-1,K-1}$ \\


 $\geq$ &$(q^{K+1}-q^{K})(q^{\frac{K^2+K}{2}}-q^{\frac{K^2+K}{2}-1}-q^{\frac{K^2+K}{2}-2}+q^{\frac{K^2-K}{2}}-q)+$ \\
  &  $(q^{2K}-2q^{K})(q^{\frac{K^2-K}{2}}-q^{\frac{K^2-K}{2}-1}-q^{\frac{K^2-K}{2}-2}+q^{\frac{K^2-3K+2}{2}}-q)$\\
  $\geq$ &$q^{\frac{K^2+3K+2}{2}}-q^{\frac{K^2+3K+2}{2}-1}-q^{\frac{K^2+3K+2}{2}-2}+2q^{\frac{K^2+K+2}{2}}-2q^{\frac{K^2+K}{2}}+2q^{\frac{K^2+K}{2}-1}+2q^{\frac{K^2+K}{2}-2}-$\\
  & $2q^{\frac{K^2-K+2}{2}}-q^{\frac{K^2-K}{2}}-q^{2K+1}-q^{K+2}+3q^{K+1}$\\


  $\geq$ &$q^{\frac{K^2+3K+2}{2}}-q^{\frac{K^2+3K+2}{2}-1}-q^{\frac{K^2+3K+2}{2}-2}+q^{\frac{K^2-3K}{2}}-q$\\

$\geq$ &$q^{\frac{(K+1)^2+(K+1)}{2}}-q^{\frac{(K+1)^2+(K+1)}{2}-1}-q^{\frac{(K+1)^2+(K+1)}{2}-2}+q^{\frac{(K+1)^2-(K+1)}{2}}-q$\\

\end{tabular}\\
Therefore, by the principle of strong mathematical induction, the bound is met.
\end{proof}

\begin{proposition}\label{prop: mindist}
$w_{k,k}\geq q^{\frac{k^2+k}{2}}-q^{\frac{k^2+k}{2}-1}-q^{\frac{k^2+k}{2}-2}$
\end{proposition}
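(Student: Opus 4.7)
The plan is to deduce Proposition \ref{prop: mindist} directly as a corollary of Lemma \ref{lem: induction}, which has already established the sharper bound
$$w_{k,k}\geq q^{\frac{k^2+k}{2}}-q^{\frac{k^2+k}{2}-1}-q^{\frac{k^2+k}{2}-2}+q^{\frac{k^2-k}{2}}-q.$$
So all that remains is to verify that the tail $q^{\frac{k^2-k}{2}}-q$ is nonnegative, which then lets us drop it to obtain the stated (weaker) inequality.

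First I would note that we only need to consider $k\geq 2$, since the case $k=1$ reduces to a generalized Reed--Muller code and is excluded in the paper's convention. For $k\geq 2$ we have $\frac{k^2-k}{2}=\frac{k(k-1)}{2}\geq 1$, so $q^{\frac{k^2-k}{2}}\geq q^1=q$, and hence $q^{\frac{k^2-k}{2}}-q\geq 0$. Chaining this with the bound from Lemma \ref{lem: induction} gives
$$w_{k,k}\geq q^{\frac{k^2+k}{2}}-q^{\frac{k^2+k}{2}-1}-q^{\frac{k^2+k}{2}-2}+\bigl(q^{\frac{k^2-k}{2}}-q\bigr)\geq q^{\frac{k^2+k}{2}}-q^{\frac{k^2+k}{2}-1}-q^{\frac{k^2+k}{2}-2},$$
as required.

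There is essentially no obstacle here: the real work was already done in Lemma \ref{lem: induction}, and the point of the Proposition is just to record the cleaner leading-order expression that will be convenient for the minimum distance statement. The only thing worth double-checking is the base case consistency, namely that for $k=2$ we indeed recover $q^3-q^2-q$ from Lemma \ref{lem:wt22} and that the $k=3$ case recovers $q^6-q^5-q^4$ from Lemma \ref{lem:33classifier}, both of which are at least the claimed bound.
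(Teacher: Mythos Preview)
Your proposal is correct and follows essentially the same approach as the paper: invoke Lemma \ref{lem: induction} and then drop the nonnegative tail $q^{\frac{k^2-k}{2}}-q$ for $k\geq 2$. The paper's proof is nearly identical, even noting that equality holds at $k=2$.
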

\begin{proof}
By Lemma \ref{lem: induction}, we have $w_{k,k}\geq q^{\frac{k^2+k}{2}}-q^{\frac{k^2+k}{2}-1}-q^{\frac{k^2+k}{2}-2}+q^{\frac{k^2-k}{2}}-q$.
Note that for $k\geq 2$ we have $ q^{\frac{k^2+k}{2}}-q^{\frac{k^2+k}{2}-1}-q^{\frac{k^2+k}{2}-2}+q^{\frac{k^2-k}{2}}-q \geq q^{\frac{k^2+k}{2}}-q^{\frac{k^2+k}{2}-1}-q^{\frac{k^2+k}{2}-2}$. In fact, the equality is met for $k = 2$.
Therefore:
$$w_{k,k}\geq q^{\frac{k^2+k}{2}}-q^{\frac{k^2+k}{2}-1}-q^{\frac{k^2+k}{2}-2}$$
\end{proof}

Now we are finally ready to prove the main result of this paper
\begin{theorem}
Suppose that $\ell \geq 2$. Then $$d(\CH)  = q^{\frac{\ell^2+\ell}{2}}-q^{\frac{\ell^2+\ell}{2}-1}-q^{\frac{\ell^2+\ell}{2}-2}.$$ 
\end{theorem}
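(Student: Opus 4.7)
The plan is to obtain $d(\CH) = q^{\frac{\ell^2+\ell}{2}} - q^{\frac{\ell^2+\ell}{2}-1} - q^{\frac{\ell^2+\ell}{2}-2}$ by combining the inductive lower bound of Proposition \ref{prop: mindist} with the spread-reduction result of Corollary \ref{cor:weight bound}, and then matching this bound by exhibiting an explicit minimum weight codeword.

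For the lower bound, I would take any nonzero $f \in \Fl$ and let $k$ be the size of a maximal minor in the support of $f$. When $k \geq 2$, Corollary \ref{cor:weight bound} combined with Proposition \ref{prop: mindist} gives
\[
wt(f) \;\geq\; q^{\frac{\ell^2+\ell-k^2-k}{2}}\, w_{k,k} \;\geq\; q^{\frac{\ell^2+\ell}{2}} - q^{\frac{\ell^2+\ell}{2}-1} - q^{\frac{\ell^2+\ell}{2}-2},
\]
where the factor $q^{(\ell^2+\ell-k^2-k)/2}$ precisely absorbs the dominant terms of $w_{k,k}$, so the same bound holds uniformly in $k \geq 2$. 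The trivial residual cases are easy to dispatch: a nonzero constant ($k=0$) has weight $q^{(\ell^2+\ell)/2}$, while a nonzero affine function in the independent entries $X_{i,j}$, $i\leq j$ ($k=1$), has weight at least $q^{(\ell^2+\ell)/2} - q^{(\ell^2+\ell)/2-1}$. Both comfortably exceed the claimed distance.

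For the upper bound, I would exhibit the codeword $f(\X) = det_{12,12}(\X) + 1 = X_{1,1}X_{2,2} - X_{1,2}^{2} + 1$ when $q$ is odd, and $f(\X) = det_{12,12}(\X) + X_{1,2}$ when $q$ is even. Since $f$ only involves the three entries $X_{1,1}, X_{1,2}, X_{2,2}$, its zero set in $\HM$ factors as the zero set on symmetric $2 \times 2$ matrices multiplied by the $q^{(\ell^2+\ell)/2 - 3}$ arbitrary choices for the remaining entries. A direct case split on the value of $X_{1,2}$, applying Lemma \ref{lem:hyperboliczeroes} to count solutions of $X_{1,1}X_{2,2} = c$ in each case, shows that $f$ vanishes on exactly $q^2 + q$ symmetric $2\times 2$ matrices, yielding
\[
wt(f) = q^{\frac{\ell^2+\ell}{2}} - (q^2+q)\, q^{\frac{\ell^2+\ell}{2}-3} = q^{\frac{\ell^2+\ell}{2}} - q^{\frac{\ell^2+\ell}{2}-1} - q^{\frac{\ell^2+\ell}{2}-2}.
\]

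The bulk of the difficulty has already been absorbed into Proposition \ref{prop: mindist} and its base cases $w_{2,2}$ and $w_{3,3}$, so beyond those the proof is essentially bookkeeping. The one genuine subtlety is the characteristic $2$ case: $det_{12,12}(\X) + 1$ does not attain the bound there, because its $X_{1,2}$ coefficient is zero and the quadratic in $X_{1,2}$ then has a unique root per Lemma \ref{lem:22classfiereven}; this forces the choice of a codeword with a nonzero linear $X_{1,2}$ term, as above.
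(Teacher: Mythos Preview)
Your proof is correct and follows essentially the same route as the paper: combine Corollary \ref{cor:weight bound} with Proposition \ref{prop: mindist} for the lower bound, then exhibit an explicit weight-$(q^{\frac{\ell^2+\ell}{2}}-q^{\frac{\ell^2+\ell}{2}-1}-q^{\frac{\ell^2+\ell}{2}-2})$ codeword for the upper bound. Two minor differences are worth noting. First, your explicit treatment of the cases $k=0$ and $k=1$ is a small gain in rigor, since Proposition \ref{prop: mindist} is only established for $k\geq 2$; the paper's proof silently skips these, though they are indeed trivial. Second, for the upper bound the paper uses the single codeword $f = det_{12,12}(\X) + det_{1,2}(\X)$, which works uniformly for all $q$ (the polynomial $X_{1,2}^2 - X_{1,2}$ always has exactly two roots in $\F_q$), whereas you split into $det_{12,12}(\X)+1$ for $q$ odd and $det_{12,12}(\X)+X_{1,2}$ for $q$ even. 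Your even-$q$ choice coincides with the paper's, and your observation that $det_{12,12}(\X)+1$ fails in characteristic $2$ is correct; but the parity split is unnecessary, since the paper's choice already handles both cases at once.
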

\begin{proof}
Assume $\ell \geq 2$. Let $f \in \Fl$ with a maximal minor of size $k$ and spread $s$. 
By Corollary \ref{cor:weight bound}, $$wt(f) \geq q^\frac{\ell^2+\ell-k^2-k}{2}(w_{k,k}).$$ By Proposition \ref{prop: mindist} $$w_{k,k}\geq q^{\frac{k^2+k}{2}}-q^{\frac{k^2+k}{2}-1}-q^{\frac{k^2+k}{2}-2}.$$ Together  $$wt(f) \geq q^\frac{\ell^2+\ell-k^2-k}{2}(q^{\frac{k^2+k}{2}}-q^{\frac{k^2+k}{2}-1}-q^{\frac{k^2+k}{2}-2}) = q^{\frac{\ell^2+\ell}{2}}-q^{\frac{\ell^2+\ell}{2}-1}-q^{\frac{\ell^2+\ell}{2}-2}.$$ 
This implies $$d(\CH) \geq q^{\frac{\ell^2+\ell}{2}}-q^{\frac{\ell^2+\ell}{2}-1}-q^{\frac{\ell^2+\ell}{2}-2}.$$ Note that for $\ell \geq 2$, let $$f = det_{12,12}(\X)+det_{1,2}(\X).$$ Then $$wt(f) = q^{\frac{\ell^2+\ell}{2}}-q^{\frac{\ell^2+\ell}{2}-1}-q^{\frac{\ell^2+\ell}{2}-2}.$$ Therefore $$d(\CH)  = q^{\frac{\ell^2+\ell}{2}}-q^{\frac{\ell^2+\ell}{2}-1}-q^{\frac{\ell^2+\ell}{2}-2}.$$ 
\end{proof}

Below we show some parameters of the Affine Symmetric Grassmann codes for $\ell =2$ and $\ell = 3.$

\begin{tabular}{cc}
     
    \begin{tabular}{|c|c|c|c|}
        \hline
        q & n & k & $d(C^{\mathbb{S}}(2))$\\
        \hline
        2 & 8 & 5 & 2\\
        \hline
        3 & 27 & 5 & 15\\
        \hline
        4 & 64 & 5 & 95\\
        \hline
        5 & 125 & 5 & 287\\
        \hline
        7 & 343 & 5 & 440\\
        \hline
        8 & 512 & 5 & 639\\
        \hline
        9 & 729 & 5 & 1199\\
        \hline
    \end{tabular}
    & 
    \begin{tabular}{|c|c|c|c|}
        \hline
        q & n & k & $d(C^{\mathbb{S}}(3))$\\
        \hline
        2 & 64 & 14 & 16\\
        \hline
        3 & 729 & 14 & 405\\
        \hline
        4 & 4,096 & 14 & 2,816\\
        \hline
        5 & 15,625 & 14 & 11,875\\
        \hline
        7 & 117,649 & 14 & 98,441\\
        \hline
        8 & 262,144 & 14 & 225,280\\
        \hline
        9 & 531,441 & 14 & 465,831\\
        \hline 
    \end{tabular} \\
\end{tabular}
\\

\section{The dual Code $\CH^\perp$}
In this section we study some properties of the dual code $\CH^\perp$ and its relation to the dual affine Grassmann code $\CA^\perp$. In particular we prove that the minimum distance codewords of the dual codes are similar for both the dual of affine Grassmann codes and dual affine symplectic Grassmann codes. We begin by defining the dual code. We recall the following:
\begin{definition}
Let $C$ be an $[n,k]$ linear code, then its dual code $C^\perp$ is its orthogonal complement as a vector space. That is, $C^\perp$ is an $[n, n-k]$ code such that:

$$C^\perp := \{ h \in \F_q^n \ | \ \sum\limits_{i=1}^n c_ih_i = 0 \ \forall c \in C \}$$  
\end{definition}

Let $\CH^\perp$ denote the Dual of the affine symplectic Grassmann code. Recall that $$\CH^\perp := \{ h \in \F_{q^2}^{\HM} \ | \ \sum\limits_{S\in \HM} h_S f(S) = 0 \ \forall f \in \Fl  \} $$

Then we know the following: $\CH^\perp$ is an $[q^{\frac{\ell^2-\ell}{2}}, q^{\frac{\ell^2-\ell}{2}}-C(\ell+1)]$ code.

In fact we know more about the minimum distance of $\CA^\perp$.
\begin{proposition}\cite[Theorem 17]{BGT2}

Let $\ell \geq 2$. The minimum distance $d(\CA^\perp)$ of the code $\CA^\perp$ satisfies:
$$
d(\CA^\perp)=  \begin{cases} 
      3 & q>2 \\
      4 & q = 2 
   \end{cases}
$$
\end{proposition}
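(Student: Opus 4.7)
The plan is to establish matching upper and lower bounds on $d(\CA^\perp)$, exploiting the fact that a weight-$w$ dual codeword corresponds to a set of $w$ matrices $M_1,\dots,M_w$ with nonzero scalars $c_1,\dots,c_w \in \F_q^*$ such that $\sum_i c_i f(M_i) = 0$ for every $f$ in the span of minors. The key simplifying observation is that any $t \times t$ minor with $t \geq 2$ vanishes on matrices of rank $\leq 1$, so if we choose $M_1,\dots,M_w$ to all have rank $\leq 1$, we only need to verify the relation on the constant $1$ and on the $\ell^2$ linear minors $X_{i,j}$.

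For the upper bound when $q > 2$, I would exhibit a weight-$3$ dual codeword by taking $M_1 = 0$, $M_2 = e_1 e_1^T$, and $M_3 = \lambda M_2$ for some $\lambda \in \F_q \setminus \{0,1\}$ (such $\lambda$ exists exactly because $q > 2$). Since all three matrices have rank $\leq 1$, it only remains to find nonzero coefficients satisfying $c_1+c_2+c_3=0$ (from the constant) and $c_2 + \lambda c_3 = 0$ (from the entry $X_{1,1}$; all other entries give $0=0$). Setting $c_3 = 1$, $c_2 = -\lambda$, $c_1 = \lambda - 1$ works and all three are nonzero. For $q = 2$, I would instead use the four rank-$\leq 1$ matrices $M_1 = 0$, $M_2 = e_1 e_1^T$, $M_3 = e_1 e_2^T$, $M_4 = e_1(e_1+e_2)^T$ with coefficients all equal to $1$; then $M_2+M_3+M_4 = 0$ in characteristic $2$ ensures the linear minors vanish, while the constant yields $1+1+1+1 = 0$.

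For the lower bound, the constant $1$ lies in the span of minors, so every dual codeword $h$ satisfies $\sum_i h_i = 0$; hence weight $1$ is impossible. A weight-$2$ dependency on distinct matrices $M_i \neq M_j$ with $c_i + c_j = 0$ would force $(M_i)_{k,l} = (M_j)_{k,l}$ for every $(k,l)$ through the linear minors $X_{k,l}$, contradicting $M_i \neq M_j$. Thus $d(\CA^\perp) \geq 3$ unconditionally. When $q = 2$, the only possible nonzero coefficients are $1$, so a weight-$3$ dependency would require $c_1 = c_2 = c_3 = 1$; but then the constant functional gives $1+1+1 = 1 \neq 0$, contradiction. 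Hence $d(\CA^\perp) \geq 4$ in this case.

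The main obstacle is producing the explicit three-matrix (respectively four-matrix) dependency and certifying that it is compatible with \emph{every} minor simultaneously; the trick that makes it painless is to confine the support to rank $\leq 1$ matrices, which automatically kills all higher-degree minors and reduces the verification to a first-order Reed--Muller-style linear relation over affine space. Once this reduction is in place, the entire statement is a short computation.
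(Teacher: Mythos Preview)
Your argument is correct. Note that the paper does not give its own proof of this proposition: it is quoted verbatim as \cite[Theorem 17]{BGT2} and used as a black box, so there is no in-paper proof to compare against. That said, your construction is exactly the one the surrounding literature uses; the weight-$3$ support $\{0,\,E_{1,1},\,\alpha E_{1,1}\}$ you write down is precisely the canonical form $\{0,\,I_1,\,\alpha I_1\}$ appearing in the next cited proposition from \cite{AfGrass}, and your $q=2$ example matches type (i) in the classification of weight-$4$ supports quoted just after. The reduction you isolate---restricting the support to rank-$\leq 1$ matrices so that only the constant and the $1\times 1$ minors need to be checked---is the standard device, and your lower-bound argument (constant kills weight $1$, linear minors kill weight $2$, parity of the constant kills weight $3$ over $\F_2$) is the expected one.
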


In subsequent work, one of the named authors along with P. Beelen characterized the minimum distance codewords of $\CA^\perp$.

\begin{definition}
Let $f \in \CH^\perp$ $supp(f) = \{S\in \mathbb{S} | c_{S} \neq 0 \}$
\end{definition}

\begin{definition}
Let $k\leq \ell$, we denote $I_k$ as the matrix with an $k\times k$ identity block and the remaining entries are 0. That is $a_{i,i} = 1$ if $1\leq i\leq k$.
\end{definition}

\begin{definition}
We denote $E_{i,j}$ to be the $\ell \times \ell$ matrix which all entries equal 0 except the $(i,j)-th$ entry which equals 1.
\end{definition}

\begin{proposition}
\cite[Theorem 8]{AfGrass}
Let $\ell \geq 2$, let $q > 2$ and let $c\in \CA^\perp$ be a weight 3 codeword with support $supp(c) = \{N_1, N_2, N_3\}$. Then there exists an automorphism such that we may map $c \rightarrow c'$ where $supp(c') = \{ 0, I_1, \alpha I_1\}$ and $\alpha = (\frac{c_{N_2}}{c_{N_1}+c_{N_2}})$.\\
Conversely, given $\alpha \in \F_{q}\setminus \{0,1\}$, there exists a codeword $c\in \CA^\perp$ with $supp(c) = \{ 0, I_1, \alpha I_1\}$. Its nonzero coordinates satisfy
$$ c_{I_1} = \frac{-\alpha}{\alpha -1}c_0  \makebox{ and } c_{\alpha I_1} = \frac{1}{\alpha -1}c_0$$
\end{proposition}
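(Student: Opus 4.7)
The plan is to split the proposition into its two implications: a normalization argument for the forward direction using $\mathrm{Aut}(\CA^\perp)$, and a direct verification for the converse. For the forward direction, I would first apply the translation $\X \mapsto \X + N_1 \in \mathrm{Aut}(\CA)$ (which also permutes the dual code) to reduce to the case $N_1 = 0$; the translated codeword has support $\{0, N_2, N_3\}$ with $N_2, N_3$ nonzero, and the value originally at $N_1$ sits at position $0$. Pairing $c$ against the constant function $f = 1$ yields $c_{N_1} + c_{N_2} + c_{N_3} = 0$, and pairing against each single-entry minor $X_{i,j}$ yields the matrix identity $c_{N_2} N_2 + c_{N_3} N_3 = 0$. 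Hence $N_3 = \beta N_2$ for $\beta := -c_{N_2}/c_{N_3} \neq 0$, so the two nonzero support matrices are scalar multiples of a single nonzero matrix.

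The central step is to prove $\mathrm{rank}(N_2) = 1$. If $\mathrm{rank}(N_2) \geq 2$, pick any $2 \times 2$ minor $\det_{I,J}(\X)$ with $\det_{I,J}(N_2) \neq 0$; pairing $c$ against it gives $(c_{N_2} + \beta^2 c_{N_3})\det_{I,J}(N_2) = 0$, so $c_{N_2} + \beta^2 c_{N_3} = 0$. Combined with the entry relation $c_{N_2} + \beta c_{N_3} = 0$, this forces $\beta(\beta - 1) = 0$, and since $\beta \neq 0$ we obtain $\beta = 1$, i.e., $N_2 = N_3$, a contradiction. So $N_2 = u v^T$ for some nonzero $u, v \in \F_q^\ell$; choosing $A, B \in GL_\ell(\F_q)$ with $A u = e_1$ and $B^T v = e_1$, the automorphism $\X \mapsto A \X B$ sends $N_2 \mapsto I_1$ and $N_3 \mapsto \beta I_1$. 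Setting $\alpha := \beta$, the two relations combine to give $\alpha = c_{N_2}/(c_{N_1} + c_{N_2})$, and the nonvanishing of all three coefficients forces $\alpha \in \F_q \setminus \{0, 1\}$.

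For the converse, fix $\alpha \in \F_q \setminus \{0, 1\}$, choose any $c_0 \neq 0$, and define $c_{I_1}$ and $c_{\alpha I_1}$ by the stated formulas. Since $I_1$ has rank $1$, every $k \times k$ minor with $k \geq 2$ vanishes on $t I_1$, so $f(t I_1) = f_{\emptyset, \emptyset} + t f_{\{1\},\{1\}}$ for any combination of minors $f$. The pairing $\sum_S c_S f(S)$ therefore collapses to $(c_0 + c_{I_1} + c_{\alpha I_1}) f_{\emptyset, \emptyset} + (c_{I_1} + \alpha c_{\alpha I_1}) f_{\{1\},\{1\}}$, and both bracketed coefficients vanish by the chosen values, establishing $c \in \CA^\perp$. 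The main obstacle is the rank-reduction step, where one must verify that the $1 \times 1$ and $2 \times 2$ minor relations together force $\beta = 1$; the hypothesis $q > 2$ enters only to ensure that $\F_q \setminus \{0, 1\}$ is nonempty, so that such weight-$3$ dual codewords exist at all.
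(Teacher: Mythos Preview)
The paper does not supply its own proof of this proposition: it is quoted verbatim with the citation \cite[Theorem 8]{AfGrass} and used as a black box in Section~9. There is therefore nothing in the present paper to compare your argument against.

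Your proof is correct. The translation by $N_1$ followed by pairing $c$ against the constant function, the $1\times 1$ minors $X_{i,j}$, and a nonvanishing $2\times 2$ minor is exactly the right mechanism: the first pairing gives $c_{N_1}+c_{N_2}+c_{N_3}=0$, the second forces $N_3'=\beta N_2'$, and the third yields $c_{N_2}+\beta^2 c_{N_3}=0$, which together with $c_{N_2}+\beta c_{N_3}=0$ forces $\beta=1$ and hence the contradiction if $\mathrm{rank}(N_2')\ge 2$. The rank-one normalization via $\X\mapsto A\X B$ with $Au=e_1$, $B^{T}v=e_1$ is valid since both are listed automorphisms of $\CA$, and your identification $\alpha=\beta=-c_{N_2}/c_{N_3}=c_{N_2}/(c_{N_1}+c_{N_2})$ is a clean consequence of $c_{N_3}=-(c_{N_1}+c_{N_2})$. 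The converse check that $c_0+c_{I_1}+c_{\alpha I_1}=0$ and $c_{I_1}+\alpha c_{\alpha I_1}=0$ under the stated formulas is a two-line computation, and the observation that $f(tI_1)=f_{\emptyset,\emptyset}+t f_{\{1\},\{1\}}$ because $I_1$ has rank one is the correct reason all higher minors drop out.
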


\begin{proposition}
\cite[Theorem 15]{AfGrass}
Let $\ell \geq 2$. Let $q= 2$ and let $c$ be a codeword of $\CA^\perp$ of weight $4$. Suppose that $supp(c) = \{M_1, M_2, M_3, M_4\}$. Then there exists an automorphism such that we may map $c \rightarrow c'$ where $supp(c')$ is one of the following:
\begin{enumerate}
    \item[i]  $\{ 0, E_{1,1}, E_{1,2}, I_1+E_{1,2}\}$
    \item[ii] $\{ 0, E_{1,1}, E_{2,1}, I_1+E_{2,1}\}$
    \item[iii] $\{ 0, E_{1,1}, E_{1,2}+E_{2,1}, E_{1,1}+E_{1,2}+E_{2,1}\}$
\end{enumerate}

\end{proposition}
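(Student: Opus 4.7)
The plan is to normalize the support using the automorphisms of $\CA$ (translation, two-sided multiplication by invertible matrices, and transposition), exploiting heavily the fact that over $\F_2$ every nonzero coefficient equals $1$.

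I would begin by translating via $\X\mapsto\X-M_1$ to arrange $M_1=0$, so $supp(c)=\{0,M_2,M_3,M_4\}$. Since $q=2$, all four coefficients $c_{M_i}$ equal $1$. Orthogonality of $c$ with each entry function $X_{i,j}$ (a $1\times 1$ minor) immediately forces $M_2+M_3+M_4=0$, i.e.\ $M_4=M_2+M_3$; orthogonality with a $t\times t$ minor for $t\geq 2$ then becomes
\[
\det\nolimits_{I,J}(M_2)+\det\nolimits_{I,J}(M_3)+\det\nolimits_{I,J}(M_2+M_3)=0
\]
for every choice of $I,J\subseteq[\ell]$ with $\#I=\#J=t$. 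Equivalently, each minor function $\det_{I,J}$ restricts to an additive homomorphism on the $4$-element subspace $V=\{0,M_2,M_3,M_4\}\subseteq\mathbb{M}^{\ell\times\ell}(\F_2)$.

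The crucial step is to show $V$ contains a rank-$1$ matrix; granted this, the rest is a direct canonicalization. Let $r=\min_i\mathrm{rank}(M_i)$ and, after relabelling, assume $\mathrm{rank}(M_2)=r\geq 1$. I would argue by contradiction that $r=1$: applying $\X\mapsto A\X B$ for suitable $A,B\in GL_\ell(\F_2)$ brings $M_2$ to $I_r\oplus 0_{\ell-r}$, and a direct evaluation of the $2\times 2$ orthogonality relations yields $(M_3)_{i,i}=(M_3)_{j,j}$ for $i,j\in[r]$, $(M_3)_{j,j}=0$ for $j\notin[r]$, and off-diagonal entries of $M_3$ vanish whenever exactly one of their indices lies outside $[r]$. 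Carrying out the enumeration of admissible $M_3$ (explicitly for $r\leq 3$ and propagated to larger $\ell$ via the block structure) shows that in every surviving case either $M_3\in\{0,M_2\}$, contradicting $\#supp(c)=4$, or one of $M_3, M_4$ has rank strictly less than $r$, contradicting the minimality of $r$. Hence $r=1$.

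With $r=1$ I bring $M_2$ to $E_{1,1}$ by a further $\X\mapsto A\X B$. The orthogonality relation at $I=\{1,i\}$, $J=\{1,j\}$ with $i,j\geq 2$ simplifies immediately to $(M_3)_{i,j}=0$, so
\[
M_3=\alpha E_{1,1}+\sum_{j\geq 2}\beta_j E_{1,j}+\sum_{i\geq 2}\gamma_i E_{i,1}.
\]
The stabilizer of $E_{1,1}$ inside $GL_\ell(\F_2)\times GL_\ell(\F_2)$ contains block-unipotent pairs that act independently on the row vector $(\beta_j)_{j\geq 2}$ by column operations on columns $\geq 2$ and on the column vector $(\gamma_i)_{i\geq 2}$ by row operations on rows $\geq 2$, and furthermore allows $\alpha$ to be freely shifted whenever $\beta\neq 0$ or $\gamma\neq 0$. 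Thus I may reduce to $\beta\in\{0,e_1\}$, $\gamma\in\{0,e_1\}$, and $\alpha=0$ whenever at least one of $\beta,\gamma$ is nonzero. Enumerating the three remaining non-degenerate possibilities $(\beta,\gamma)=(e_1,0),(0,e_1),(e_1,e_1)$ yields precisely supports (i), (ii), and (iii). The main obstacle is the rank-reduction step: verifying $r=1$ requires a careful exhaustive analysis of the orthogonality constraints for each $r\geq 2$, which is the most tedious portion of the argument, while the final canonicalization is essentially the action of a product of general linear groups on a pair of vectors.
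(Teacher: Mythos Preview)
The paper does not prove this proposition; it is quoted without proof from \cite{AfGrass}. So there is no in-paper argument to compare against, and I can only assess your sketch on its own terms.

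Your approach is sound and will go through. The only soft spot is the rank-reduction step: from the \emph{principal} $2\times 2$ relations you extract that $(M_3)_{i,i}$ is constant on $[r]$ and zero off $[r]$, and that entries with exactly one index outside $[r]$ vanish, and you then appeal to an enumeration for $r\le 3$ ``propagated to larger $\ell$ via the block structure.'' This leaves $r>3$ unaddressed. The fix is to also use the \emph{non-principal} $2\times 2$ relations $I=\{i,j\}$, $J=\{i,k\}$ with $i,j,k\in[r]$ distinct: a direct computation on $I_r$, $M_3$, and $I_r+M_3$ collapses to $(M_3)_{j,k}=0$, so for every $r\ge 3$ (after swapping $M_3\leftrightarrow M_4$ if necessary to make the common diagonal value zero) one gets $M_3=0$, a contradiction. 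The same relation with $i\in[r]$ and $j,k\notin[r]$ forces the outside-$[r]$ block of $M_3$ to vanish as well, which you should record to justify the block reduction you invoke. Only $r\le 2$ then survives; your $r=2$ analysis correctly produces a rank-$1$ element among $M_3,M_4$, and the final canonicalization via the stabilizer of $E_{1,1}$ in $GL_\ell(\F_2)\times GL_\ell(\F_2)$ is correct as written.
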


\begin{definition}
Let $C$ be a linear code and $T$ be a set of coordinates in $C$. We define the puncturing of $C$ on $T$ as the resulting linear code $C^T$ from deleting all coordinates in $T$ in each codeword of $C$.
\end{definition}
\begin{definition}
Let $C$ be a linear code, $T$ be a set of coordinates in $C$ and $C(T)$ the set of codewords which are 0 on $T$. We define the shortening of $C$ as the puncturing of $C(T)$ on $T$.
\end{definition}
We remark that these code operations are duals of each other. That is, the dual code of  puncturing $C$ is shortening $C^\perp$.

With the fact that $\CH$ is a puncturing of $\CA$ on the positions outside of $\HM$, and that $\CH^\perp$ is the linear code obtained by shortening $\CA^\perp$ at the positions outside of $\HM$ we shall determine the minimum distance of $\CH^\perp$. 


\begin{theorem}
Let $\ell \geq 2$. The minimum distance $d(\CH^\perp)$ of the code $\CH^\perp$ satisfies:
$$
d(\CH^\perp)=  \begin{cases} 
      3 & q>2 \\
      4 & q = 2 
   \end{cases}
$$
\end{theorem}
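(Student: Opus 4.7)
The plan is to exploit the duality between puncturing and shortening. Since $\CH$ is obtained by puncturing $\CA$ at all positions indexed by non-symmetric matrices in $\mathbb{M}^{\ell \times \ell}(\F_q) \setminus \HM$, its dual $\CH^\perp$ is the code obtained by shortening $\CA^\perp$ at those same positions. Concretely, a vector $h \in \F_q^{\HM}$ lies in $\CH^\perp$ if and only if the extension-by-zero $\tilde h \in \F_q^{\mathbb{M}^{\ell\times\ell}(\F_q)}$ (placing $0$'s on the non-symmetric coordinates) lies in $\CA^\perp$. In particular, the weights of $h$ and $\tilde h$ agree.

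From this identification I immediately get the lower bound: every nonzero codeword of $\CH^\perp$ lifts to a nonzero codeword of $\CA^\perp$ of the same Hamming weight, so $d(\CH^\perp) \geq d(\CA^\perp)$, which is $3$ for $q>2$ and $4$ for $q=2$ by the recalled results of Beelen--Ghorpade--H\o{}holdt.

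For the matching upper bound I would construct explicit minimum-weight codewords of $\CA^\perp$ whose support happens to lie entirely in $\HM$. When $q>2$, pick any $\alpha \in \F_q \setminus \{0,1\}$ and take the weight-$3$ codeword of $\CA^\perp$ supported on $\{0, I_1, \alpha I_1\}$ guaranteed by the cited Theorem 8 of \cite{AfGrass}; since $0$, $I_1 = E_{1,1}$, and $\alpha I_1$ are diagonal, they are all symmetric, so this codeword restricts to a weight-$3$ codeword of $\CH^\perp$. When $q=2$, configuration (iii) of the recalled Theorem 15 of \cite{AfGrass} gives a weight-$4$ codeword of $\CA^\perp$ supported on $\{0, E_{1,1}, E_{1,2}+E_{2,1}, E_{1,1}+E_{1,2}+E_{2,1}\}$, and each of these four matrices is symmetric, so this codeword also restricts to a weight-$4$ element of $\CH^\perp$.

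I expect no serious obstacle; the only thing to verify carefully is that the minimum-weight codewords of $\CA^\perp$ described in the cited theorems can indeed be chosen with symmetric support, which is transparent from the explicit descriptions above. Combining the two bounds yields $d(\CH^\perp)=3$ for $q>2$ and $d(\CH^\perp)=4$ for $q=2$.
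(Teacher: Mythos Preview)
Your proposal is correct and follows essentially the same approach as the paper: both use the puncturing/shortening duality to get the lower bound $d(\CH^\perp)\geq d(\CA^\perp)$, and both exhibit the same explicit minimum-weight codewords of $\CA^\perp$ (support $\{0,E_{1,1},\alpha E_{1,1}\}$ for $q>2$ and configuration (iii) for $q=2$) whose supports consist entirely of symmetric matrices to obtain the matching upper bound. Your explanation of the shortening identification via extension-by-zero is in fact slightly more explicit than the paper's.
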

\begin{proof}
Recall we are puncturing the code $\CA$ at the matrices in $\mathbb{M}^{\ell \times \ell}(\F_{q^2}) \setminus \HM$ to obtain $\CH$. This implies we are shortening the code $\CA^\perp$ to obtain $\CH^\perp$. By \cite{AfGrass}, this implies we have a lower bound$$
d(\CH^\perp)\geq  \begin{cases} 
      3 & q>2 \\
      4 & q = 2 
   \end{cases}.
$$\\

If $q>2$, there exists $c\in \CA^\perp$ such that $supp(c) = \{ 0, E_{1,1}, \alpha E_{1,1} \}$ and $\alpha \in \F_q$. Because all 3 matrices are symmetric, this implies that when shortening we have a codeword $c' \in \CH^\perp$ such that $supp(c) = supp(c')$. Therefore, for $q>2$, $d(\CH^\perp)= 3$.

If $q=2$, there exists $c\in \CA^\perp$ such that $supp(c) = \{ 0, E_{1,1}, E_{1,2}+E_{2,1}, E_{1,1}+E_{1,2}+E_{2,1}\}$. Because all 4 matrices are symmetric, this implies that when shortening we have a codeword $c' \in \CH^\perp$ such that $supp(c) = supp(c')$. Therefore, for $q=2$, $d(\CH^\perp)= 4$.

\end{proof}
As in the case for affine Grassmann codes, we characterize all minimum distance codewords of $\CH^\perp$. Both cases for $q > 2$ and $q = 2$ can be done simply by considering the shortening operation. As there are minimum distance codewords of $\CA^\perp$ over $q$ whose support is entirely of symmetric matrices. It is these codewords which are the minimum distance codewords of $\CH^\perp$.

\section{Conclusion}

In this manuscript we have introduced the affine symplectic Grassmann codes. These are linear codes associated to the affine part of the polar symplectic Grassmannian, defined in the same way affine Grassmann codes are defined from the Grassmannian. As might be expected, the affine symplectic Grassmann code is very similar to the affine Grassmann code and the affine Hermitian Grassmann code \cite{gonzalez2021affine}, except it is defined over symmetric matrices. In fact, the minimum weight codewords of the affine symplectic Grassmann code are similar to the minimum weight codewords of the affine Hermitian Grassmann code. Moreover, all three have similar minimum weight codewords in their dual codes
and their automorphism groups are very similar.

\section*{Acknowledgments}
The authors are very thankful to Sudhir Ghorpade for his insightful comments on the automorphism group and his suggestions to improve this manuscript. This research is supported by NSF-DMS REU 1852171: REU Site: Combinatorics, Probability, and Algebraic Coding Theory and NSF-HRD 2008186: Louis Stokes STEM Pathways and Research Alliance: Puerto Rico-LSAMP - Expanding Opportunities for Underrepresented College Students (2020-2025)

\nocite{Sticht}
\nocite{BGT}
\nocite{AfGrass}
\nocite{gonzalez2021affine}

\bibliographystyle{ieeetr}
\bibliography{references}



\end{document}